\RequirePackage{tikz}






\documentclass[pdflatex, sn-mathphys]{sn-jnl}



\jyear{2021}%

\usepackage{amsmath}
\usepackage{amssymb}
\usepackage{bm}
\usepackage{caption}
\usepackage{hyperref}
\usepackage[mathlines]{lineno}
\usepackage{enumitem}
\usepackage{aliascnt}
\usetikzlibrary{automata,positioning}
\usepackage{etoolbox}

\newcommand{\toappendix}[1]{#1}

\theoremstyle{thmstyleone}%
\newtheorem{theorem}{Theorem}

\newaliascnt{lemma}{theorem}
\newtheorem{lemma}[lemma]{Lemma}%
\aliascntresetthe{lemma}

\newaliascnt{corollary}{theorem}
\newtheorem{corollary}[corollary]{Corollary}%
\aliascntresetthe{corollary}

\newaliascnt{observation}{theorem}
\newtheorem{observation}[observation]{Observation}%
\aliascntresetthe{observation}

\theoremstyle{thmstylethree}%
\newtheorem{definition}{Definition}%

\raggedbottom

\DeclareMathOperator{\ecc}{ecc}
\DeclareMathOperator{\mw}{mw}
\DeclareMathOperator{\treewidth}{tw}
\renewcommand{\|}{\vert}

\newcommand{\problemdef}[3]{
\setlength{\fboxsep}{3pt}
\framebox[\textwidth]{
\begin{minipage}{0.95\textwidth}
\textsc{#1}\\
\textbf{Input:} #2\\
\textbf{Question:} #3
\end{minipage}
}
}

\newcommand*\linenomathpatch[1]{%
  \cspreto{#1}{\linenomath}%
  \cspreto{#1*}{\linenomath}%
  \csappto{end#1}{\endlinenomath}%
  \csappto{end#1*}{\endlinenomath}%
}

\linenomathpatch{equation}
\linenomathpatch{gather}
\linenomathpatch{multline}
\linenomathpatch{align}
\linenomathpatch{alignat}
\linenomathpatch{flalign}

\begin{document}
\title[MESP Problem with Respect to Structural Parameters]{Minimum Eccentricity Shortest Path Problem with Respect to Structural Parameters}

\author*[1]{\fnm{Martin} \sur{Ku\v{c}era} \email{martin@mkucera.cz}}
\author[1]{\fnm{Ond\v{r}ej} \sur{Such\'{y}} \email{ondrej.suchy@fit.cvut.cz}}

\affil[1]{\orgdiv{Department of Theoretical Computer Science}, \orgname{Faculty of Information Technology, Czech Technical University in Prague}, \orgaddress{\street{Thákurova 9}, \city{Prague}, \postcode{160 00}, \country{Czech Republic}}}

\abstract{
The \textsc{Minimum Eccentricity Shortest Path Problem} consists in finding a shortest path with minimum eccentricity in a given undirected graph. The problem is known to be NP-complete and W[2]-hard with respect to the desired eccentricity. We present fpt-algorithms for the problem parameterized by the modular width, distance to cluster graph, the combination of treewidth with the desired eccentricity, and maximum leaf number.
}

\keywords{graph theory, minimum eccentricity shortest path, parameterized complexity, fixed-parameter tractable}

\maketitle

\section{Introduction}

The \textsc{Minimum Eccentricity Shortest Path} (MESP) problem asks, given an undirected graph and an integer $k$, to find a shortest path with eccentricity at most $k$---a shortest path (between its endpoints) such that the distance from every vertex in the graph to the nearest vertex on the path is at most~$k$. The shortest path achieving the minimum $k$ may be viewed as the ``most accessible'', and as such, may find applications in communication networks, transportation planning, water resource management, and fluid transportation~\cite{DraganL15}. Some large graphs constructed from reads similarity networks of genomic data appear to have very long shortest paths with low eccentricity~\cite{VolkelBHLV16}. Furthermore, MESP can be used to obtain the best to date approximation for a minimum distortion embedding of a graph into the line~\cite{DraganL15} which has applications in computer vision~\cite{TenenbaumSL00}, computational biology and chemistry~\cite{Indyk01,IndykM04}.
The eccentricity of MESP is closely tied to the notion of laminarity (minimum eccentricity of the graph's diameter)~\cite{BirmeleMP16}.

MESP was introduced by Dragan and Leitert~\cite{DraganL15} who showed that it is NP-hard on general graphs and constructed a slice-wise polynomial~(XP) algorithm, which finds a shortest path with eccentricity at most~$k$ in a graph with $n$ vertices and $m$ edges in $\mathcal{O}(n^{2k+2}m)$ time. They also presented a linear-time algorithm for trees. Additionally, they developed a 2-approximation, a 3-approximation, and an 8-approximation algorithm that runs in $\mathcal{O}(n^3)$ time, $\mathcal{O}(nm)$ time, and $\mathcal{O}(m)$ time, respectively. Birmel\'{e} et al.~\cite{BirmeleMP16} further improved the 8-approximation to a 3-approximation, which still runs in linear time. Dragan and Leitert~\cite{DraganL16} showed that MESP can be solved in linear time for distance-hereditary graphs (generalizing the previous result for trees) and in polynomial time for chordal and dually chordal graphs. Later, they proved~\cite{DraganL17} that the problem is NP-hard even for bipartite subcubic planar graphs, and W[2]-hard with respect to the desired eccentricity for general graphs. Furthermore, they showed that in a graph with a shortest path of eccentricity~$k$, a minimum $k$-dominating set can be found in $n^{\mathcal{O}(k)}$ time. A related problem of finding shortest isometric cycle was studied by Birmel\'{e} et al.~\cite{BirmeleMPV20}. Birmel\'{e} et al.~\cite{BirmeleMPV17} studied a generalization of MESP, where the task is to decompose a graph into subgraphs with bounded shortest-path eccentricity, the hub-laminar decomposition. 

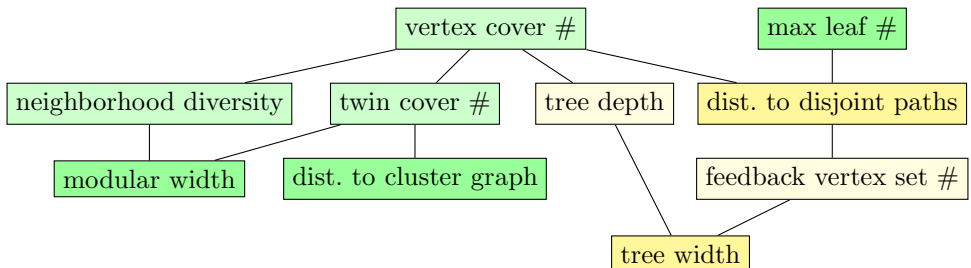
\begin{figure}[t]
\centering
\begin{tikzpicture}[every node/.style={rectangle, draw}]
	\node[fill=green!20] (vc)   at (0,3) {vertex cover \#};
	\node[fill=green!40] (mln)  at (4.5,3) {max leaf \#};
	\node[fill=green!20] (nd)   at (-4.5,2) {neighborhood diversity};
	\node[fill=green!20] (tc)   at (-1,2) {twin cover \#};
	\node[fill=yellow!15] (td)   at (1.5,2) {tree depth};
	\node[fill=green!40] (dtc)  at (-1,1) {dist. to cluster graph};
	\node[fill=yellow!50] (dtdp) at (4.5,2) {dist. to disjoint paths};
	\node[fill=yellow!15] (fvs)  at (4.5,1) {feedback vertex set \#};
	\node[fill=green!40] (mw)   at (-4.5,1) {modular width};
    \node[fill=yellow!50] (tw)   at (2.5,0) {tree width};
	\draw (vc) -- (tc);
	\draw (vc) -- (dtdp);
	\draw (vc) -- (td);
	\draw (tc) -- (dtc);
	\draw (dtdp) -- (fvs);
	\draw (mln) -- (dtdp);
	\draw (nd) -- (mw);
	\draw (vc) -- (nd);
	\draw (tc) -- (mw);
    \draw (tw) -- (td);
    \draw (tw) -- (fvs);
\end{tikzpicture}
\caption{
    Hasse diagram of the boundedness relation between structural parameters explored in this paper and related ones.
    An edge between a parameter $\bm{A}$ above and a parameter $\bm{B}$ below means that whenever $\bm{A}$ is bounded for some graph class, then so is $\bm{B}$.
    The parameters for which MESP is FPT are in green (dark if the result is described in this paper, light if implied by those described).
    Yellow represents a parameter for which MESP is FPT in combination with the desired eccentricity (again, dark if the result is described in this paper, light if implied by those described).
    The figure is inspired by \cite{SorgeW16}.
}
\label{fig:overview}
\end{figure}

\paragraph{Our contribution}
We continue the research direction of MESP in structured graphs~\cite{DraganL16}, focusing on parameters which can measure the amount of structure present in the graph.
We provide fpt-algorithms for the problem with respect to the modular width, distance to cluster graph, distance to disjoint paths combined with the desired eccentricity, treewidth combined with the desired eccentricity, and maximum leaf number (see \autoref{fig:overview} for an overview of our results).

\paragraph{Outline}
In \autoref{chap:preliminaries}, we provide necessary notations and formal definitions. In \autoref{chap:algorithms}, we describe our parameterized algorithms.
In \autoref{chap:conclusion}, we discuss possible future work.


\section{Preliminaries}
\label{chap:preliminaries}
\toappendix{
}
We consider finite connected unweighted undirected simple loopless graphs.

We refer to Diestel~\cite{Diestel2016} for graph notions.

For a~graph $G = (V, E)$ we denote $n = \|V\|$ and $m = \|E\|$.
We denote~$G[S]$ the induced subgraph of $G$ on vertices $S \subseteq V$ and $G \setminus S = G[V \setminus S]$.

We denote an ordered sequence of elements $\bm{s} = (s_1, \dots, s_{\|\bm{s}\|})$. For two sequences $\bm{s} = (s_1, \dots, s_{\|\bm{s}\|}), \bm{t} = (t_1, \dots, t_{\|\bm{t}\|})$ we denote their concatenation
\[
\bm{s} \frown \bm{t} = (s_1, \dots, s_{\|\bm{s}\|}, t_1, \dots, t_{\|\bm{t}\|}).
\]

A \emph{path} is a sequence of vertices where every two consecutive vertices are adjacent.
The first and last vertices of the path are called its \emph{endpoints}. A \emph{path between} $u$ and $v$ or \emph{$u$-$v$-path} is a path with endpoints $u$ and~$v$.
The length of a path $P$ is the number of edges in it, i.e., $\|P\| - 1$.
A $u$-$v$-path is \emph{shortest} if it has the least length among all $u$-$v$-paths.
The distance $d_G(u, v)$ between two vertices $u, v \in V$ is the length of the shortest $u$-$v$-path.

The distance between a vertex $u \in V$ and a set of vertices $S \subseteq V$ is $d_G(u, S) = \min_{s \in S}d_G(u, s)$. 
The eccentricity of a set $S \subseteq V$ is $\ecc_G(S) = \max_{u \in V}d_G(u, S)$.
For a path $P$, we use $P$ instead of $V(P)$ for its set of vertices, if there is no risk of confusion,
e.g., $d_G(u, P) = d_G(u, V(P))$ and $\ecc_G(P) = \ecc_G(V(P))$.

For vertex $u \in V$ we denote $N_G(u)=\{v \mid \{u,v\} \in E\}$ the open neighborhood, $N_G[u]=N_G(u) \cup \{u\}$ the closed neighborhood, and $N_G^k[u] = \{v \in V \mid d_G(u, v) \leq k\}$ the closed $k$-neighborhood of $u$.

In this paper, we focus on the following problem.\\
\problemdef{Minimum Eccentricity Shortest Path Problem (MESP)}
{An undirected graph $G$, desired eccentricity $k \in \mathbb{N}$.}
{Is there a path $P$ in $G$ which is a shortest path between its endpoints with $\ecc_G(P) \leq k$?}

A parameterized problem~$\Pi$ is \emph{fixed parameter tractable~(FPT)} with respect to a parameter~$k$ if there is an algorithm solving any instance of~$\Pi$ with size~$n$ in $f(k) \cdot n^{O(1)}$~time for some computable function~$f$.  Such an algorithm is called a \emph{parameterized} or an \emph{fpt-algorithm}. See Cygan et al.~\cite{CyganFKLMPPS15} for more information on parameterized algorithms.

In this paper, we present fpt-algorithms for MESP with respect to the following structural parameters.

\begin{definition}[Modular width, \cite{GajarskyLO13}]
 Consider graphs that can be obtained from an
algebraic expression that uses the following operations:
\begin{description}
\item[(O1)] create an isolated vertex;
\item[(O2)] the disjoint union of $2$ disjoint graphs (the disjoint union of graphs $G_1$ and $G_2$ is the graph $\big(V(G_1) \cup V(G_2), E(G_1) \cup E(G_2)\big)$);
\item[(O3)] the complete join of $2$ disjoint graphs (the complete
    join of graphs $G_1$ and $G_2$ is the graph $\big(V(G_1) \cup V(G_2), E(G_1) \cup E(G_2) \cup \big\{ \{v,w\} \mid v \in V(G_1), w  \in V(G_2) \big\}\big)$);
\item[(O4)] the substitution with respect to some pattern graph $T$
(for a graph $T$ with
  vertices $t_1,\dots,t_n$ and disjoint graphs $G_1,\dots,G_n$ the
  substitution of the vertices of $T$ by the graphs
  $G_1,\dots,G_n$ is the graph with
  vertex set $\bigcup_{i=1}^n V(G_i)$ and edge set $\bigcup_{i=1}^n E(G_i) \cup \big\{ \{u,v\} \mid u \in V(G_i), v \in V(G_j) \textup{, and }\{t_i, t_j\} \in E(T) \big\}$).
\end{description}
We define the \emph{width} of an algebraic expression $A$ as the maximum number of
operands used by any occurrence of the operation (O4) in $A$.
The \emph{modular-width} of a graph $G$, denoted
$\mw(G)$, can be defined as the least integer $m$ such that $G$ can be obtained from
such an algebraic expression of width at most $m$. 
\end{definition}
Given a graph $G$ with $n$ vertices and $m$ edges, an algebraic expression of width $\mw(G)$ describing~$G$ can be constructed in $\mathcal{O}(n+m)$ time~\cite{TedderCHP08}.

\begin{definition}[Distance to cluster graph]
For a graph $G = (V, E)$, a \emph{modulator to cluster graph} is a vertex subset $X \subseteq V$ such that $G \setminus X$ is a vertex-disjoint union of cliques. The \emph{distance to cluster graph} is the size of the smallest modulator to cluster graph.
\end{definition}
A modulator to cluster graph of a graph with distance to cluster graph $p$ can be found in $\mathcal{O}\big(1.9102^p \cdot (n+m)\big)$ time.~\cite{BoralCKP16}

\begin{definition}[Distance to disjoint paths]
For a graph $G = (V, E)$ a \emph{modulator to disjoint paths} is a  vertex subset $X \subseteq V$, such that $G \setminus X$ is a vertex-disjoint union of paths. The \emph{distance to disjoint paths} is the size of the smallest modulator to disjoint paths.
\end{definition}

For completeness, we include the following result which is rather folklore.

\begin{lemma}
\label{alg:modulator-to-disjoint-paths}
The modulator to disjoint paths $C$ of a graph $G$ with distance to disjoint paths $c$ can be found in $\mathcal{O}\big(4^c(n+m)\big)$ time.
\end{lemma}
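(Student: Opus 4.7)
The plan is to design a bounded-search-tree algorithm that extends a partial modulator $X$, initialized to $\emptyset$, until either $G \setminus X$ becomes a disjoint union of paths (success) or $|X|$ exceeds $c$ (failure on the current branch). I rely on the characterization that a graph is a disjoint union of paths if and only if it has maximum degree at most $2$ and contains no cycle; these are the two ``forbidden structures'' I branch on.

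First, the high-degree branching. As long as $G \setminus X$ contains some vertex $v$ of degree at least $3$, I fix any three neighbors $u_1, u_2, u_3$ of $v$. Any valid modulator extending $X$ must contain at least one of $\{v, u_1, u_2, u_3\}$, because otherwise $v$ and three of its neighbors all survive in $G \setminus X$ and $v$ retains degree at least $3$ there. I therefore branch on the four choices of which vertex to add to $X$, each recursive call increasing $|X|$ by one.

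Second, the cycle elimination. Once $G \setminus X$ has maximum degree at most $2$, it is a disjoint union of paths and cycles. If there is a cycle $C$, then because every vertex of $C$ has already used both of its allowed incidences on its two cycle-neighbors, $C$ is a full connected component of $G \setminus X$. I commit, without branching, to adding an arbitrary $y \in V(C)$ to $X$. The justification is an exchange argument: if $X^{*}$ is any minimum modulator extending $X$, then $X^{*} \cap V(C) \neq \emptyset$ (since $C$ must be broken) and $X'' = (X^{*} \setminus V(C)) \cup \{y\}$ is still a valid modulator of size at most $|X^{*}|$; indeed, $V(C) \setminus \{y\}$ is a path, and the isolation of $C$ as a component of $G \setminus X$ guarantees that $G \setminus X''$ agrees with $G \setminus X^{*}$ outside $V(C)$.

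For the running time, each recursive call performs $\mathcal{O}(n+m)$ work to locate a vertex of degree at least $3$ or, failing that, a cycle. The recursion depth is at most $c$, with branching factor $4$ at high-degree steps and $1$ at cycle-elimination steps, so $T(c) \leq 4T(c-1)+1$ gives a search tree with $\mathcal{O}(4^c)$ nodes, for a total of $\mathcal{O}\big(4^c(n+m)\big)$ time. The main obstacle is the exchange argument justifying the no-branching cycle step; once that is in place, the remainder is a standard bounded-search-tree analysis.
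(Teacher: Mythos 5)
Your proof is correct, and while it shares the paper's overall skeleton---a bounded search tree driven by vertices of degree at least $3$, followed by a branch-free deletion of one vertex per surviving cycle once the maximum degree drops to $2$---your branching rule is genuinely different and simpler to analyze. The paper branches on a high-degree vertex $u$ as ``either $u$ is in the modulator, or some subset of $N_G(u)$ of size $\deg_G(u)-2$ is,'' producing $1+\binom{d}{2}$ branches whose budget decreases vary with $d$; bounding the resulting recurrence $T(c)\le T(c-1)+\binom{d}{2}\,T(c-d+2)$ requires an induction together with the estimate $\binom{d}{2}\le 3\cdot 4^{d-3}$. Your rule---fix any three neighbors $u_1,u_2,u_3$ of $v$ and branch on which of $\{v,u_1,u_2,u_3\}$ to delete---is complete for exactly the reason you give (if none of the four is deleted, $v$ keeps degree at least $3$), and it yields a flat four-way branch in which every child decreases the budget by exactly one, so the $\mathcal{O}(4^c)$ bound on the search tree is immediate from $T(c)\le 4\,T(c-1)$. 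Your exchange argument for the cycle step is also spelled out more carefully than in the paper, which simply asserts that one vertex per cycle suffices; it is valid because each cycle is an entire connected component of $G\setminus X$, so replacing the intersection of any optimal modulator with that cycle by a single arbitrary cycle vertex cannot increase its size and still leaves a disjoint union of paths. In short, both approaches are bounded search trees with the same $\mathcal{O}(4^c(n+m))$ running time, but yours buys a cleaner, recurrence-free analysis at no cost.
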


\toappendix{

\begin{proof}
If the highest degree in $G$ is at most 2, then $G$ consists only of disjoint paths and cycles, and the modulator to disjoint paths is a set of vertices containing one vertex from each cycle. Thus, the modulator to disjoint paths can be found in $\mathcal{O}(n+m)$ time by identifying all cycles with a depth-first search.

If the highest degree in $G$ is at least 3, then the modulator to disjoint paths can be found by a simple branching rule.
\begin{enumerate}
    \item Select any vertex $u$ with degree $\deg_G(u) \geq 3.$
    \item Either $u \in C$ or some subset $S \subseteq N_G(u)$ of size $\|S\| = \deg_G(u) - 2$ must be in $C$.
\end{enumerate}

We show that this algorithm has time complexity $4^cq(n+m)$ for some constant $q$, where $c$ is the given maximum distance to disjoint paths. We show that by induction on $c$. For $c=0$ we only have to check whether the graph is a disjoint union of paths, which can be done in $\mathcal{O}(n + m)$ time, i.e., $4^0q(n+m)$ time for a suitably chosen $q$.

For $c\ge 1$ we have $T(c)    \le    T(c - 1) + \binom{d}{d-2} T(c-d+2)$ for some $d\ge 3$, which, by induction hypothesis is at most $(4^{c-1}+\binom{d}{d-2}4^{c-d+2})q(n+m)$. To show that this is at most $4^cq(n+m)$, it remains to show that $\binom{d}{d-2}4^{-d+2} \le 1-4^{-1}$, i.e., that $\binom{d}{2} \le 3 \cdot 4^{d-3}$ for every $d \ge 3$. For $d \ge 5$ we have $\binom{d}{2} \le 2^d \le 2\cdot 2^{2d-6} \le 3 \cdot 4^{d-3}$, whereas for $d=3$ we have $\binom{d}{2} = 3 = 3 \cdot 4^{d-3}$ and for $d=4$ we have $\binom{d}{2} = 6 \le 12= 3 \cdot 4^{d-3}$.

Hence, the time complexity is indeed $\mathcal{O}(4^c (n+m))$.
\end{proof}
}

\begin{definition}[Treewidth]
A \emph{tree decomposition} of a graph $G = (V, E)$ is a tuple $(T, \beta)$ where $T$ is a tree and $\beta: V(T) \to 2^V$ such that
\begin{enumerate}
    \item $\bigcup_{x \in V(T)} \beta(x) = V$,
    \item $\forall \{u, v\} \in E\ \exists x \in V(T): \{u, v\} \subseteq \beta(x)$, and
    \item $\forall v \in V: $ nodes $\{x \in V(T) \mid v \in \beta(x)\}$ induce a connected subtree of $T$.
\end{enumerate}
The \emph{width} of a tree decomposition $(T, \beta)$ is $\max\{\|\beta(x)\|-1 : x \in V(T)\}$.
The \emph{treewidth} $\treewidth(G)$ of a graph $G$ is the smallest width over all tree decomposition of $G$.
\end{definition}

\begin{definition}[Maximum leaf number]
    The \emph{maximum leaf number} of a graph~G is the maximum number of leaves in a spanning tree of G.
\end{definition}

The presented algorithms rely on the following lemma.

\begin{lemma}
\label{th:unique-permutation}
For any graph $G = (V, E)$, any set $S \subseteq V$, and any vertex $s \in V$, at most one permutation $\bm\pi = (\pi_1, \dots, \pi_{\|S\|})$ of the vertices in $S$ exists, such that there is a shortest path $P$ with the following properties:
\begin{enumerate}
    \item The first vertex on $P$ is $s$,
    \item $P$ contains all vertices from $S$, and
    \item the vertices from $S$ appear on $P$ in exactly the order given by $\bm\pi$.
\end{enumerate}
Moreover, given a precomputed distance matrix for $G$, the permutation $\bm\pi$ can be found in $\mathcal{O}(\|S\| \log \|S\|)$ time.
\end{lemma}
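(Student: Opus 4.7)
The plan is to exploit the elementary fact that every prefix of a shortest path is itself a shortest path between its endpoints. Given any candidate path $P$ with first vertex $s$ that contains $S$, for every $v \in S$ the prefix of $P$ from $s$ to $v$ is a shortest $s$--$v$-path, so the position of $v$ on $P$ is exactly $d_G(s,v)+1$. Hence, if such a $P$ exists, the induced ordering on $S$ is forced to coincide with the one obtained by sorting $S$ in strictly increasing order of the values $d_G(s,\cdot)$.

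From this observation I would derive the uniqueness claim in two cases. If all values $d_G(s,v)$ for $v \in S$ are pairwise distinct, then the sorted order is the only permutation that can possibly arise, so at most one $\bm\pi$ exists. If on the other hand two distinct vertices $u,v \in S$ satisfy $d_G(s,u)=d_G(s,v)$, then the argument above shows that they would have to occupy the same position on $P$, which is impossible; hence no valid $\bm\pi$ exists at all, and a fortiori at most one exists. Either way, the first part of the lemma follows.

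For the algorithmic part I would simply sort $S$ with any $O(|S|\log|S|)$ comparison sort, using the precomputed distance matrix to answer each comparison $d_G(s,u)\lessgtr d_G(s,v)$ in constant time, and then scan the sorted sequence once to check whether the distances are strictly increasing; if they are, output the resulting permutation, otherwise report that none exists. The total running time is $O(|S|\log|S|)$ as claimed.

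There is no real obstacle here: the only mildly delicate point is noticing that ties in $d_G(s,\cdot)$ on $S$ immediately preclude the existence of any admissible $\bm\pi$, which is what makes the ``at most one'' statement work without having to argue about multiple shortest paths between the same pair of vertices.
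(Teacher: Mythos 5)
Your proposal is correct, and its algorithmic half is exactly the paper's: sort $S$ by $d_G(s,\cdot)$ using the precomputed distance matrix. For the uniqueness claim, however, you take a more direct route than the paper. The paper argues by contradiction via an exchange argument: given two admissible permutations with witnessing shortest paths $P$ and $P'$, it splices the prefix of $P$ up to the first disagreeing vertex onto the corresponding suffix of $P'$ and asserts the result is strictly shorter than $P'$. You instead observe that every prefix of a shortest path is itself a shortest path, so any $v \in S$ lying on an admissible $P$ must occupy position $d_G(s,v)+1$; the order is therefore forced outright, and a tie $d_G(s,u)=d_G(s,v)$ for distinct $u,v \in S$ immediately rules out any admissible permutation. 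This buys you two things: you never have to reason about a second hypothetical path, and the tie case is handled explicitly rather than implicitly. It is in fact the more robust argument --- in the paper's splice, the prefix of $P$ up to $\pi_i$ and the prefix of $P'$ up to the same vertex are both shortest $s$-$\pi_i$ paths and hence have equal length, so the ``strictly shorter'' step really needs your positional observation (or something equivalent) to go through. One minor caveat on your algorithm: your final scan for strictly increasing distances is a necessary but not sufficient condition for the sorted permutation to be realized by an actual shortest path; the paper's exact test is whether $d_G(s, \pi_1) + \sum_{i=1}^{\|S\|-1} d_G(\pi_i, \pi_{i+1}) = d_G(s, \pi_{\|S\|})$, which you would want in place of (or in addition to) your check when the realizability of $\bm\pi$ matters downstream. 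This does not affect the ``at most one'' claim or the stated running time.
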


\begin{proof}
For the sake of deriving a contradiction, suppose that there are two different permutations $\bm\pi$ and $\bm\pi'$ satisfying the conditions, and let $P, P'$ be the respective shortest paths.
Let $i \in \{1, \dots, \|S\| - 1\}$ be the first position such that $\pi_i \ne \pi'_i$ and let $j \in \{2, \dots, \|S\|\}$ be the position of $\pi_i$ in $\bm\pi'$ (clearly, $j > i$).
Let $P_1$ be the subpath of $P$ from $s$ to $\pi_i$, and $P'_2$ the subpath of $P'$ from $\pi'_j$ to $\pi'_{\|S\|}$ (excluding the first vertex $\pi'_j$).
Then, $P'' = P_1 \frown P'_2$ is a path which is strictly shorter than $P'$ and has the same endpoints.
This contradicts $P'$ being a shortest~path.

Sorting all vertices in $S$ by increasing distance from the starting endpoint $s$ yields our permutation $\bm\pi$.
It corresponds to some shortest path $P$ if and only if
\[d_G(s, \pi_1) + \textstyle\sum_{i=1}^{\|S\| - 1}d_G(\pi_i, \pi_{i+1}) = d_G(s, \pi_{\|S\|}).\]
\end{proof}

\section{Parameterized Algorithms}
\label{chap:algorithms}
In this section, we present several fpt-algorithms for MESP.
In \autoref{sec:modular-width}, we present an algorithm parameterized by the modular width.
In \autoref{sec:csc} we define the \textsc{Constrained Set Cover} (CSC) problem.
In \autoref{sec:distance-to-cluster} we show an fpt-algorithm for MESP parameterized by the distance to cluster graph which reduces MESP to CSC.
In \autoref{sec:distance-to-disjoint-paths}, we present an fpt-algorithm parameterized by the distance to disjoint paths and the desired eccentricity, combined.
This algorithm also depends on the solution of the CSC problem.
In \autoref{sec:treewidth}, we present an fpt-algorithm parametrized by treewidth and the desired eccentricity, combined.
In \autoref{sec:max-leaf-number} we present an algorithm parameterized by the maximum leaf~number.

\subsection{Modular Width}
\label{sec:modular-width}
\toappendix{
}
We present an fpt-algorithm for MESP parameterized by the modular width.

Let $G = (V, E)$ be a graph with modular width $w$ and $A$ be the corresponding algebraic expression describing the graph. We take a look at the last operation applied in $A$. Operation (O1) is trivial and (O2) yields a disconnected graph, therefore we suppose the last operation is either (O3) or (O4).

If it is (O3) and $G$ is a path (of length at most 3), then the whole path is trivially a shortest path with eccentricity 0.
If $G$ is not a path, then the minimum eccentricity shortest path is any single edge connecting the two original graphs with eccentricity 1.

If it is (O4), the pattern graph is $T = (V_T, E_T)$ with $V_T = \{v_1, \dots, v_w\}$ and the substituted graphs are $G_1, \dots, G_w$, then we suppose that $w \geq 3$ and $T$ is not a clique (otherwise (O3) could be used as the last operation). We continue by showing that the structure of the pattern graph restricts the structure of any shortest path in the resulting graph significantly.

\begin{lemma}
\label{th:single_vertex} 
If the last operation in $A$ is (O4), then there is a minimum eccentricity shortest path in $G$ which contains at most one vertex from each $G_i$ for $i \in \{1, \dots, w\}$.
\end{lemma}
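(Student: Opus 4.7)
The plan is to take any minimum-eccentricity shortest path $P = (p_1, \dots, p_\ell)$ in $G$ and, by case analysis on its endpoints, either observe that $P$ already uses at most one vertex per module, or exhibit a suitable replacement $P'$. Writing $m(k)$ for the index with $p_k \in G_{m(k)}$, the central fact I rely on is that when $u \in G_a$ and $u' \in G_b$ with $a \ne b$, we have $d_G(u, u') = d_T(v_a, v_b)$: any cross-module step in $G$ corresponds to an edge of $T$, and any edge of $T$ induces complete bipartite connections between the corresponding modules. I also use that $T$ is connected (because $G$ is), and hence every $v_i$ has a neighbor in $T$.

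If $m(1) \ne m(\ell)$, write $a = m(1)$, $b = m(\ell)$; then $\ell - 1 = d_G(p_1, p_\ell) = d_T(v_a, v_b)$, and the module sequence along $P$ projects to a walk from $v_a$ to $v_b$ in $T$ whose length equals the number of inter-module steps of $P$, hence is at most $\ell - 1$. Since every $v_a$-$v_b$-walk in $T$ has length at least $d_T(v_a, v_b) = \ell - 1$, equality forces both no intra-module steps on $P$ and the walk being a shortest path in $T$, so its entries are distinct and $P$ visits each $G_i$ at most once. Otherwise $m(1) = m(\ell) = i$ and $\ell - 1 \leq 2$, giving $\ell \in \{1, 2, 3\}$. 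The case $\ell = 1$ is trivial. In the remaining sub-cases I set $P' = (p_1, q)$ for an arbitrary $q \in G_j$ with $v_j \in N_T(v_i)$ whenever $P \subseteq G_i$, and $P' = (p_1, p_2)$ when $\ell = 3$ with $p_2 \in G_j$, $j \ne i$. In both constructions $P'$ has length one, is a shortest path between its endpoints, and uses two distinct modules.

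The step that demands real work, and that I anticipate as the main obstacle, is verifying $\ecc(P') \leq \ecc(P)$ in these modifications. The key observation driving the verification is that when $P \subseteq G_i$, every vertex $x$ in a module $G_l$ with $l \ne i$ satisfies $d_G(x, P) = d_T(v_l, v_i)$, so the hypothesis $\ecc(P) \leq k$ forces $\ecc_T(v_i) \leq k$; similarly, when $P$ uses only modules $\{i, j\}$ we obtain $\min(d_T(v_l, v_i), d_T(v_l, v_j)) \leq k$ for every $l \notin \{i, j\}$. A per-vertex check, split by whether $x$ lies in a module used by $P'$ or elsewhere, then bounds $d_G(x, P')$ either by $1$ (when $x$ lies in one of the two modules used by $P'$, via the cross-module edge) or by one of the $T$-distances already bounded above. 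Together with $k \geq 1$, which follows because $w \geq 3$ guarantees vertices outside any single module $G_i$, this completes the argument and yields a minimum-eccentricity shortest path with at most one vertex per $G_i$.
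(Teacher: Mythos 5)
Your proof is correct, and its skeleton parallels the paper's: generic shortest paths already visit each module at most once, and the degenerate short ones are replaced by a single cross-module edge, using $\ecc_G(P)\ge 1$ (which holds since $V(P)\subsetneq V(G)$ under (O4)) to absorb the vertices that end up at distance $1$ from the new edge. The mechanism for the main case is genuinely different, however. The paper argues locally for paths of length at least $3$: if $p_s,p_t\in G_i$ and $p_{s-1}$ is a predecessor of $p_s$ outside $G_i$, then $p_{s-1}$ is adjacent to $p_t$ as well, giving a shortcut and contradicting shortestness. You argue globally through the quotient metric, first establishing $d_G(u,u')=d_T(v_a,v_b)$ for vertices in distinct modules and then counting: when the endpoints lie in distinct modules, the module sequence of $P$ projects to a $v_a$-$v_b$-walk in $T$ with at most $\ell-1=d_T(v_a,v_b)$ non-lazy steps, which forces every step of $P$ to be inter-module and the projection to be a geodesic, hence injective. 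Your route requires the distance identity up front but then handles all paths with endpoints in distinct modules uniformly (in particular it sidesteps the paper's implicit normalization that the chosen predecessor lies outside $G_i$); the paper's shortcut argument is more elementary. Your replacements in the endpoints-in-a-common-module case ($\ell\le 3$, using $(p_1,q)$ or $(p_1,p_2)$) are interchangeable with the paper's length-at-most-$2$ replacement $(b,c)$, and both verifications hinge on the same two facts: vertices in the modules touched by the new edge are at distance at most $1$, and vertices in other modules are no farther from the new edge than from $P$.
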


\toappendix{
\begin{proof}
Let $P$ be a shortest path in $G$.

If the length of $P$ is at most 2 and $P$ contains two vertices from $G_i$, then we create a path $P'$ with $\ecc_G(P') \leq \ecc_G(P)$. Because $G$ was created with (O4), we know that $G \not\simeq P$, thus $\ecc_G(P) \geq 1$. We denote $a, b$ the two vertices from $G_i$ on $P$. If there is another vertex on $x$ on $P$ and $x \notin G_i$, we let $c = x$; otherwise we choose a vertex $c$ arbitrarily from some $G_j$ such that $j \neq i$ and $\{v_i, v_j\} \in E_T$. Then, we let $P' = (b, c)$. For every vertex $u \in G_i$ we have $d_G(u, P') \leq 1$ and for every vertex $v \notin G_i$ we have $d_G(v, P') \leq d_G(v, P)$.

If the length of $P$ is at least 3, we show by contradiction that it cannot contain two vertices from $G_i$. Let $P = (p_1, \dots, p_\ell)$. Suppose that $p_s, p_t \in G_i$. Clearly, at least one of $p_s, p_t$ is not an endpoint of $P$ (otherwise the length of $P$ would be at most~2). Without loss of generality, suppose that $p_s$ is not an endpoint of $P$, thus it has a predecessor $p_{s-1}$ and $P = (p_1, \dots, p_{s-1}, p_s, \dots, p_t, \dots, p_\ell)$. We have $\{p_{s-1}, p_s\} \in E$ and $\{p_{s-1}, p_t\} \in E$. Path $P$ may be shortened to $P' = (p_1, \dots, p_{s-1}, p_t, \dots, p_\ell)$. Thus, $P$ is not a shortest path.
\end{proof}
}

Now, we show that with respect to eccentricity, all vertices in the same graph~$G_i$ are equivalent. That means a minimum eccentricity shortest path in~$G$ can be found by trying all shortest paths in $T$.

\begin{lemma}
\label{th:modular-width--substitute-p}
Let $P$ be a shortest path in $G$ and $p \in P \cap G_i$. We create a path $P'$ by substituting $p$ in $P$ by any $p' \in G_i$. Then, $\ecc_G(P') = \ecc_G(P)$.
\end{lemma}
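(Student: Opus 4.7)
The plan is to exploit the uniformity of modules produced by operation~(O4) and then compare $d_G(u,P)$ with $d_G(u,P')$ vertex by vertex. The key structural observation is that every vertex of $G_i$ has the same neighborhood in $V \setminus V(G_i)$, namely $\bigcup_{j : \{v_i,v_j\} \in E(T)} V(G_j)$. A direct consequence is that for any $u \notin V(G_i)$ and any two vertices $x,y \in V(G_i)$ we have $d_G(u,x) = d_G(u,y)$; indeed, both distances equal $d_T(v_{j(u)}, v_i)$, where $j(u)$ is the index with $u \in V(G_{j(u)})$.

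First I would verify that $P'$ is indeed a path. In view of \autoref{th:single_vertex}, it suffices to consider paths $P$ with at most one vertex per module, so the (one or two) neighbors of $p$ on $P$ lie in modules $G_j$ with $\{v_i, v_j\} \in E(T)$. Because $p'$ shares the full external neighborhood of $p$, the substitution preserves these adjacencies, and $P'$ has the same length as~$P$.

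Next I would do the per-vertex comparison of distances. For $u \notin V(G_i)$, the paths $P$ and $P'$ differ only in $p$ versus $p'$, both equidistant from $u$ by the observation above, so $d_G(u,P) = d_G(u,P')$. For $u \in V(G_i) \setminus \{p, p'\}$, any neighbor $q$ of $p$ on $P$ lies in some $G_j$ with $\{v_i, v_j\} \in E(T)$; since $u \in V(G_i)$, vertex $u$ is adjacent to $q$ in~$G$, giving $d_G(u,P) \leq 1$, and since $u$ lies outside both paths the reverse inequality holds too, yielding $d_G(u,P) = d_G(u,P') = 1$.

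The only remaining discrepancy is at $u = p$, where $d_G(p,P) = 0$ but $d_G(p,P') = 1$ (via the preserved neighbor $q$), and symmetrically at $u = p'$. The main subtlety, and really the crux of the proof, is showing that this swap does not alter the eccentricity, which is a \emph{maximum} over all $u \in V$: any other vertex of $V(G_i)$ contributes exactly $1$ to both $\ecc_G(P)$ and $\ecc_G(P')$ by the previous step, so such a vertex acts as a common witness that dominates the values at $p$ and $p'$; if instead $|V(G_i)| = 1$ then $p = p'$ and the claim is trivial. In either case we conclude $\ecc_G(P) = \ecc_G(P')$.
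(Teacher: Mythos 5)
Your proof is correct and follows essentially the same route as the paper's: compare $d_G(u,P)$ with $d_G(u,P')$ vertex by vertex using the module property, and note that $d_G(p,P') = d_G(p',P) = 1$, so $p$ and $p'$ simply swap roles without affecting the maximum. One small caveat: your closing ``common witness'' argument requires a vertex of $V(G_i)\setminus\{p,p'\}$ and therefore says nothing when $V(G_i)=\{p,p'\}$; but that case is already handled by the symmetric-swap observation in your preceding sentence (the pair $\{p,p'\}$ contributes $\max\{0,1\}=1$ to both eccentricities), which is precisely how the paper finishes, so the witness detour can simply be dropped.
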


\toappendix{
\begin{proof}
Let $u \in V(G)$ such that $u \neq p$ and $u \neq p'$.
If $u \notin G_i$, then $d_G(u, p) = d_G(u, p')$ and thus $d_G(u, P) = d_G(u, P')$.
If $u \in G_i$, then $d_G(u, P) \leq 1$ and $d_G(u, P') \leq 1$ because the neighbors of $p$ on $P$, as well as the neighbors of $p'$ on $P'$, are also neighbors of $u$.
Moreover, $d_G(p, P') = d_G(p', P) = 1$.
\end{proof}
}

Based on what we have shown, we can construct an algorithm to solve MESP. We handle separately the graphs created using (O1) or (O3) as the last operation.
For (O4), we iterate through all possible shortest paths $\bm\pi$ in $T$.
For each of them and each $i \in \{1, \dots, \|\bm\pi\|\}$ we let $p_i \in G_{\pi_i}$ arbitrarily, and let $P := (p_1, \dots, p_{\|\bm\pi\|})$.
Then we check whether $P$ is a shortest path with eccentricity at most $k$ in~$G$. By the above arguments, if there is a shortest path of eccentricity at most $k$, we will find one.

All shortest paths in a graph can be found by simply performing $n$ DFS traversals (one starting in each vertex).
Each forward step of the DFS represents a new shortest path; we skip edges that would break the shortest path property (this can easily be checked with a precomputed distance matrix).

By assuming a trivial upper bound $2^w$ on the number of shortest paths in $T$, we arrive at the following theorem.

\begin{theorem}
There is an algorithm that solves MESP in $\mathcal{O}(2^w \cdot n^3)$ time, where $w$ is the modular width of the input graph.
\end{theorem}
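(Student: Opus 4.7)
The plan is to turn the two preceding lemmas into a concrete enumeration procedure and bound its running time. I would first compute an algebraic expression of width $w=\mw(G)$ in $O(n+m)$ time by~\cite{TedderCHP08} and precompute the all-pairs distance matrix of $G$ in $O(n^3)$ time via BFS from every vertex. The algorithm then branches on the last operation in the expression: operation (O1) is trivial, (O2) violates our connectivity assumption, and (O3) is handled directly as in the text preceding \autoref{th:single_vertex} (either the whole $G$ is a short path of eccentricity $0$, or any edge between the two operands gives eccentricity~$1$). Only case (O4) needs real work.

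Suppose the last operation is (O4) with pattern $T=(V_T,E_T)$ on vertices $v_1,\ldots,v_w$ and substituted graphs $G_1,\ldots,G_w$. By \autoref{th:single_vertex}, some minimum-eccentricity shortest path in $G$ uses at most one vertex from each $G_i$; projecting such a path onto $T$ (by sending each vertex to the index of its containing $G_i$) yields a shortest path $\bm\pi=(\pi_1,\ldots,\pi_\ell)$ in $T$. By \autoref{th:modular-width--substitute-p}, replacing each chosen vertex by any other representative of the same $G_i$ preserves the eccentricity. The algorithm therefore enumerates all shortest paths $\bm\pi$ in $T$, picks an arbitrary $p_i\in V(G_{\pi_i})$ for each $i\in\{1,\ldots,\ell\}$, sets $P:=(p_1,\ldots,p_\ell)$, verifies with the distance matrix of $G$ that $P$ is indeed a shortest $p_1$--$p_\ell$-path (via the telescoping identity in the proof of \autoref{th:unique-permutation}), and computes $\ecc_G(P)=\max_{v\in V}\min_{p\in P}d_G(v,p)$; it accepts iff some candidate has eccentricity at most~$k$.

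For the analysis, I would enumerate the shortest paths of $T$ by $w$ DFS traversals of $T$, extending a partial path by an edge only when doing so preserves the shortest-path invariant, checkable in $O(1)$ using the precomputed distance matrix of $T$. By \autoref{th:unique-permutation}, each (vertex set, starting endpoint)-pair in $T$ admits at most one valid ordering, so at most $w\cdot 2^w$ shortest paths of $T$ are produced. Each candidate costs $O(w)$ for the shortest-path check and $O(nw)$ for the eccentricity computation, giving total search time $O(w^2\cdot n\cdot 2^w)$; combined with the $O(n^3)$ distance precomputation and $w\le n$, this yields the stated $O(2^w\cdot n^3)$ bound. The main obstacle is the $2^w$ bound on shortest paths in the pattern $T$; once that is granted via \autoref{th:unique-permutation}, the rest is bookkeeping on top of \autoref{th:single_vertex} and \autoref{th:modular-width--substitute-p}.
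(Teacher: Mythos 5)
Your proposal is correct and follows essentially the same route as the paper: handle (O1)--(O3) directly, and for (O4) use \autoref{th:single_vertex} and \autoref{th:modular-width--substitute-p} to reduce the search to shortest paths of the pattern graph $T$, enumerated by DFS and bounded via \autoref{th:unique-permutation} by roughly $2^w$, with each candidate verified against the precomputed distance matrix. Your accounting ($w\cdot 2^w$ candidates, $\mathcal{O}(nw)$ per eccentricity check, absorbed into $\mathcal{O}(2^w\cdot n^3)$ using $w\le n$) matches the paper's intended analysis, and your explicit projection argument is a slightly more careful justification of a step the paper leaves implicit.
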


\subsection{Constrained Set Cover}
\label{sec:csc}
\toappendix{
}
In this subsection we define the \textsc{Constrained Set Cover} (CSC) problem. In the folllowing two subsections it will be used as a subroutine to solve MESP.\\[\medskipamount]
\problemdef{Constrained Set Cover}{A set $\mathcal{C} = C_1 \cup \dots \cup C_m$ of candidates, a set $\mathcal{R} = \{r_1, \dots, r_n\}$ of requirements to be satisfied, and a function $\Psi: \mathcal{C} \rightarrow 2^\mathcal{R}$ that determines for each candidate which requirements it satisfies.}
{Is there a \emph{constrained set cover}, that is, a set of candidates, exactly one from each set
$s_1 \in C_1, \dots, s_m \in C_m$ such that together they satisfy all the requirements, i.e., $\Psi(s_1) \cup \dots \cup \Psi(s_m) = \mathcal{R}$?}

Each candidate can be thought of as a set of (satisfied) requirements. 
Hence, if we drop the constraints $s_i \in C_i$, we get the ordinary \textsc{Set Cover} with a universe $\mathcal{R}$, a family of sets $\{\Psi(c) \mid c \in \mathcal{C}\}$, and the question of whether there is a set cover of size at most $m$.
In our definition several candidates can satisfy the same set of requirements.

In the next two subsections we use the following theorem.

\begin{theorem}
\label{lem:csc}
\textsc{Constrained Set Cover} can be solved in $\mathcal{O}(2^{2\|\mathcal{R}\|} \|\mathcal{R}\| \cdot \|\mathcal{C}\|)$ time.
\end{theorem}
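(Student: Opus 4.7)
The plan is to solve CSC by a dynamic programming over subsets of $\mathcal{R}$, processing the groups $C_1, \ldots, C_m$ one at a time. For $0 \le i \le m$ and $S \subseteq \mathcal{R}$, I will let $D[i][S]$ be true iff one can pick exactly one candidate from each of $C_1, \ldots, C_i$ such that the union of their $\Psi$-values equals $S$. The base case is $D[0][\emptyset] = \text{true}$ and $D[0][S] = \text{false}$ for $S \ne \emptyset$, and the instance is a yes-instance iff $D[m][\mathcal{R}] = \text{true}$.

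To drive the transition, I will first precompute, for each group $i$ and each $T \subseteq \mathcal{R}$, a bit $A[i][T]$ indicating whether some $c \in C_i$ satisfies $\Psi(c) = T$. This is a single scan through the candidates that treats each $\Psi(c)$ as a bitmask of length $\|\mathcal{R}\|$, at a total cost of $\mathcal{O}(\|\mathcal{C}\| \cdot \|\mathcal{R}\|)$. The recurrence will then read
\[
D[i][S] \;=\; \bigvee_{\substack{T_1, T_2 \subseteq \mathcal{R} \\ T_1 \cup T_2 = S}} D[i-1][T_1] \wedge A[i][T_2],
\]
whose correctness I will justify by splitting any valid selection from $C_1, \ldots, C_i$ into its first $i-1$ choices (covering some $T_1$) and the $i$-th choice (with $\Psi$-value some $T_2$), and, conversely, by appending any candidate of $C_i$ realizing $T_2$ to a witness of $D[i-1][T_1]$.

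The key observation for the running time is that each level is most conveniently computed by iterating over all ordered pairs $(T_1, T_2) \in 2^{\mathcal{R}} \times 2^{\mathcal{R}}$ and, whenever $D[i-1][T_1]$ and $A[i][T_2]$ both hold, marking $D[i][T_1 \cup T_2]$ as true. There are $2^{2\|\mathcal{R}\|}$ such pairs, each handled in $\mathcal{O}(\|\mathcal{R}\|)$ time for the bitmask union and the two table lookups. Since any empty $C_i$ makes the instance trivially negative, we may assume $m \le \|\mathcal{C}\|$; summing the per-level cost then yields the claimed bound $\mathcal{O}(2^{2\|\mathcal{R}\|} \cdot \|\mathcal{R}\| \cdot \|\mathcal{C}\|)$, which dominates the preprocessing. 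I do not anticipate any substantive obstacle here beyond the routine correctness verification of the recurrence; the main design choice is to phrase the transition as a disjunctive union-convolution, which gives a clean $2^{2\|\mathcal{R}\|}$ per-level cost that scales linearly in $\|\mathcal{C}\|$ through the number of groups.
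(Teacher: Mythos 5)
Your proposal is correct and follows essentially the same route as the paper: a dynamic program over subsets of $\mathcal{R}$ indexed by the prefix $C_1,\dots,C_i$, with a union-convolution transition costing $2^{2\|\mathcal{R}\|}\cdot\|\mathcal{R}\|$ per level. The only (cosmetic) differences are that the paper iterates per candidate over pairs $(K, F\subseteq\Psi(c))$ and stores a witness candidate in each table entry (so the actual cover can be read off backwards, which its later applications need), whereas you aggregate each group into the table $A[i][\cdot]$ and keep a boolean; adding witness storage to your version is routine.
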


\toappendix{
The rest of this subsection is devoted to the proof of \autoref{lem:csc}.
To prove \autoref{lem:csc}, we need some further definitions and lemmas.

To help us solve CSC, we now define a function $D_i$ for each $i \in \{1, \dots, m\}$. 

\begin{definition}
\label{def:csc-dyn}
Let $\mathcal{R} = \{r_1, \dots, r_n\}, \mathcal{C} = C_1 \cup \dots \cup C_m, \Psi: \mathcal{C} \rightarrow 2^\mathcal{R}$ be an instance of CSC. We define function $D_i: 2^\mathcal{R} \rightarrow \mathcal{C} \cup \{\top, \bot\}$ as follows:
Given some requirements~$R\subseteq {\mathcal R}$,
\[D_i(R) = \begin{cases} 
    \top    & \text{if } i = 0 \land R = \emptyset\\
    s_i     & \text{if } \exists\ (s_1 \in C_1, \dots, s_i \in C_i): R \subseteq \Psi(s_1) \cup \dots \cup \Psi(s_i)\\
    \bot    & \text{otherwise}.
\end{cases}\]
\end{definition}

If there are more candidates in the second case, we select an arbitrary one to make $D_i$ a function.
As we will see, it does not matter which specific value it has, as long as it satisfies the definition.

Before using this function to solve CSC, we need to know how to calculate its values efficiently.
Note that $D_0(R)=\top$ if $R=\emptyset$ and $\bot$ otherwise.

\begin{lemma}
\label{th:csc--recursion}
Let $i \ge 0$. Then $D_{i+1}$ can be computed recursively as follows.
\[D_{i+1}(R) = \begin{cases}
    s_{i+1} & \text{if } \exists\ K \subseteq R, s_{i+1} \in C_{i+1}: R \subseteq K \cup \Psi(s_{i+1}) \land D_i(K) \neq \bot\\
    \bot & \text{otherwise}
\end{cases}\]
\end{lemma}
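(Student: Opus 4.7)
The plan is to verify that the recursive characterisation of $D_{i+1}(R)$ matches \autoref{def:csc-dyn} by proving both implications separately, using the natural choice of $K$ in each direction. Since the case distinction in \autoref{def:csc-dyn} depends only on whether there exist candidates $s_1 \in C_1,\dots,s_{i+1}\in C_{i+1}$ with $R \subseteq \Psi(s_1)\cup\dots\cup\Psi(s_{i+1})$, it suffices to show that such a tuple exists if and only if there exist $K \subseteq R$ and $s_{i+1}\in C_{i+1}$ with $R\subseteq K\cup\Psi(s_{i+1})$ and $D_i(K)\neq\bot$.

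For the forward direction I would assume a witness tuple $s_1,\dots,s_{i+1}$ and set $K := R \cap (\Psi(s_1)\cup\dots\cup\Psi(s_i))$. Then $K\subseteq\Psi(s_1)\cup\dots\cup\Psi(s_i)$, which together with $s_1\in C_1,\dots,s_i\in C_i$ witnesses $D_i(K) \neq \bot$ via \autoref{def:csc-dyn}. It remains to check $R \subseteq K \cup \Psi(s_{i+1})$: any $r \in R$ lies in some $\Psi(s_j)$, and if $j\le i$ then $r \in K$, while if $j = i+1$ then $r \in \Psi(s_{i+1})$. The $i=0$ case needs a quick sanity check: here $D_0(K)\neq\bot$ forces $K=\emptyset$, and the forward direction indeed produces $K=\emptyset$ because there are no $\Psi(s_j)$ with $j\le 0$.

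For the backward direction, assume $K\subseteq R$, $s_{i+1}\in C_{i+1}$ with $R\subseteq K\cup\Psi(s_{i+1})$ and $D_i(K)\neq\bot$. Unfolding $D_i(K)\neq\bot$ from \autoref{def:csc-dyn} yields $s_1\in C_1,\dots,s_i\in C_i$ with $K \subseteq \Psi(s_1)\cup\dots\cup\Psi(s_i)$. Combining with $R\subseteq K\cup\Psi(s_{i+1})$ gives $R\subseteq \Psi(s_1)\cup\dots\cup\Psi(s_{i+1})$, hitting the second case of \autoref{def:csc-dyn} for $D_{i+1}(R)$, hence $D_{i+1}(R)\neq\bot$.

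The only mildly delicate point, which I would make explicit, is that \autoref{def:csc-dyn} allows an arbitrary representative to be chosen in the non-$\bot$ case; the statement of the lemma only asserts that the value is some $s_{i+1}\in C_{i+1}$ witnessing the property, not a specific one, so consistency of choices is not an issue. There are no real obstacles here: the argument is a one-step unfolding of the definition, with the main content being the identification of the correct $K$ in the forward direction.
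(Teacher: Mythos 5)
Your proof is correct and takes essentially the same route as the paper's: both directions are a one-step unfolding of \autoref{def:csc-dyn}, with the witnessing tuple assembled from the witnesses for $D_i(K)$ plus $s_{i+1}$. In fact your choice $K := R \cap \bigl(\Psi(s_1)\cup\dots\cup\Psi(s_i)\bigr)$ is slightly more careful than the paper's $K := \Psi(s_1)\cup\dots\cup\Psi(s_i)$, which need not satisfy the constraint $K \subseteq R$ appearing in the recursion.
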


\begin{proof}
We suppose that the initial values of $D_0$ are taken directly from \autoref{def:csc-dyn} and prove the lemma for any $i \geq 0$. First, we show that if the recursion yields some value $s_{i+1}$, then it is one of the correct possible values of $D_{i+1}(R)$ according to \autoref{def:csc-dyn}. Second, we show that if $D_i(R) \neq \bot$, then the recursion does not yield $\bot$ either.

If the recursion yields $s_{i+1} \in C_{i+1}$, then there is some $K \subseteq R$ such that $R \subseteq K \cup \Psi(s_{i+1}) \land D_i(K) \neq \bot$.
From induction, the value of $D_i(K)$ is correct, thus there exist $(s_1 \in C_1, \dots, s_i \in C_i): K \subseteq \Psi(s_1) \cup \dots \cup \Psi(s_i)$. Hence, $R \subseteq \Psi(s_1) \cup \dots \cup \Psi(s_i) \cup \Psi(s_{i+1})$, which corresponds to \autoref{def:csc-dyn} and $D_{i+1}(R)$ yields a correct value.

If $D_{i+1}(R) = s_{i+1} \in C_{i+1}$, then there are some candidates $s_1 \in C_1, \dots s_{i+1} \in C_{i+1}$ such that $R \subseteq \Psi(s_1) \cup \dots \cup \Psi(s_{i+1})$. Let $K = \Psi(s_1) \cup \dots \cup \Psi(s_i)$. Then, $R \subseteq K \cup \Psi({s_i+1})$ and $D_i(K) = s_i$. Hence the recursion does not yield $\bot$.
\end{proof}

We continue by showing how a solution of CSC may be extracted from the values of $D_1, \dots, D_m$. We will use each function $D_i$ to choose the candidate $s_i$ from $C_i$.

\begin{lemma}
\label{th:constrained-set-cover-dp}
If $D_m(\mathcal{R}) = \bot$, then no solution exists. Otherwise, the solution can be found by iterating through the calculated values backwards and setting:
$s_m = D_m(\mathcal{R})$, 
$s_{m-1}  = D_{m-1}(\mathcal{R} \setminus \Psi(s_m))$,
            \dots,
           $ s_i     = D_i(\mathcal{R} \setminus \bigcup_{j=i+1}^m \Psi(s_j) )$,
             \dots,
             $s_1      = D_1(\mathcal{R} \setminus \bigcup_{j=2}^m \Psi(s_j) )$.
\end{lemma}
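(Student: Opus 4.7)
The plan is to argue the two claims of the lemma separately. For the first claim, note that by Definition~\ref{def:csc-dyn}, $D_m(\mathcal{R}) = \bot$ means that no tuple $(s_1 \in C_1, \ldots, s_m \in C_m)$ satisfies $\mathcal{R} \subseteq \Psi(s_1) \cup \ldots \cup \Psi(s_m)$; since $\Psi(s_j) \subseteq \mathcal{R}$ for all $j$, this is literally the non-existence of a constrained set cover.

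For the reconstruction claim, I would set $R_m = \mathcal{R}$ and, mirroring the lemma's prescription, $R_i = \mathcal{R} \setminus \bigcup_{j=i+1}^m \Psi(s_j)$ for $i < m$, so that the procedure amounts to defining $s_i := D_i(R_i)$ iteratively for $i = m, m-1, \ldots, 1$. The key invariant, to be proved by downward induction on $i$, is that $D_i(R_i) \neq \bot$; this guarantees that $s_i$ is a well-defined element of $C_i$. The base case $i = m$ is the standing hypothesis. For the inductive step, assume $D_{i+1}(R_{i+1}) = s_{i+1} \in C_{i+1}$. Definition~\ref{def:csc-dyn} then yields candidates $s'_1 \in C_1, \ldots, s'_i \in C_i$ with $R_{i+1} \subseteq \Psi(s'_1) \cup \ldots \cup \Psi(s'_i) \cup \Psi(s_{i+1})$; subtracting $\Psi(s_{i+1})$ from both sides gives $R_{i+1} \setminus \Psi(s_{i+1}) \subseteq \Psi(s'_1) \cup \ldots \cup \Psi(s'_i)$. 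Since $R_i = R_{i+1} \setminus \Psi(s_{i+1})$ by construction, this is exactly a witness that $D_i(R_i) \neq \bot$, completing the induction.

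To close, I would apply the invariant at $i = 1$: we obtain $s_1 = D_1(R_1) \in C_1$, and the definition of $D_1$ directly forces $R_1 \subseteq \Psi(s_1)$. Equivalently, $\mathcal{R} \setminus \bigcup_{j=1}^m \Psi(s_j) = R_1 \setminus \Psi(s_1) = \emptyset$, which is precisely the constrained set cover property. I do not anticipate a serious obstacle; the only point that requires a little care is that the definition of $D_{i+1}(R_{i+1})$ commits only to the returned candidate $s_{i+1}$ and guarantees the existence of some compatible prefix $s'_1, \ldots, s'_i$, not necessarily the one produced at later steps of the traceback. The induction above is phrased so as to rely only on the existence of such a prefix at each level, which is exactly what the definition supplies.
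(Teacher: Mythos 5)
Your proposal is correct and follows essentially the same route as the paper's proof: a direct appeal to the definition of $D_m$ for the infeasibility claim, and a backward traceback justified level by level by the definition of $D_i$. Your version is merely a more careful writeup (an explicit downward induction with the invariant $D_i(R_i)\neq\bot$, and an explicit note that the witnessing prefix need not coincide with the traceback), which the paper leaves implicit.
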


\begin{proof}
By the definition of $D_m$, we only have $D_m(\mathcal{R}) = \bot$ if no set of candidates $s_1 \in C_1, \dots, s_m \in C_m$ exists, such that $\mathcal{R} \subseteq \Psi(s_1) \cup \dots \cup \Psi(s_m)$, i.e., if no solution of the CSC instance exists.

Otherwise, using the definition of $D_m$, we can set $s_m = D_m(\mathcal{R})$.
We know that given an $i \in \{1, \dots, m - 1\}$, all the requirements $\bigcup_{j=i+1}^m \Psi(s_j)$ are satisfied by $s_{i+1}, \dots, s_m$, so the rest of them needs to be satisfied by $s_1, \dots, s_i$. Also, by the definition of $D_{i+1}$ we know that there exists $s_1, \dots, s_i$ such that they satisfy the rest of the requirements. Then, by the definition of $D_i$, we have $D_i\big(\mathcal{R} \setminus \bigcup_{j=i+1}^m \Psi(s_j) \big) = s_i$.
\end{proof}

Finally, we propose \autoref{alg:constrained-set-cover} to solve CSC using dynamic programming. We first compute all values for each of $D_1, \dots, D_m$ using the recursion from \autoref{th:csc--recursion}, and then construct the solution $s_1, \dots, s_m$ from them as in \autoref{th:constrained-set-cover-dp}.
\autoref{lem:csc} summarizes the properties of \autoref{alg:constrained-set-cover}.

\begin{algorithm}[t!]
	\caption{Constrained Set Cover}
	\label{alg:constrained-set-cover}
    \begin{algorithmic}[1]
	\Require Set of requirements $\mathcal{R}$, sets of candidates $\mathcal{C} = C_1 \cup \dots \cup C_m$, function $\Psi: \mathcal{C} \rightarrow 2^\mathcal{R}$
	\ForAll{$S \subseteq \mathcal{R}$}
	    \State{$D_0(S) \leftarrow \bot$}
	\EndFor
    \State{$D_0(\emptyset) \leftarrow \top$}
    \For{$i = 1, \dots, m$}
        \ForAll{$S \subseteq \mathcal{R}$}
            \State{$D_i(S) \leftarrow \bot$}
        \EndFor
        \ForAll{$c \in C_i$}
            \ForAll{$K \subseteq \mathcal{R}, F \subseteq \Psi(c)$}
                \If{$D_{i-1}(K) \neq \bot$}
                    \State{$D_i(K \cup F) \leftarrow c$}
                \EndIf
            \EndFor
        \EndFor
    \EndFor
    \If{$D_m(\mathcal{R}) \neq \bot$}
        \For{$i = m, \dots, 1$}
            \State{$s_i \leftarrow D_i(\mathcal{R})$}
            \State{$\mathcal{R} \leftarrow \mathcal{R} \setminus \Psi(s_i)$}
        \EndFor
        \State{\Return $(s_1, \dots, s_m)$}
    \Else
        \State{No solution exists.}
    \EndIf
\end{algorithmic}
\end{algorithm}

\begin{proof}[Proof of \autoref{lem:csc}]
We prove that \autoref{alg:constrained-set-cover} solves \textsc{Constrained Set Cover} in $\mathcal{O}(2^{2\|\mathcal{R}\|} \|\mathcal{R}\| \cdot \|\mathcal{C}\|)$ time.

On lines 1--3 we set $D_0$ according to \autoref{def:csc-dyn}. On lines 4--10 we compute $D_1, \dots, D_m$ according to \autoref{th:csc--recursion}. On lines 11--17 we construct the solution from $D_1, \dots, D_m$ according to \autoref{th:constrained-set-cover-dp}.

The for-loop on line 4 iterates over all sets of candidates and the foreach-loop on line 7 iterates over all candidates in each set. Together, lines 8--10 will be executed $\|\mathcal{C}\|$ times. The foreach-loop on line 8 iterates over all subsets of $\mathcal{R}$ and all subsets of the output of $\Psi$, which sums to at most $\mathcal{O}(2^{\|\mathcal{R}\|} \cdot 2^{\|\mathcal{R}\|})$ iterations in total. Lines 9--10 can be implemented in $\mathcal{O}(\|\mathcal{R}\|)$ time. The for-loop on line 12 has $m$ iterations, and lines 13--14 can be implemented in $\mathcal{O}(\|\mathcal{R}\|)$ time.
\end{proof}
}

\subsection{Distance to Cluster Graph}
\label{sec:distance-to-cluster}
\toappendix{
}
In this subsection, we present an fpt-algorithm for MESP parameterized by the distance to cluster graph.
The trivial case where distance to cluster graph is 0 is omitted.
%
Note that if $G$ is a graph with a modulator to cluster graph~$U$, then, for any edge $\{u, v\}$ in $G \setminus U$, $u$ and $v$ have the same neighborhood in $G \setminus U$.

The high-level idea of the algorithm is that we iteratively guess (by trying all possible combinations), for each vertex in the modulator to cluster $U$, whether it lies on the desired shortest path (we say it belongs to the set $L$), or it is at distance 1 or 2 from the shortest path (it belongs to the set $\mathcal{R}_1$ or $\mathcal{R}_2$, respectively), or at an even further distance.
Then, we try to find a shortest path such that all vertices from $L$ lie on it, and all vertices in $\mathcal{R}_1, \mathcal{R}_2$ have the respective distance from the path.
Finding such a path is reduced to solving the CSC problem presented in \autoref{sec:csc}.
Once we guess the correct combination of these sets, we actually construct the MESP.

First, we discuss some properties of graphs having the desired path. 

\begin{lemma}
\label{th:distance-to-cluster--contains-U}
Let $G$ be a graph with modulator to cluster graph $U$ and let $P$ be a shortest path with $\ecc_G(P) = k$. Then, there exists a shortest path $P'$ such that it contains at least one vertex from $U$ and $\ecc_G(P') \leq k$.
\end{lemma}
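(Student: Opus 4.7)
My plan is a case analysis on whether $P$ already meets $U$. If $V(P) \cap U \neq \emptyset$, take $P' := P$ and we are done. Otherwise $V(P) \subseteq V \setminus U$, and since $G \setminus U$ is a vertex-disjoint union of cliques while $P$ is connected, $V(P)$ lies entirely in a single clique $K$ of $G \setminus U$; since any two vertices of $K$ are adjacent, this forces $|V(P)| \leq 2$. Moreover, as $G$ is connected and $U \neq \emptyset$ (the trivial distance-$0$ case being excluded), some vertex of $K$ must have a neighbor in $U$. I will use this to attach a vertex of $U$ to $P$ while keeping shortest-path-ness and controlling eccentricity.

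For $P = (v)$ I would distinguish whether $v$ itself has a neighbor $u \in U$: if so, take $P' := (v, u)$; if not, pick any $v' \in K \setminus \{v\}$ with a neighbor $u \in U$ (which exists by the connectivity observation) and set $P' := (v, v', u)$, a shortest $v$-$u$-path of length $2$ because $v$ and $u$ are non-adjacent. In either subcase $v \in V(P')$ gives $\ecc_G(P') \leq \ecc_G(\{v\}) = k$.

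For $P = (v_1, v_2)$, if some $u \in U$ is adjacent to exactly one endpoint, say to $v_1$ but not $v_2$, I would take $P' := (u, v_1, v_2)$, a shortest $u$-$v_2$-path with $V(P') \supseteq V(P)$, hence $\ecc_G(P') \leq k$. Otherwise every $u \in U$ is adjacent to both or to neither of $v_1, v_2$; in this symmetric subcase the crucial step is to prove $\ecc_G(\{v_1\}) \leq k$. For any $w$ with $d_G(w, v_1) > k$, we have $d_G(w, v_2) \leq k$ (from $\ecc_G(P) \leq k$), and the vertex $x$ immediately preceding $v_2$ on a shortest $w$-$v_2$-path is a neighbor of $v_2$ lying in $K$ or in $U$. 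In the first case $x$ is adjacent to $v_1$ by the clique property of $K$; in the second case $x$ is adjacent to $v_2$ and so, by the symmetry hypothesis, also to $v_1$. Either way $d_G(w, v_1) \leq d_G(w, x) + 1 = d_G(w, v_2) \leq k$, contradicting $d_G(w, v_1) > k$. Given $\ecc_G(\{v_1\}) \leq k$, any shortest path through $v_1$ containing a vertex of $U$ suffices: if some $u \in U$ is adjacent to $v_1$, take $P' := (v_1, u)$; otherwise no vertex of $U$ is adjacent to $v_1$ or $v_2$, so connectivity forces some $v' \in K \setminus \{v_1, v_2\}$ with a neighbor $u \in U$, and $P' := (v_1, v', u)$ is a shortest $v_1$-$u$-path of length $2$.

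I expect the main obstacle to be the symmetric subcase of the $|V(P)| = 2$ case: the naive ``insert $u$ between $v_1$ and $v_2$'' does not give a shortest path when $u$ is adjacent to both endpoints, so the argument must instead reduce the eccentricity analysis of $P'$ to that of the single endpoint $v_1$, leveraging both the clique structure of $K$ and the homogeneity of $U$ on $\{v_1, v_2\}$.
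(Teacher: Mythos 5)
Your proof is correct and follows essentially the same route as the paper's: reduce to $\|V(P)\| \le 2$ via the clique structure of $G \setminus U$, then case-analyze how the vertices of $U$ attach to the endpoints of $P$. Yours is in fact the more careful version---it also handles the single-vertex path and the subcase where no vertex of $U$ is adjacent to any endpoint of $P$ (requiring the length-$2$ detour $(v_1, v', u)$ through another clique vertex), situations that the paper's dichotomy (``some $w\in U$ adjacent to exactly one endpoint'' versus ``all of $U$ adjacent to both'') silently skips, and it spells out the eccentricity bound $\ecc_G(\{v_1\})\le k$ that the paper leaves implicit.
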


\begin{proof}
Suppose that $P$ only contains vertices from $V= V(G) \setminus U$. All these vertices form a clique, so the length of $P$ is at most 1. Let $P = (u, v)$. If there is some vertex $w \in U$ such that it is a neighbor of exactly one endpoint of $P$, then either $P' = (u, v, w)$ or $P' = (w, u, v)$ is the sought path. If all vertices in $U$ are neighbors of both $u$ and $v$, then $P' = (v, w)$ for any $w \in U$ is the sought path.
\end{proof}

\begin{definition}
\label{def:distance-to-cluster--letters}
Let $G$ be a graph with a modulator to cluster graph~$U$. Let $P$ be a shortest path in $G$ with $\ecc_G(P) \leq k$ and $U\cap P \neq \emptyset$.
We denote $L^P = P \cap U$ and $\bm{\pi}^P = (\pi_1^P, \dots, \pi_{\|L^P\|}^P)$ the permutation/order of vertices from~$L^P$ in which they appear on the path~$P$.
We denote $\mathcal{R}^P_i = \{u \in U \mid d_G(u, P) = i\}$ the set of vertices in $U$ that are at distance $i$ from $P$, for $i \in \{1, 2\}$.
\end{definition}

Let $V= V(G) \setminus U$. Since $G[V]$ is a disjoint union of cliques, and for every $i \in \{1, \dots, \|L^P\| - 1\}$, all vertices that are between $\pi^P_i$ and $\pi^P_{i+1}$ on $P$ are from~$V$, we have $d_G(\pi^P_i, \pi^P_{i+1}) \leq~3$, as otherwise $P$ would not be a shortest path.

Let $\bm{\pi} = (\pi_1, \dots, \pi_{\|\bm{\pi}\|})$ be a candidate (guess) on the value of $\bm{\pi}^P$.
Intuitively, if we had the correct values of $\bm{\pi}=\bm{\pi}^P$, we would only need to select the (at most two) vertices between each $\pi_i, \pi_{i+1}$. 

To help us refer to those pairs $\pi_i, \pi_{i+1}$ between which we still need to choose some vertices we  denote $\bm{h}_{\pi} = (h_1, \dots, h_\ell)$ the increasing sequence of all indices~$i$ such that $\{\pi_i, \pi_{i+1}\} \notin E$.
For every $i \notin \bm{h}_{\pi}$, we have $\{\pi_i, \pi_{i+1}\} \in E(G)$ and, thus, there is no vertex between $\pi_i$ and $\pi_{i+1}$ on $P$.
For every $h_i \in \bm{h}_{\pi}$:
If $d_G(\pi_{h_i}, \pi_{h_i+1}) = 2$, then there is one vertex on $P$ between $\pi_{h_i}$ and $\pi_{h_i+1}$, and it is from $V$.
If $d_G(\pi_{h_i}, \pi_{h_i+1}) = 3$, then there are two vertices from $V$ on $P$ between $\pi_{h_i}$ and $\pi_{h_i+1}$. 

\begin{definition}
\label{def:distance-to-cluster--candidate-set}
We define the set $C_{h_i}$ of candidate vertices between $\pi_{h_i}$ and $\pi_{h_i+1}$ for each $h_i \in \bm{h}_\pi$.
\[C_{h_i} = \begin{cases}
    \Big\{(u, u) \in V^2 \mid \big\{\{\pi_{h_i}, u\}, \{u, \pi_{h_i+1}\}\big\} \subseteq E\Big\}     & \text{if } d_G(\pi_{h_i}, \pi_{h_i+1}) = 2\\
    \Big\{(u, v) \in V^2 \mid \big\{\left\{\pi_{h_i}, u\right\}, \{u, v\}, \{v, \pi_{h_i+1}\}\big\} \subseteq E\Big\}       & \text{if } d_G(\pi_{h_i}, \pi_{h_i+1}) = 3\\
    \ \emptyset       & \text{otherwise}
\end{cases}\]
\end{definition}

For $h_i \in \bm{h}_\pi$ with $d_G(\pi_{h_i}, \pi_{h_i+1}) = 2$, the set $C_{h_i}$ contains pairs of the same vertices $(u, u)$. To avoid adding some vertex into a path twice, we define a function $\mu$ which maps a pair of two elements to a sequence of length 1 or 2:
\[\mu(u, v) = \begin{cases}
    (u)     & \text{if } u = v,\\
    (u, v)  & \text{if } u \neq v.
\end{cases}\]

To solve MESP, we need to choose exactly one pair from each of $C_{h_1}, \dots, C_{h_\ell}$.
Later, we show that the problem of choosing these pairs is an instance of CSC.

First, we define a function $\delta^P: U \cup V \rightarrow \mathbb{N}$ that will help us prove that the path constructed from the CSC solution will have a small eccentricity:
\[\delta^P(u) = \min \big\{d_G(u, L^P),\ d_G(u, \mathcal{R}^P_1) + 1,\ d_G(u, \mathcal{R}^P_2) + 2\big\}.\]

\begin{lemma}
\label{th:distance-to-cluster--e-eq-dst}
Function $\delta^P$ is a good estimate of the distance from $P$, meaning that:
\begin{enumerate}
    \item $\delta^P(u) = d_G(u, P)$ for every $u \in U$, and
    \item $\delta^P(u) = d_G\big(u, P \setminus (N_G[u] \cap V)\big)$ for every $u \in V$.
\end{enumerate}
\end{lemma}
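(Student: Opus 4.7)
The plan is to establish each of the two claims via matching upper and lower bounds on $\delta^P(u)$, using the clique structure of $G \setminus U$ together with the fact that $P$ is a shortest path.

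For claim~(1) with $u \in U$, the bound $\delta^P(u) \ge d_G(u, P)$ follows from three triangle inequalities: $L^P \subseteq P$ gives $d_G(u, L^P) \ge d_G(u, P)$, and for $i \in \{1,2\}$ every $r \in \mathcal{R}^P_i$ satisfies $d_G(u,P) \le d_G(u,r) + i$. For the upper bound set $d := d_G(u, P)$; the cases $d \in \{0, 1, 2\}$ are immediate since $u$ itself lies in $L^P$, $\mathcal{R}^P_1$, or $\mathcal{R}^P_2$. For $d \ge 3$ I take a shortest $u$-to-$P$ path $u = q_0, q_1, \dots, q_d = p$ (so $d_G(q_i, P) = d - i$) and inspect $q_{d-1}, q_{d-2}$. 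Since any two adjacent vertices of $V := V(G) \setminus U$ lie in a common clique of $G \setminus U$, one of three situations occurs: either $q_{d-1} \in U$ (so $q_{d-1} \in \mathcal{R}^P_1$), or $q_{d-2} \in U$ (so $q_{d-2} \in \mathcal{R}^P_2$), or both $q_{d-1}, q_{d-2}$ share a clique of $V$, which forces $p \in U$ (otherwise $p$ would be a clique-mate of $q_{d-2}$ and adjacent to it, contradicting $d_G(q_{d-2}, P) = 2$) and hence $p \in L^P$. In each case one term of $\delta^P(u)$ is at most $d$.

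For claim~(2), with $u \in V$ and $K := N_G[u] \cap V$ the clique of $u$ in $G \setminus U$, set $d' := d_G(u, P \setminus K)$. For $\delta^P(u) \le d'$ I take a shortest $u$-to-$(P \setminus K)$ path $u = w_0, \dots, w_{d'}$ and let $j \ge 1$ be the smallest index with $w_j \notin K$. Since a $K$-vertex has no neighbours in $V \setminus K$, the vertex $w_j$ must lie in $U$; and since $K$ is a clique, $w_0, \dots, w_{j-1}$ is a shortest path entirely inside the clique, forcing $j \in \{1, 2\}$. Now $d_G(w_j, P) \le d' - j$, so applying part~(1) to $w_j$ yields $\delta^P(w_j) \le d' - j$; applying the triangle inequality $d_G(u, \cdot) \le d_G(u, w_j) + d_G(w_j, \cdot)$ to each of $L^P$, $\mathcal{R}^P_1$, $\mathcal{R}^P_2$ lifts this to $\delta^P(u) \le j + \delta^P(w_j) \le d'$.

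For $\delta^P(u) \ge d'$ I need each term of $\delta^P(u)$ to be at least $d'$. The bound $d_G(u, L^P) \ge d'$ is immediate since $L^P \subseteq U$ and so $L^P \subseteq P \setminus K$. For $d_G(u, \mathcal{R}^P_i) + i \ge d'$ with $i \in \{1, 2\}$, take $r \in \mathcal{R}^P_i$ and a witness $p \in P$ with $d_G(r, p) = i$. If $p \notin K$, then $p \in P \setminus K$ and the triangle inequality gives $d' \le d_G(u, p) \le d_G(u, r) + i$ directly. The delicate subcase is $p \in K \cap P$ --- this is the main obstacle of the proof. Here I argue structurally that any $P$-neighbour of $p$ in $V$ must lie in $K$ (common-clique argument), and that stepping one more vertex along $P$ still in $V$ would force that vertex into $K$ as well, making it a clique-mate of $p$ and contradicting that the length-$2$ subpath of $P$ through it is shortest. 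Since $L^P \ne \emptyset$ by the hypothesis of \autoref{def:distance-to-cluster--letters}, this yields a $U$-vertex on $P$ within two $P$-steps of $p$, giving a vertex of $L^P \subseteq P \setminus K$ at distance at most $2$ from $u$ via the clique edge. Hence $d' \le 2$, and combined with $d_G(u, r) \ge 1$ (as $u \in V$ and $r \in U$) this gives $d_G(u, r) + i \ge 2 \ge d'$.
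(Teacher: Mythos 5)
Your proof is correct and rests on the same structural facts as the paper's own argument: the triangle inequality for one direction, and for the other the observation that adjacent vertices of $V$ share a clique, so that any shortest path toward $P$ must leave $N_G[u]\cap V$ within two steps through a vertex of $U$ that then lands in $L^P$, $\mathcal{R}^P_1$, or $\mathcal{R}^P_2$. The only organizational differences are that you obtain the bound for $u\in V$ by reducing to the case $u\in U$ via the exit vertex $w_j$, and that you spell out in full the inequality $d_G\big(u, P\setminus(N_G[u]\cap V)\big)\le \delta^P(u)$ (your ``delicate subcase'' $p\in K\cap P$), which the paper compresses into a single ``Clearly''.
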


\toappendix{
\begin{proof}
Clearly, $d_G(u, P) \leq d_G\big(u, P \setminus (N_G[u] \cap V)\big) \leq \delta^P(u)$.

Let $z$ be the nearest vertex to $u$ on $P$ and $Q$ be the shortest path from $u$ to $z$.
If there are any vertices from $U$ on~$Q$, let $x$ be the last vertex from $U$ on~$Q$.
We know that $d_G(x, z) \leq 2$ because $Q$ is a shortest path and all vertices connected in $G[V]$ form a clique.
If $x = z$, then $x \in L^P$. If $d_G(x, z) = 1$, then $x \in \mathcal{R}^P_1$.
If $d_G(x, z) = 2$, then $x \in \mathcal{R}^P_2$.
Hence, $\delta^P(u) \leq d_G(u, z) = d_G(u, P)$.
If~$Q$ consists only of vertices from $V$, let $s$ be the nearest vertex to $u$ such that $s \in P \setminus (N_G[u] \cap V)$.
Clearly, $s \in L^P$ and $\delta^P(u) \leq d_G(u, s) = d_G\big(u, P \setminus (N_G[u] \cap V)\big)$.
\end{proof}
}

Now we show how to choose optimal vertices from each $C_i$ by solving CSC.

\begin{lemma}
\label{th:distance-to-cluster--path-from-csc}
Suppose that $P$ is a shortest path in $G$ with $\ecc_G(P) \leq k$, both endpoints of $P$ are in $U$, and we have the corresponding values of $L^P, \pi^P, \mathcal{R}^P_1, \mathcal{R}^P_2$ as described in \autoref{def:distance-to-cluster--letters}. Let $\bm{h}_{\pi^P} = (h_1, \dots, h_\ell)$ and $(s_{h_1}, \dots, s_{h_\ell})$ be a solution of the CSC instance with requirements $\mathcal{R}^P = \mathcal{R}^P_1 \cup \mathcal{R}^P_2$, sets of candidates $\mathcal{C} = C_{h_1} \cup \dots \cup C_{h_\ell}$, and function $\Psi(u, v) = N_G(u) \cup N_G(v) \cup \Big(\big(N_G^2[u] \cup N_G^2[v]\big) \cap \mathcal{R}^P_2\Big)$. 
Then 
\begin{align*}
    P' =    & \ (\pi^P_1, \dots, \pi^P_{h_1}) \frown \mu(s_{h_1}) \frown (\pi^P_{h_1+1}, \dots, \pi^P_{h_2}) \\
    \dots   
    \frown  & \ \mu(s_{h_i}) \frown (\pi^P_{h_i+1}, \dots, \pi^P_{h_{i+1}}) \frown \mu(s_{h_{i+1}}) \\
    \dots   
    \frown  & \ (\pi^P_{h_{\ell-1}+1}, \dots, \pi^P_{h_\ell}) \frown \mu(s_{h_\ell}) \frown (\pi^P_{h_\ell+1}, \dots, \pi^P_{\|L^P\|})
\end{align*}
is a shortest path and $\ecc_G(P') \leq \max \{2, k\}$.
\end{lemma}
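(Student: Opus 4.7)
The plan is to verify two claims separately: that $P'$ is a shortest path in $G$, and that $\ecc_G(P') \leq \max\{2, k\}$. The first claim is the easier one. By the definition of the candidate sets $C_{h_i}$ in \autoref{def:distance-to-cluster--candidate-set}, every chosen pair $s_{h_i}$ contributes a valid path segment of length $d_G(\pi^P_{h_i}, \pi^P_{h_i+1})$ between $\pi^P_{h_i}$ and $\pi^P_{h_i+1}$. Segments corresponding to indices $i \notin \bm{h}_{\pi^P}$ are single edges shared with $P$. Hence $P'$ is a well-defined walk, has the same endpoints as $P$, and has the same length as $P$; since $P$ is a shortest path between those endpoints, so is $P'$.

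For the eccentricity bound, I will bound $d_G(u, P')$ by $\delta^P(u)$ (or at most $2$) for every $u \in V(G)$, using \autoref{th:distance-to-cluster--e-eq-dst}. The central observation is that $L^P \subseteq V(P')$, so distances to $L^P$ are preserved. Furthermore, because $(s_{h_1}, \dots, s_{h_\ell})$ is a solution of the CSC instance, every $v \in \mathcal{R}^P_1 \cup \mathcal{R}^P_2$ lies in some $\Psi(s_{h_i})$. For $v \in \mathcal{R}^P_1$, the third summand of $\Psi$ cannot contribute, so $v \in N_G(u') \cup N_G(v')$ for $(u', v') = s_{h_i}$, giving $d_G(v, P') \leq 1$. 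For $v \in \mathcal{R}^P_2$, in the worst case the third summand gives $d_G(v, P') \leq 2$.

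For $u \in U$, I will use the definition of $\delta^P(u)$ and a triangle-inequality argument through the minimizing set: if $\delta^P(u) = d_G(u, L^P)$, then $d_G(u, P') \leq \delta^P(u)$ since $L^P \subseteq P'$; if $\delta^P(u) = d_G(u, \mathcal{R}^P_j) + j$ for $j \in \{1, 2\}$, then combining with $d_G(v, P') \leq j$ for the witness $v$ again yields $d_G(u, P') \leq \delta^P(u)$. By the first part of \autoref{th:distance-to-cluster--e-eq-dst}, $\delta^P(u) = d_G(u, P) \leq k$, so $d_G(u, P') \leq k$ in this case. For $u \in V$, the same triangle-inequality argument gives $d_G(u, P') \leq \delta^P(u)$; the remaining task is to show $\delta^P(u) \leq \max\{2, k\}$. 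Using the second part of \autoref{th:distance-to-cluster--e-eq-dst}, if the vertex of $P$ realizing $d_G(u, P)$ lies outside $N_G[u] \cap V$, i.e., in $L^P$, then $\delta^P(u) = d_G(u, P) \leq k$. Otherwise $u$ has a $V$-neighbor on $P$, forcing $u$'s clique to contain a vertex of $P$ adjacent (along $P$) to a vertex of $L^P$ at distance at most $2$ from $u$, whence $\delta^P(u) \leq 2$.

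The main obstacle is this last subcase: when $u \in V$ reaches $P$ via a clique-mate that is chosen for $P$ but not for $P'$, the short distance $d_G(u, P) = 1$ provides no information about $d_G(u, P')$ directly. Resolving it requires using that any $V$-vertex on $P$ is within two steps of some $L^P$-vertex (because $d_G(\pi^P_i, \pi^P_{i+1}) \leq 3$), so a path of length at most $2$ from $u$ to $L^P \subseteq P'$ always exists. This is precisely the source of the $\max\{2, k\}$ bound, rather than $k$, in the statement.
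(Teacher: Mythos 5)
Your proof is correct and follows essentially the same route as the paper's: bound $d_G(u, P')$ by $\delta^P(u)$ using that the CSC solution places every vertex of $\mathcal{R}^P_i$ within distance $i$ of $P'$ and that $L^P \subseteq P'$, then invoke \autoref{th:distance-to-cluster--e-eq-dst} together with the clique structure of $G \setminus U$ to conclude $\delta^P(u) \leq \max\{2, k\}$. (Your parenthetical ``i.e., in $L^P$'' in the case split for $u \in V$ is slightly inaccurate---the nearest $P$-vertex outside $N_G[u] \cap V$ need not belong to $U$---but this does not affect the argument, since that branch only needs $\delta^P(u) = d_G(u, P) \leq k$.)
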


\toappendix{
\begin{proof}
Clearly, $P'$ is a shortest path.

Thanks to the way we chose $s_{h_1}, \dots s_{h_\ell}$ and from \autoref{th:distance-to-cluster--e-eq-dst} we know that:
\begin{enumerate}
    \item for every $u \in U: d_G(u, P') \leq \delta^P(u) = d_G(u, P)$,
    \item for every $u \in V: d_G(u, P') \leq \delta^P(u) = d_G\big(u, P \setminus (N_G[u] \cap V)\big)$.
\end{enumerate}
If $u \in V$ and $P \cap (N_G[u] \cap V) \neq \emptyset$, then $d_G(u, P) \leq 1$. Because $P$ contains at least one vertex from $U$ and all vertices in $N_G[u] \cap V$ form a clique, we have $d_G\big(u, P \setminus (N_G[u] \cap V)\big) \leq 2$.
\end{proof}
}

Clearly, if $k \geq 2$, then we can use \autoref{th:distance-to-cluster--path-from-csc} to construct a shortest path with eccentricity at most $k$. Now, we discuss the case when $k = 1$.

\begin{observation}
\label{th:distance-to-cluster--ecc-1-e-2}
If $\ecc_G(P) = 1$, then for every $u \in U \cup V$ we have $\delta^P(u) \leq~2$.
\end{observation}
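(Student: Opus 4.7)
The plan is to reduce the claim to a distance statement via \autoref{th:distance-to-cluster--e-eq-dst} and then exploit the fact that $P$ is simultaneously a shortest path and an eccentricity-$1$ dominating set. For $u \in U$, item~(1) of that lemma gives $\delta^P(u) = d_G(u, P) \leq \ecc_G(P) = 1 \leq 2$ immediately. The interesting case is $u \in V$, where item~(2) gives $\delta^P(u) = d_G(u, P \setminus (N_G[u] \cap V))$. Writing $K = N_G[u] \cap V$ for $u$'s own clique in $G \setminus U$, it suffices to exhibit a vertex of $P \setminus K$ within distance $2$ of $u$. Note that $L^P \neq \emptyset$ by the standing assumption $U \cap P \neq \emptyset$, and $L^P \subseteq U$ is disjoint from $K \subseteq V(G) \setminus U$, so $P \setminus K$ is non-empty and contains every vertex of $L^P$.

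Since $\ecc_G(P) = 1$, I pick $v \in P$ with $d_G(u, v) \leq 1$. If $v \notin K$, we are done. Otherwise $v \in V \cap P \cap K$, and the structural tool I use is that $P$, being a shortest path, cannot contain three consecutive vertices of $V$: three such consecutive vertices would lie in a common clique of $G \setminus U$ (consecutive ones are adjacent, hence in the same clique), so the first and the third would also be adjacent and $P$ would not be shortest. Consequently every maximal $V$-run on $P$ has length at most $2$. If any neighbor of $v$ on $P$ lies outside $K$, it lies within distance $2$ of $u$ and we are done.

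The main obstacle is the remaining subcase, where every neighbor of $v$ on $P$ lies in $K$. If $v$ had two such neighbors, they together with $v$ would form three consecutive $V$-vertices on $P$, contradicting the run-length bound; hence $v$ must be an endpoint of $P$ with a unique neighbor $v' \in K$. Because $L^P \neq \emptyset$ and $L^P \cap K = \emptyset$, $P$ has at least three vertices, so $v'$ has a second neighbor $v''$ on $P$; the run-length bound again forces $v'' \notin K$ (otherwise $v, v', v''$ would be three consecutive $V$-vertices). Since $v' \in N_G[u] \cap V$, $v'$ equals or is adjacent to $u$, and therefore $d_G(u, v'') \leq d_G(u, v') + 1 \leq 2$, closing the argument.
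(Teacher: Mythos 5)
Your proof is correct and follows essentially the same route as the paper's: the case $u \in U$ is immediate, and for $u \in V$ both arguments exploit that consecutive $V$-vertices on the shortest path $P$ lie in a common clique of $G \setminus U$ (so $P$ cannot stay inside $u$'s clique for long) together with $U \cap P \neq \emptyset$ to locate a vertex of $P$ outside $N_G[u] \cap V$ within distance $2$ of $u$. The paper states this more tersely by directly bounding $d_G(u, L^P) \leq 2$, while you spell out the endpoint case explicitly; the content is the same.
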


\toappendix{
\begin{proof}
If $u \in U$, then either $u \in L^P$ or $u \in \mathcal{R}^P_1$, and $\delta^P(u) \leq~1$.
For each $u \in V$, let $v \in L^P$ be the nearest vertex to $u$ on $P$ from~$U$. Because all vertices that are connected in $G[V]$ form a clique, we get $d_G(u, v) \leq 2$, and hence $\delta^P(u) \leq 2$.
\end{proof}
}

\begin{corollary}
\label{th:distance-to-cluster--remove-candidates-when-k-1}
If $\ecc_G(P) = 1$, then a path $P'$ with $\ecc_G(P') \leq 1$ can be constructed similarly as in \autoref{th:distance-to-cluster--path-from-csc} with the following modification. For each candidate set $C_i$ which contains some pair $(x, y) \in V^2$ such that there is a neighbor
$z \in V$ of $x$ with $\delta^P(z) = 2$, remove every $(u, v) \in V^2$ such that $z$ is \emph{not} a neighbor of $u$
from $C_i$.
\end{corollary}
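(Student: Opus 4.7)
The plan is to establish the corollary in two steps: first, show that the restricted CSC instance remains feasible by verifying that the pair chosen by the original path $P$ survives every restriction; and second, show that any solution of the restricted instance produces a path $P'$ with $\ecc_G(P') \leq 1$ rather than merely $2$.

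For feasibility, fix an index $h_i$ and any vertex $z \in V$ that triggers the restriction on $C_{h_i}$, witnessed by a pair $(x, y) \in C_{h_i}$ with $x \in N_G(z)$. Let $(u^*, v^*)$ denote the pair chosen by $P$ from $C_{h_i}$; I would argue that $u^*$ is also a neighbor of $z$. The condition $\delta^P(z) = 2$ forces $d_G(z, L^P) \geq 2$, so $z$ has no neighbor in $L^P$; combined with $\ecc_G(P) = 1$, the unique way $P$ covers $z$ is via a vertex of $V$ on $P$, necessarily in $z$'s clique $K$ of $G[V \setminus U]$. Since every edge in $G \setminus U$ lies inside a single clique with a common neighborhood in $V \setminus U$, and since both $x$ and $u^*$ are adjacent to $\pi_{h_i}$, I would argue that $h_i$ is the position at which $z$'s covering neighbor on $P$ sits, hence $u^* \in K = N_G[z] \cap V$, and $(u^*, v^*)$ is kept.

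For the eccentricity bound, I would revisit the proof of \autoref{th:distance-to-cluster--path-from-csc}, which already yields $\ecc_G(P') \leq 2$, with the only vertices potentially at distance $2$ being those $z \in V$ whose nearest vertex on $P$ lay in $N_G[z] \cap V$ and was replaced in the substitution. By \autoref{th:distance-to-cluster--ecc-1-e-2}, in the case $\ecc_G(P) = 1$ these are exactly the vertices with $\delta^P(z) = 2$. The restriction guarantees that whenever such a $z$ had a covering neighbor between $\pi_{h_i}$ and $\pi_{h_i+1}$ on $P$, the replacement chosen for $C_{h_i}$ still lies in $z$'s clique and is therefore a neighbor of $z$, so $d_G(z, P') \leq 1$. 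Combined with the bounds $d_G(u, P') \leq \delta^P(u) \leq 1$ for $u \in U$ that are already enforced by the CSC function $\Psi$, this yields $\ecc_G(P') \leq 1$.

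The main obstacle will be the bookkeeping when several distinct witnesses $z$ trigger restrictions on the same $C_{h_i}$: each imposes its own condition on the first coordinate $u$, and one must check that the pair $(u^*, v^*)$ from $P$ satisfies all of them simultaneously. This should follow from the fact that $P$ covers every such $z$ at distance $1$, together with the observation that all vertices in the same clique of $G[V \setminus U]$ have identical neighborhoods in $V \setminus U$; but care is required to trace which $z$'s are resolved at which position $h_i$ and to confirm that no two incompatible clique memberships are demanded at the same position.
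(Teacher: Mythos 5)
Your second half (the eccentricity bound) is sound and is, in essence, all that the paper's own proof contains: the published argument is a two-line soundness claim that a removed pair, if selected, would leave $z$ at distance $2$ from $P'$. The genuine gap is in your feasibility half. The pivotal step --- that the index $h_i$ at which a witness $(x,y)$ with $x\in N_G(z)$ triggers the restriction coincides with the index at which $P$'s own covering vertex for $z$ sits, so that $u^*\in N_G[z]\cap V$ and $(u^*,v^*)$ survives --- is justified only by the remark that both $x$ and $u^*$ are adjacent to $\pi_{h_i}$, and that is a non sequitur. A vertex of $z$'s clique can be adjacent to both $\pi_{h_i}$ and $\pi_{h_i+1}$ (hence appear as a first coordinate of a pair in $C_{h_i}$) while $P$ actually covers $z$ through a different clique-mate lying in a nearby segment; the shortest-path property of $P$ only forbids members of one clique from serving as candidate first coordinates at positions that are sufficiently far apart, not at adjacent ones. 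In such a configuration the restriction triggered by $z$ at $C_{h_i}$ deletes exactly the pair that $P$ uses there, and the claim ``the pair chosen by $P$ survives every restriction'' fails as argued.

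The difficulty you defer to your final paragraph --- several witnesses imposing conditions on the same $C_{h_i}$, possibly demanding first coordinates from disjoint cliques --- is the same unresolved point and is not mere bookkeeping: nothing you establish rules out two witnesses from distinct cliques both triggering restrictions on one $C_{h_i}$, which would empty that candidate set even though $P$ exists with the guessed $(L^P,\bm\pi^P,\mathcal{R}^P)$. Until the trigger position is tied to the coverage position (for instance by an explicit distance argument along $P$, or by reinterpreting the modification so that only those $z$ whose covering vertex of $P$ lies in the $h_i$-th segment restrict $C_{h_i}$), the feasibility direction of \autoref{th:distance-to-cluster--remove-candidates-when-k-1} is not proven in your write-up. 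Note that the paper's proof does not address feasibility at all, so you are attempting strictly more than the authors wrote down; but as it stands the attempt identifies the crux without closing it.
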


\toappendix{
\begin{proof}
For any $C_i$, if any of the removed pairs were selected into $P'$, then the distance of $z$ to $P'$ would be $d_G(z, P') = 2$ and therefore $\ecc_G(P') > 1$.
\end{proof}
}

We have shown how to construct a shortest path with eccentricity at most~$k$ by solving the CSC problem, even if $k = 1$. Finally, we observe that such a path can be constructed even if one or both of its endpoints are in $V$.

\begin{lemma}
\label{th:distance-to-cluster--path-from-csc--one-in-v}
If $P$ has an endpoint $s \in V$, its neighbor $t \in P$ might also be in~$V$.
Let $P = (s, t, \dots)$.
We may obtain a path~$P'$ with $\ecc_G(P') \leq k$ by removing $\Psi(s,s)$ (and $\Psi(t,t)$ if $t \in V$) from $\mathcal{R}^P$, finding $s_{h_1}, \dots, s_{h_\ell}$ by solving the CSC, and prepending $s$ (and $t$ if $t \in V$) to $P'$.
\end{lemma}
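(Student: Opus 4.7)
The plan is to reduce to \autoref{th:distance-to-cluster--path-from-csc} by peeling the $V$-endpoint(s) off of $P$, running the CSC construction on the inner sub-path (whose first vertex then lies in $U$), and prepending the peeled vertices back. First, if $t \in V$, I would argue that the third vertex of $P$ must lie in $U$: otherwise $s$, $t$, and that third vertex would sit in a common clique of $G \setminus U$ and yield a shortcut contradicting $P$ being a shortest path. Consequently, deleting $s$ (and $t$ if $t \in V$) from $P$ produces a sub-path $Q$ starting at $\pi_1^P \in U$, so the construction of \autoref{th:distance-to-cluster--path-from-csc} applies to $Q$; if the opposite endpoint of $P$ also lies in $V$, the same peeling is applied symmetrically from that end.

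Next I would instantiate the CSC using the same candidate sets $C_{h_i}$ and function $\Psi$ as in \autoref{th:distance-to-cluster--path-from-csc}, but with the reduced requirement set $\mathcal{R}' = \mathcal{R}^P \setminus \Psi(s,s)$ (further minus $\Psi(t,t)$ when $t \in V$). Feasibility of this instance is immediate: the in-between vertices of the original $P$ form a CSC solution for the full $\mathcal{R}^P$ by the proof of \autoref{th:distance-to-cluster--path-from-csc}, so they certainly cover the smaller $\mathcal{R}'$. Let $(s_{h_1}, \dots, s_{h_\ell})$ be the returned CSC solution; assemble the inner path $P'_Q$ exactly as in \autoref{th:distance-to-cluster--path-from-csc} and form $P'$ by prepending $s$ (and $t$ if $t \in V$) to $P'_Q$. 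Since $P'$ has the same endpoints and length as $P$, it is again a shortest path.

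The verification that $\ecc_G(P') \leq k$ splits by vertex type. For $u \in L^P$ or $u \in \mathcal{R}'$ the bound is exactly as in \autoref{th:distance-to-cluster--path-from-csc}; for $u \in U$ removed via $\Psi(s,s)$, the definition of $\Psi$ gives either $u \in N_G(s)$ (whence $d_G(u, P') \leq 1$) or $u \in \mathcal{R}^P_2 \cap N_G^2[s]$ (whence $d_G(u, P') \leq 2$), and the case of $\Psi(t,t)$ is analogous. For $u \in V$ the analysis of the previous lemma carries over because $P'$ still contains the $U$-vertex $\pi_1^P$ and the cluster structure of $G \setminus U$ is unchanged. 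The $k = 1$ subcase is handled by additionally trimming the candidate sets via \autoref{th:distance-to-cluster--remove-candidates-when-k-1}. The main obstacle is confirming that each requirement removed from $\mathcal{R}^P$ is genuinely covered by the prepended vertex at the correct distance, which follows from the shape of $\Psi$: it lists distance-$2$ targets only inside $\mathcal{R}^P_2$, matching exactly what $s$ and $t$ can absorb when placed on $P'$.
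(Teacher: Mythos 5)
Your proposal is correct and follows essentially the same route as the paper's (much terser) proof: remove $\Psi(s,s)$ and $\Psi(t,t)$ from the requirements, solve the CSC on the inner sub-path, prepend the peeled vertices, and observe that every removed requirement in $\mathcal{R}^P_i$ ends up at distance at most $i$ from $s$ or $t$ so the eccentricity argument of \autoref{th:distance-to-cluster--path-from-csc} goes through unchanged. The extra details you supply (the third vertex of $P$ lying in $U$, the explicit case split on $\Psi(s,s)$) are correct elaborations of what the paper leaves implicit, not a different argument.
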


\toappendix{
\begin{proof}
All vertices in $L^P$ are on $P'$, and, thanks to the way we chose $s_{h_1}, \dots, s_{h_\ell}$, all vertices in $\mathcal{R}^P_i$ are at distance $i$ from either $s$, $t$, or one of $s_{h_1}, \dots, s_{h_\ell}$. Thus, the same argument as in the proof of \autoref{th:distance-to-cluster--path-from-csc} applies.
\end{proof}
}

MESP can be solved by trying all possible combinations of $(L, s, \mathcal{R}): L \subseteq U, s \in L, \mathcal{R} = \mathcal{R}_1 \cup \mathcal{R}_2 \subseteq (U \setminus L)$.
For each combination, do:
\begin{enumerate}
    \item Find a permutation $\bm\pi$ of $L$, such that $\pi_1 = s$ and $\sum_{i=1}^{\|L\| - 1}d_G(\pi_i, \pi_{i+1}) = d_G(\pi_1, \pi_{\|L\|})$.
    If it does not exist, continue with the next combination.
    \item For each $h_i \in \bm{h}_{\pi}$: create set $C_{h_i}$ according to \autoref{def:distance-to-cluster--candidate-set} and \autoref{th:distance-to-cluster--remove-candidates-when-k-1}.
    \item Solve the CSC instance as described in \autoref{th:distance-to-cluster--path-from-csc}.
    \item If the CSC instance has a solution, construct path $P'$ as in \autoref{th:distance-to-cluster--path-from-csc}.
    \item Check if $\ecc_G(P') \leq k$. If yes, return $P'$. If not, try the same after prepending and/or appending all combinations of single vertices and of pairs of vertices to $P'$ (see \autoref{th:distance-to-cluster--path-from-csc--one-in-v}).
\end{enumerate}

Note that by \autoref{th:unique-permutation}, there is at most one such permutation $\bm\pi$ in step 1.

\begin{theorem}
\label{th:distance-to-cluster--complexity}
In a graph with distance to cluster graph $p$, MESP can be solved in $\mathcal{O}(2^{4p}p \cdot n^6)$ time.
\end{theorem}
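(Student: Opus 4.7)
The plan is to establish correctness by assembling the structural lemmas of this subsection and then to analyze the running time of the five-step algorithm described just before the theorem. For correctness, Lemma~\ref{th:distance-to-cluster--contains-U} lets us restrict attention to shortest paths containing at least one modulator vertex; by enumerating all tuples $(L, s, \mathcal{R}_1, \mathcal{R}_2)$ with $L \subseteq U$, $s \in L$, and disjoint $\mathcal{R}_1, \mathcal{R}_2 \subseteq U \setminus L$, the algorithm eventually guesses the values $L^P, \pi_1^P, \mathcal{R}_1^P, \mathcal{R}_2^P$ of some optimal path $P$. Lemma~\ref{th:unique-permutation} then recovers $\bm{\pi}^P$ uniquely from the guess, and Lemmas~\ref{th:distance-to-cluster--path-from-csc} and~\ref{th:distance-to-cluster--path-from-csc--one-in-v}, together with Corollary~\ref{th:distance-to-cluster--remove-candidates-when-k-1} for the $k=1$ case, guarantee that the CSC instance built from the guess produces a valid shortest path of eccentricity at most $k$.

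For the running time I would first compute the modulator $U$ in $\mathcal{O}(1.9102^p(n+m))$ time using~\cite{BoralCKP16} and the full distance matrix of $G$ in $\mathcal{O}(n^3)$ time as one-time preprocessing. Counting configurations: assigning each vertex of $U$ to one of the four classes $\{L, \mathcal{R}_1, \mathcal{R}_2, \textrm{far}\}$ gives $4^p = 2^{2p}$ combinations, and picking $s \in L$ contributes a factor of $p$, yielding $\mathcal{O}(2^{2p} \cdot p)$ outer iterations. Within one iteration, step 1 runs in $\mathcal{O}(p \log p)$ by Lemma~\ref{th:unique-permutation}; step 2 builds at most $p$ candidate sets $C_{h_i}$, each of size $\mathcal{O}(n^2)$; step 3 invokes the CSC solver of Theorem~\ref{lem:csc} with $\|\mathcal{R}^P\| \leq p$ and $\|\mathcal{C}\| = \mathcal{O}(p \cdot n^2)$, costing $\mathcal{O}(2^{2p} \cdot p^2 \cdot n^2)$; step 4 is linear; and step 5 loops over $\mathcal{O}(n^4)$ ways to prepend and append up to two vertices from $V$ as prescribed by Lemma~\ref{th:distance-to-cluster--path-from-csc--one-in-v}, with each eccentricity check requiring $\mathcal{O}(n^2)$ via the precomputed distance matrix, for a total of $\mathcal{O}(n^6)$ per iteration. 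Multiplying out the outer factor gives $\mathcal{O}(2^{2p} \cdot p \cdot (2^{2p} p^2 n^2 + n^6)) = \mathcal{O}(2^{4p} \cdot p \cdot n^6)$ after absorbing lower-order terms.

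The main obstacle I anticipate is keeping the exponential base at $2^{4p}$ rather than the $8^p$ one would naively get by treating $L, \mathcal{R}_1, \mathcal{R}_2$ as three independent subsets of $U$; the four-class view of each modulator vertex is what makes the claimed exponent achievable, and it also controls $\|\mathcal{R}^P\|$ inside the CSC call. The other subtle point is that step 5 has to be reconciled with Lemma~\ref{th:distance-to-cluster--path-from-csc--one-in-v}, which formally requires modifying the CSC requirements whenever an endpoint lies in $V$; the $\mathcal{O}(n^4)$ enumeration of prepend/append patterns must be argued to cover all optimal paths with endpoints outside $U$ while any inner CSC re-solve is either avoided by reusing the already computed dynamic-programming table or absorbed into the $n^6$ budget so as not to inflate the exponent in $p$.
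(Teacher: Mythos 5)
Your proposal is correct and follows essentially the same route as the paper: enumerate the $4^p \cdot \mathcal{O}(p)$ configurations $(L,s,\mathcal{R}_1,\mathcal{R}_2)$, recover $\bm\pi$ via Lemma~\ref{th:unique-permutation}, solve the resulting CSC instance via Theorem~\ref{lem:csc}, and bound step 5 by $\mathcal{O}(n^4)$ prepend/append combinations. Your constants differ slightly from the paper's in individual steps (e.g., $\|\mathcal{C}\| \le p n^2$ rather than $n^2$, and $\mathcal{O}(n^2)$ rather than $\mathcal{O}(n)$ per eccentricity check), but these are absorbed and yield the same $\mathcal{O}(2^{4p}p\cdot n^6)$ bound.
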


\toappendix{
\begin{proof}
First, we precompute a distance matrix in $\mathcal{O}(n^3)$ time.
There are at most $\mathcal{O}(4^p \cdot n)$ different combinations of $(L, s, \mathcal{R}_1, \mathcal{R}_2)$.
By \autoref{th:unique-permutation}, step 1 can be implemented in $\mathcal{O}(p \log p)$ time.
In step 2, there are at most $\mathcal{O}(p)$ sets $C_{h_i}$ and each of these can be constructed in $\mathcal{O}(n+m)$ time.
In step 3, the CSC instance can be solved by \autoref{alg:constrained-set-cover} in $\mathcal{O}(2^{2p}pn^2)$ time as $\|\mathcal{R}\| \leq p$ and $\|\mathcal{C}\| \leq n^2$.
Step 4 can be implemented in $\mathcal{O}(n)$ time.
In step 5, there are at most $\mathcal{O}(n^4)$ combinations of vertices to prepend/append, and checking the length of the resulting path and its eccentricity can be implemented in $\mathcal{O}(n)$ time using the precomputed distance matrix.
\end{proof}
}

\subsection{Distance to Disjoint Paths}
\label{sec:distance-to-disjoint-paths}
\toappendix{
}
In this subsection, we present an fpt-algorithm for MESP parameterized by the distance to disjoint paths and the desired eccentricity, combined. While the existence of such an fpt-algorithm is implied by the results of \autoref{sec:treewidth}, the treewidth algorithm uses Courcelle's theorem and, as such, is rather of classification nature. In comparison, in this section, we present an explicit algorithm with moderate dependency on the parameters.

The high-level idea of the algorithm is similar to that in \autoref{sec:distance-to-cluster}.
We iteratively guess (by trying all possible combinations), for each vertex in the modulator to disjoint paths $C$, what is the distance to the desired shortest path.
Then, we try to find a shortest path which satisfies all the guessed distance requirements by solving an instance of the CSC problem.
We argue that if these requirements are guessed correctly, the resulting path will indeed be the desired MESP.

We start by discussing some properties of graphs in which a shortest path~$P = (p_1, \dots, p_{\|P\|})$ with $\ecc_G(P) \leq k$ does exist. Assume that $P$ is such a path, fixed for the next few lemmas and definitions.

\begin{definition}
Let $\widehat{C}^P = C \cup \{p_1, p_{\|P\|}\}$.
Let $L^P = P \cap \widehat{C}^P$.
We denote $\bm\pi^P = (\pi^P_1, \dots, \pi^P_{\|L^P\|})$ the permutation/order of vertices from $L^P$ on the path $P$.
We define function $\delta^P(v) = d_G(v, P)$ for every $v \in V$.
\end{definition}

Let $\widehat{C}$, $L$ be candidates for $\widehat{C}^P$, $L^P$, respectively.
Similarly as in \autoref{sec:distance-to-cluster}, the permutation $\bm\pi = (\pi_1, \dots, \pi_{\|L\|})$ of the vertices in $L$ is unique (if it exists), and can be found in polynomial time.
For each consecutive pair of vertices $\pi_i, \pi_{i+1} \in L$, there may be multiple shortest paths connecting them, such that they do not contain any other vertices from $\widehat{C}$. Exactly one of these shortest paths is contained in $P$ for each pair.
We say $\bar\sigma$ is a \emph{candidate segment} if it is a sequence of vertices on some shortest path from $\pi_i$ to $\pi_{i+1}$ excluding the endpoints $\pi_i, \pi_{i+1}$ and $\bar\sigma \cap \widehat{C} = \emptyset$. We define ${\mathcal{S}}(\pi_i, \pi_{i+1})$ as a set of all candidate segments $\bar\sigma$ between $\pi_i$ and $\pi_{i+1}$. We denote $\widetilde{\mathcal{S}} = \bigcup_{i=1}^{\|L\| - 1}{\mathcal{S}}(\pi_i, \pi_{i+1})$ the set of all candidate segments in $G$. 
We say that a candidate segment $\sigma \in \widetilde{\mathcal{S}}$ is a \emph{necessary segment} if
it must be part of any shortest path $P'$ such that $\widehat{C}=\widehat{C}^{P'}$, $L = L^{P'}$, $\bm\pi = \bm{\pi}^{P'}$, and $\ecc_G(P') \le k$.

Intuitively, if we had the correct values of $\bm\pi$, we would only need to select one segment out of each ${\mathcal{S}}(\pi_i, \pi_{i+1})$ for $i \in \{1, \dots, \|L\| - 1\}$, in order to construct the path $P$.
To do so, we need the following function, which estimates the distance from a vertex to the path $P$.

\begin{definition}[estimate distance to $P$]
For a graph $G = (V, E)$, a set of vertices $\widehat{C} \subseteq V$ and a function $\delta: \widehat{C} \rightarrow \mathbb{N}$ we define $d_G^\delta: V \times 2^{\widehat{C}} \rightarrow \mathbb{N}$ as
\[d_G^\delta(v, S) = \min_{s \in S}d_G(v, s) + \delta(s).\]
\end{definition}

\begin{observation}
\label{th:distance-to-disjoint-paths--d-le-de}
If $\widehat{C}= \widehat{C}^P$ and $\delta=\delta^P\|_{\widehat{C}}$ (that is, the restriction of $\delta^P$ to $\widehat{C}$) for some shortest path $P$ in $G$, then for every $v \in V$ we have $d_G(v, P) \leq d_G^\delta(v, \widehat{C})$.
In particular, if $d_G^\delta(v, \widehat{C}) \leq k$, then $d_G(v, P) \leq k$.
\end{observation}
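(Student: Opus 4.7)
The plan is to reduce the claim to a single application of the triangle inequality, since by construction $\delta(s) = d_G(s,P)$ for every $s \in \widehat{C}$. Concretely, I would fix an arbitrary $v \in V$ and an arbitrary $s \in \widehat{C}$, and observe that any $v$-$s$-shortest path of length $d_G(v,s)$ can be extended by a shortest path from $s$ to its nearest vertex on $P$, of length $d_G(s,P) = \delta(s)$. This yields a walk from $v$ to $V(P)$ of length $d_G(v,s) + \delta(s)$, so $d_G(v,P) \leq d_G(v,s) + \delta(s)$.

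Next, I would take the minimum over all $s \in \widehat{C}$ on the right-hand side, which by definition of $d_G^\delta$ gives $d_G(v,P) \leq d_G^\delta(v,\widehat{C})$. Note that the minimum in $d_G^\delta(v,\widehat{C})$ is over a nonempty set, because $\widehat{C} \supseteq \{p_1, p_{\|P\|}\}$ is nonempty by definition of $\widehat{C}^P$, so the inequality is meaningful.

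For the ``in particular'' statement, I would simply chain the two inequalities: if $d_G^\delta(v,\widehat{C}) \leq k$, then $d_G(v,P) \leq d_G^\delta(v,\widehat{C}) \leq k$. The only thing worth remarking on is why $\delta(s) = d_G(s,P)$ on $\widehat{C}$; but this is immediate from the hypothesis $\delta = \delta^P\|_{\widehat{C}}$ together with the definition $\delta^P(v) = d_G(v,P)$. Hence there is no real obstacle here: the observation is a straightforward consequence of the triangle inequality applied along a vertex of $\widehat{C}$ used as a ``relay'' between $v$ and $P$.
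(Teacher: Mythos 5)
Your proof is correct and follows essentially the same route as the paper's: both apply the triangle inequality through a vertex $s \in \widehat{C}$ acting as a relay between $v$ and $P$, using $\delta(s) = d_G(s,P)$. The only cosmetic difference is that you bound the distance for every $s$ and then minimize, whereas the paper picks the minimizing $s$ first; the argument is identical.
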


\toappendix{
\begin{proof}
By definition, for any $v \in V$, there is some $s \in \widehat{C}$ such that $d_G^\delta(v, \widehat{C}) = d_G(v, s) + \delta(s) = d_G(v, s) + d_G(s, P)$ and, from triangle inequality, $d_G(v, P) \leq d_G(v, s) + d_G(s, P)$.
\end{proof}
}

If we had the correct values for the permutation $\bm\pi$ of vertices from $\widehat{C}$ that are on $P$, we would still have to take care of those vertices $v \in V$ with $d_G^e(v, \widehat{C}) > k$, in order to solve  MESP. In particular, we would have to choose a segment from each ${\mathcal{S}}(\pi_i, \pi_{i+1})$ in a way that for every vertex~$v$ with $d_G^e(v, \widehat{C}) > k$, there would be some chosen segment at distance at most~$k$ from~$v$.
We say that a candidate segment $\bar\sigma \in \widetilde{\mathcal{S}}$ \textit{satisfies} $v \in V \setminus \widehat{C}$ if $d_G(v, \bar\sigma) \leq k < d_G^\delta(v, \widehat{C})$.

We continue by showing that the number of vertices $v$ with $d_G^\delta(v, \widehat{C}) > k$ which \emph{do not} lie on $P$ is bounded by the size of $L$.

\begin{lemma}
\label{th:disjoint-paths--sat-size-outside-P}
Let $\bar\sigma \in \widetilde{\mathcal{S}}$ be a candidate segment and $D = \{v \in V \setminus P \mid \bar\sigma$ satisfies $v\}$. Then $\|D\| \leq 2$.
\end{lemma}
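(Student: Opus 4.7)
The plan is to fix an arbitrary $v \in D$ and show its location is forced, up to a binary choice on the two sides of $\bar\sigma$. The starting observation is that by the definition of \emph{satisfies}, $v \in V \setminus \widehat{C}$, so $v$ lives in some path-component of $G \setminus C$. Plugging $s = \pi_i$ into $d_G^\delta(v, \widehat{C}) > k$ and using $\delta^P(\pi_i) = 0$ (since $\pi_i \in P$) yields $d_G(v, \pi_i) > k$; symmetrically $d_G(v, \pi_{i+1}) > k$. So $v$ is far from both $\widehat{C}$-vertices framing $\bar\sigma$.

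Next I would argue that $v$ must lie in the same path-component $\mathcal{P}_\sigma$ of $G \setminus C$ as $\bar\sigma$. Take a shortest $v$-to-$\bar\sigma$ path $R$ of length $d_G(v, \bar\sigma) \leq k$, ending at $u^* \in \bar\sigma$. The key claim is that $R$ avoids $\widehat{C}$ entirely: in the clean case $\bar\sigma \subseteq P$, any $t \in R \cap \widehat{C}$ yields
\[d_G(v, t) + d_G(t, P) \leq d_G(v, t) + d_G(t, u^*) = d_G(v, u^*) \leq k,\]
contradicting $d_G^\delta(v, \widehat{C}) > k$ (the first inequality uses $u^* \in P$, and the equality uses that $t$ lies on the shortest $v$-to-$u^*$ path). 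For general $\bar\sigma$, I would route the same inequality chain through a shortest $v$-to-$P$ path (which exists with length $\leq k$ by $\ecc_G(P) \leq k$), concluding that $v$ shares a component with its nearest $P$-vertex $s^*$, and then verify that $\bar\sigma$ must also lie in that component.

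Once $v \in \mathcal{P}_\sigma$, write $\mathcal{P}_\sigma = (x_1, \dots, x_r)$ and $\bar\sigma = (x_a, \dots, x_b)$, with $\pi_i$ adjacent to $x_a$ and $\pi_{i+1}$ to $x_b$. Suppose $v = x_c$. The case $c \in [a, b]$ (so $v$ sits inside $\bar\sigma \setminus P$) is ruled out: the shortest-path property of $\bar\sigma$ forces $d_G(v, \pi_i) = c - a + 1$ and $d_G(v, \pi_{i+1}) = b - c + 1$, squeezing $c$ into $[a + k, b - k]$; then, since the shortest $v$-to-$P$ path must stay in $\mathcal{P}_\sigma$ while any $P$-segment in $\mathcal{P}_\sigma$ disjoint from $\bar\sigma$ is at path-distance strictly greater than $k$ from $v$, one contradicts $d_G(v, P) \leq k$. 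So $c < a$ or $c > b$, and the two cases are symmetric. For $c < a$, the path-distance $a - c$ in $\mathcal{P}_\sigma$ equals $d_G(v, \bar\sigma) \leq k$ (since $R \subseteq \mathcal{P}_\sigma$), giving $c \geq a - k$; and the route $x_c$-$x_{c+1}$-$\cdots$-$x_a$-$\pi_i$ yields $d_G(v, \pi_i) \leq a - c + 1$, which combined with $d_G(v, \pi_i) > k$ forces $c \leq a - k$. Hence $c = a - k$ uniquely, and by symmetry $c = b + k$ on the right side, so $\|D\| \leq 2$. The main obstacle I anticipate is the component-identification step for $\bar\sigma \not\subseteq P$, where $u^* \in P$ is not automatic; there one must leverage the shortest $v$-to-$P$ path and the eccentricity bound to rule out configurations in which $\bar\sigma$ lives outside $v$'s component, and to handle possible overlaps between $\bar\sigma$ and other segments of $P$ inside $\mathcal{P}_\sigma$.
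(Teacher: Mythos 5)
Your approach is the same as the paper's: confine $v$, $\bar\sigma$, and the path joining them to a single path component of $G \setminus C$, and then exploit the fact that a subpath of a path graph has only two ends. Your endgame is in fact more explicit than the paper's --- the paper stops at ``the two paths must be connected through the endpoints of $\bar\sigma$, so no more than two such connections can be present,'' whereas you pin $v$ to the unique positions $a-k$ and $b+k$ by playing $d_G(v,\pi_i) > k$ (from $\delta(\pi_i)=0$) against $d_G(v,\bar\sigma) \le k$; that is the right way to make the count of two rigorous rather than pictorial.

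The step you flag as an obstacle is exactly where the paper is tersest, and your instinct about where the difficulty sits is correct. The paper does not detour through the nearest $P$-vertex: it takes the nearest vertex $u \in \bar\sigma$ to $v$ and asserts that the shortest $v$-$u$ path contains no vertex of $\widehat{C}$, since a vertex $t \in \widehat{C}$ on it would give $d_G^{\delta}(v,\widehat{C}) \le d_G(v,t) + d_G(t,P) \le d_G(v,t) + d_G(t,u) = d_G(v,u) \le k$. The middle inequality $d_G(t,P) \le d_G(t,u)$ is immediate only when $u \in P$, i.e.\ precisely your ``clean case''; note that this case is all that is needed where the lemma is invoked as a counting tool in \autoref{th:disjoint-paths--u-size}, which applies it to the segments of $P$ itself. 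Your proposed substitute --- routing through the nearest $P$-vertex $s^*$ --- establishes that $v$ shares a component with the $P$-segment containing $s^*$, but says nothing about which component $\bar\sigma$ lies in, so as written it does not close the general case; to finish, you must argue directly about the $v$-to-$\bar\sigma$ path as above. One further point of care that both you and the paper gloss over: since $D$ excludes only vertices of $P$, not of $\bar\sigma$, internal vertices of $\bar\sigma$ off $P$ are not automatically excluded, and your dismissal of the case $c \in [a,b]$ quietly assumes that no $P$-segment in $\mathcal{P}_\sigma$ overlaps $\bar\sigma$, which need not hold.
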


\toappendix{
\begin{proof}
Let $v \in D$ and $u \in \bar\sigma$ be the nearest vertex to $v$ on segment~$\bar\sigma$. There is a shortest path from $u$ to $v$ which does not contain any vertex from $\widehat{C}$ (if it did, then $d_G^\delta(v, \widehat{C}) = d_G(v, P) \leq k$). Because $G \setminus C$ is a union of disjoint paths and it contains the whole segment $\bar\sigma$ as well as the path from $u$ to $v$, these two paths must be connected through their endpoints. Neither of the endpoints is in $C$, so no more than two such connections can be present in~$G$ (see \autoref{fig:disjoint-paths--sat-max-2}).

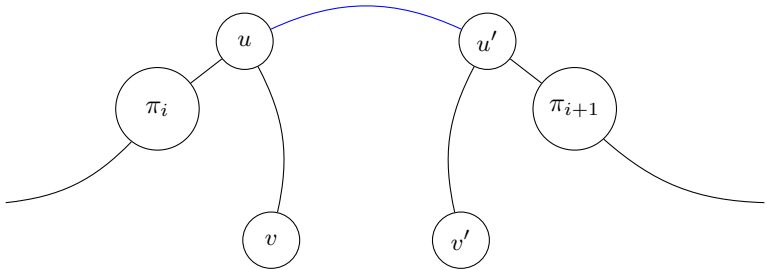
\begin{figure}[t]
\centering
\begin{tikzpicture}
  [auto=left,every node/.style={circle,draw=black,minimum size=0.75cm}]
  \node [style={minimum size=1.1cm}] (pi_i) at (1,8.75) {$\pi_i$};
  \node [style={minimum size=1.1cm}] (pi_ii) at (6.5,8.75) {$\pi_{i+1}$};
  \node (u) at (2.15,9.65) {\small$u$};
  \node (u_) at (5.35,9.65) {\small$u'$};
  \node (v) at (2.5,7)  {\small$v$};
  \node (v_) at (5,7)  {\small$v'$};
  \draw (-1, 7.5) edge[bend right=20] (pi_i);
  \draw (pi_ii) edge[bend right=20] (9,7.5);
  \draw (pi_i) edge (u);
  \draw (u) edge[blue, bend left=25] (u_);
  \draw (u_) edge (pi_ii);
  \draw (v) edge[bend right=20] (u);
  \draw (v_) edge[bend left=20] (u_);
\end{tikzpicture}
\caption[Example of a situation from \autoref{th:disjoint-paths--sat-size-outside-P}]{Example of a situation from \autoref{th:disjoint-paths--sat-size-outside-P}. Segment $\bar\sigma$ is highlighted in blue and $D = \{v, v'\}$.}
\label{fig:disjoint-paths--sat-max-2}
\end{figure}
\end{proof}
}

\begin{corollary}
\label{th:disjoint-paths--u-size}
Let $P$ be a shortest path in $G$ with $ecc_G(P) \leq k$.
Let $U = \{v \in V \setminus P \mid d_G^{\delta^P}(v, \widehat{C}^P) > k\}$. There are $\|L^P\| - 1$ segments on $P$, therefore $\|U\| \leq 2(\|L^P\| - 1)$.
\end{corollary}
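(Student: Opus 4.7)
The plan is to charge each vertex in $U$ to one of the at most $\|L^P\|-1$ segments of $P$ (where a ``segment of $P$'' is the portion of $P$ strictly between two consecutive landmarks $\pi^P_i$ and $\pi^P_{i+1}$), and then invoke \autoref{th:disjoint-paths--sat-size-outside-P} to conclude that each such segment absorbs at most two vertices of $U$.

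First I would argue that every $v \in U$ is witnessed by some segment of $P$. Fix $v \in U$ and let $w$ be a closest vertex of $P$ to $v$, so $d_G(v,w)=d_G(v,P)\le k$ by the assumption $\ecc_G(P)\le k$. If $w$ belonged to $\widehat{C}^P$, then by the definition of $d_G^{\delta^P}$ applied with $s=w$ (and using $\delta^P(w)=d_G(w,P)=0$), we would get $d_G^{\delta^P}(v,\widehat{C}^P)\le d_G(v,w)\le k$, contradicting $v\in U$. Hence $w$ lies strictly between two consecutive landmarks $\pi^P_i, \pi^P_{i+1}$, i.e., $w$ belongs to the (unique) segment $\bar\sigma_i\in {\mathcal{S}}(\pi^P_i,\pi^P_{i+1})$ contained in $P$. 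Since $d_G(v,\bar\sigma_i)\le d_G(v,w)\le k < d_G^{\delta^P}(v,\widehat{C}^P)$, the segment $\bar\sigma_i$ satisfies $v$ in the sense of the preceding definition.

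Second, I would apply \autoref{th:disjoint-paths--sat-size-outside-P} to each of the $\|L^P\|-1$ segments $\bar\sigma_1,\dots,\bar\sigma_{\|L^P\|-1}$ of $P$: for every index $i$, the set $\{u\in V\setminus P \mid \bar\sigma_i\text{ satisfies }u\}$ has size at most $2$. Since the previous paragraph shows that $U$ is contained in the union of these sets, a union bound gives $\|U\|\le 2(\|L^P\|-1)$, which is exactly the claim.

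The only potentially subtle point is the first step—checking that the closest point of $P$ to a vertex of $U$ cannot be a landmark—but this follows directly from unfolding $d_G^{\delta^P}$ and the fact that landmarks have $\delta^P$-value $0$. Everything else is a clean reduction to the already-proved bound on $|D|$.
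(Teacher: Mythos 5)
Your proof is correct and follows essentially the same route the paper intends: the corollary is stated without a separate proof precisely because the intended argument is the charging you spell out — every $v\in U$ is satisfied by the segment of $P$ containing its nearest path vertex (which cannot lie in $\widehat{C}^P$ since $\delta^P$ vanishes there), and \autoref{th:disjoint-paths--sat-size-outside-P} caps each of the $\|L^P\|-1$ segments at two such vertices. Your explicit verification that the nearest vertex is not a landmark is a welcome detail the paper leaves implicit.
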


We have shown that there are not many vertices $v \notin P$ with $d_G^{\delta^P}(v, \widehat{C}^P) > k$.
Now, we show that all such vertices actually have $d_G^{\delta^P}(v, \widehat{C}^P) = k + 1$.

\begin{lemma}
\label{th:disjoint-paths--distance-to-mesp}
Let $P$ be a shortest path in $G$ with $ecc_G(P) \leq k$.
Let $v \in V$ be such that $d_G^{\delta^P}(v, \widehat{C}^P) \geq k + 1$. Let $u \in P$ be the nearest vertex to $v$ on $P$. Then either $u = v$, or $d_G(u, v) = k$ and $d_G(u, \widehat{C}^P) = 1$.
\end{lemma}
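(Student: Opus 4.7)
The plan is to extract structural information from the hypothesis $d_G^{\delta^P}(v,\widehat{C}^P)\ge k+1$ by strategically plugging vertices into the minimum in the definition of $d_G^{\delta^P}$, and then to exploit the fact that $G\setminus C$ is a disjoint union of paths.

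First I would establish two preparatory facts. (i) $v\notin \widehat{C}^P$: otherwise the choice $s=v$ in the minimum would give $d_G(v,v)+\delta^P(v)=d_G(v,P)\le k$, contradicting the hypothesis. (ii) $u\notin L^P$: otherwise $u\in\widehat{C}^P$ with $\delta^P(u)=0$, so the choice $s=u$ would give $d_G(v,u)+0=d_G(v,P)\le k$. In particular $u\in P\setminus L^P$, so $u\notin C$ and $u$ is not an endpoint of $P$; thus $u$ has exactly two neighbors on $P$, say $u^-$ and $u^+$.

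Next, assume $u\ne v$, so any shortest $v$-$u$-path $Q$ has length at least one. The same plug-in trick shows that $Q$ meets $\widehat{C}^P$ only at its endpoints: if $s\in\widehat{C}^P$ lay on $Q$, then
\[d_G(v,s)+d_G(s,P)\le d_G(v,s)+d_G(s,u)=d_G(v,u)\le k,\]
a contradiction. Since neither endpoint of $Q$ is in $\widehat{C}^P\supseteq C$, the whole path $Q$ lies in $V\setminus C$. Let $u^{\star}$ be the neighbor of $u$ on $Q$; then $u^{\star}\in V\setminus C$ is a neighbor of $u$ inside the disjoint-paths graph $G\setminus C$, where $u$ has at most two neighbors. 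If both $u^-$ and $u^+$ were in $V\setminus C$, they would be exactly those two neighbors, forcing $u^{\star}\in\{u^-,u^+\}\subseteq P$; but then $d_G(v,u^{\star})=d_G(v,u)-1$, contradicting the choice of $u$ as the nearest $P$-vertex to $v$. Hence at least one of $u^-,u^+$, say $u^+$, lies in $C\subseteq \widehat{C}^P$. Since $u\notin \widehat{C}^P$, this immediately yields $d_G(u,\widehat{C}^P)=1$.

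To conclude $d_G(u,v)=k$, I would use the chosen neighbor $u^+\in L^P$. Plugging $s=u^+$ into the hypothesis yields $d_G(v,u^+)+0\ge k+1$, and the triangle inequality gives $d_G(v,u^+)\le d_G(v,u)+1$, so $d_G(v,u)\ge k$. Combined with $d_G(v,u)\le\ecc_G(P)\le k$, this forces $d_G(v,u)=k$. The main obstacle is the structural step about the neighborhood of $u$ in $G\setminus C$, where one must carefully combine the path-decomposition property with the fact that $Q$ avoids $C$; the rest consists of routine plug-ins into $d_G^{\delta^P}$ and the triangle inequality.
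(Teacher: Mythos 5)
Your proposal is correct and follows essentially the same route as the paper's proof: rule out $\widehat{C}^P$-vertices on the shortest $v$-$u$-path by plugging them into the minimum defining $d_G^{\delta^P}$, use the degree-at-most-two bound in the disjoint-paths graph together with the off-path neighbour of $u$ towards $v$ to force one of $u$'s two path-neighbours into $\widehat{C}^P$ (hence into $L^P$), and close with the chain $k+1 \le d_G(v,u^+) \le d_G(v,u)+1 \le k+1$. The only cosmetic difference is that you work in $G\setminus C$ and argue by contraposition on $u^-,u^+$, whereas the paper counts neighbours directly in $G\setminus\widehat{C}^P$; the content is identical.
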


\toappendix{
\begin{proof}
If $v \in P$, then the nearest vertex on $P$ is itself, so $u = v$.
Suppose that $v \notin P$ (see \autoref{fig:disjoint-paths--edge-case}). There is no vertex from $\widehat{C}^P$ on any shortest path between $v$~and~$u$ (otherwise $d_G^{\delta^P}(v, \widehat{C}^P) \leq d_G(v, u) \leq k$). In particular, $u \notin \widehat{C}^P$, therefore, $u$ has exactly 2 neighbors on $P$. It also has at least one neighbor outside of $P$, through which it is connected to $v$. In $G \setminus \widehat{C}^P$, $u$ must have at most 2 neighbors, thus at least one of its neighbors on $P$ is in $\widehat{C}^P$. Because $L^P = P \cap \widehat{C}^P$, we get $d_G(u, L^P) = 1$. Then, from
\[k + 1 \leq d_G^{\delta^P}(v, \widehat{C}^P) \leq d_G(v, u) + d_G(u, L^P) = d_G(v, u) + 1 \leq k + 1,\]
we get $d_G(v, u) = k$.
\end{proof}
}

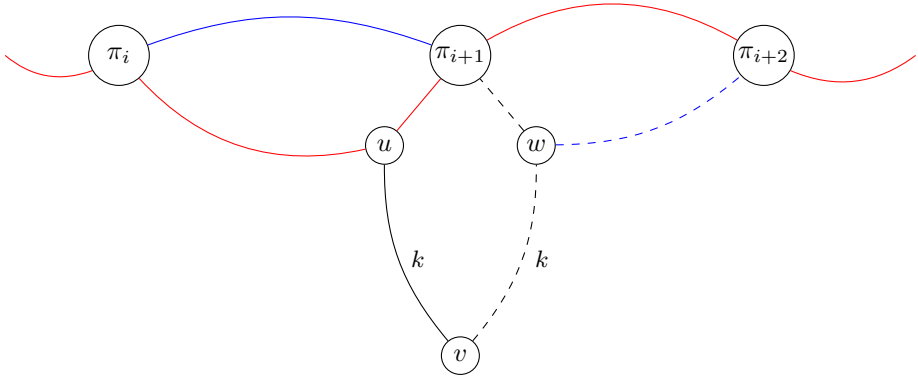
\begin{figure}[t]
\centering
\begin{tikzpicture}
  [auto=left,every node/.style={inner sep=0, minimum size=.5cm, circle,draw=black}]
  \node [style={minimum size=.8cm}] (pi_i) at (1.5,10) {$\pi_i$};
  \node [style={minimum size=.8cm}] (pi_ii) at (6,10) {$\pi_{i+1}$};
  \node [style={minimum size=.8cm}] (pi_iii) at (10, 10) {$\pi_{i+2}$};
  \node (u) at (5,8.8) {$u$};
  \node (v) at (6,6)  {$v$};
  \node (w) at (7, 8.8) {$w$};
  \draw (0, 10) edge[red, bend right=30] (pi_i);
  \draw (pi_i) edge[blue, bend left=20] (pi_ii);
  \draw (pi_i) edge[red, bend right=30] (u);
  \draw (u) edge[red] (pi_ii);
  \draw (pi_ii) edge[red, bend left=30] (pi_iii);
  \draw (pi_iii) edge[red, bend right=30] (12, 10);
  \draw (v) edge[bend left=20] node[right, draw=none] {\small$k$} (u);
  \draw (v) edge[dashed, bend right=20] node[right, draw=none] {\small$k$} (w);
  \draw (w) edge[dashed] (pi_ii);
  \draw (w) edge[blue, dashed, bend right=20] (pi_iii);
\end{tikzpicture}
\caption[Example of a situation from \autoref{th:disjoint-paths--distance-to-mesp}]{Example of a situation from \autoref{th:disjoint-paths--distance-to-mesp}. Path $P$ is red, candidate segments are blue. The segment containing vertex $u$ satisfies $v$. If all the dashed parts are present in $G$, then the segment containing vertex $w$ also satisfies $v$.}
\label{fig:disjoint-paths--edge-case}
\end{figure}

Let $v \in V \setminus P$ be such that $d_G^{\delta^P}(v, \widehat{C}^P) = k + 1$ and $S \subseteq \widetilde{\mathcal{S}}$ be a set of candidate segments that satisfy $v$. As shown in \autoref{fig:disjoint-paths--edge-case}, there may be multiple such segments in $S$.
However, if $v$ itself lies on some segment, then only segments in the same ${\mathcal{S}}(\pi_i, \pi_{i+1})$ may satisfy $v$.

\begin{observation}
\label{th:distance-to-disjoint-paths--est-plus-1-sat-same-pi-i}
Let $P$ be a shortest path in $G$ with $ecc_G(P) \leq k$.
Let $\bar\sigma \in {\mathcal{S}}(\pi^P_i, \pi^P_{i+1})$ be a candidate segment that contains some vertex $u$ such that $d_G^{\delta^P}(u, \widehat{C}^P) = k + 1$. Let $u' \in P$ be the nearest vertex to $u$ on $P$. Then $u'$ lies on a segment $\sigma \in {\mathcal{S}}(\pi^P_i, \pi^P_{i+1})$.
\end{observation}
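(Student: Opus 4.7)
The plan is to use \autoref{th:disjoint-paths--distance-to-mesp}, which tells us that either $u=u'$, or $d_G(u,u')=k$ with $d_G(u',\widehat{C}^P)=1$ and $u\notin P$. In both situations the strategy is to prove the stronger statement $u'\in\bar\sigma$. Once this is known, the conclusion is immediate: writing $\bar\sigma=(u_1,\dots,u_t)$, the fact that $\pi^P_i$-$u_1$-$\cdots$-$u_t$-$\pi^P_{i+1}$ is a shortest path gives $d_G(\pi^P_i,u')+d_G(u',\pi^P_{i+1})=d_G(\pi^P_i,\pi^P_{i+1})$; since $u',\pi^P_i,\pi^P_{i+1}$ all lie on the shortest path $P$, their pairwise distances coincide with distances along $P$, and this identity forces $u'$ to lie on $P$ strictly between $\pi^P_i$ and $\pi^P_{i+1}$, i.e.\ on a segment of $\mathcal{S}(\pi^P_i,\pi^P_{i+1})$.

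The hypothesis $d_G^{\delta^P}(u,\widehat{C}^P)=k+1$ has a useful quantitative consequence I would extract first. Since $\pi^P_i,\pi^P_{i+1}\in P$, both satisfy $\delta^P(\cdot)=0$, and the definition of $d_G^\delta$ immediately yields $d_G(u,\pi^P_i)\ge k+1$ and $d_G(u,\pi^P_{i+1})\ge k+1$. Setting $u=u_s$ and using that each subpath of the shortest path $\pi^P_i$-$u_1$-$\cdots$-$u_t$-$\pi^P_{i+1}$ is itself shortest, I obtain $d_G(\pi^P_i,u)=s$ and $d_G(u,\pi^P_{i+1})=t-s+1$. Combined, these give the two crucial length bounds $s-1\ge k$ and $t-s\ge k$.

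If $u=u'$, then $u\in P$ and the opening paragraph applied to $u'=u$ finishes the proof. In the remaining case $u\ne u'$, I would first show that every shortest $u$-$u'$ path in $G$ avoids $\widehat{C}^P$: if some intermediate $z\in\widehat{C}^P$ appeared on such a path, then $d_G^{\delta^P}(u,\widehat{C}^P)\le d_G(u,z)+\delta^P(z)\le d_G(u,z)+d_G(z,u')=k$, contradicting the hypothesis. Hence $u$ and $u'$ lie in the same connected component $Q$ of $G\setminus\widehat{C}^P$; since $G\setminus C$ is a disjoint union of paths and $\widehat{C}^P\supseteq C$, the component $Q$ is itself a path, and $\bar\sigma$ sits inside $Q$ as a contiguous subpath. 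The unique $u$-$u'$ path in $Q$ therefore has length exactly $k$, so starting from $u=u_s$ one reaches either $u_{s+k}$ or $u_{s-k}$; the bounds $s-1\ge k$ and $t-s\ge k$ guarantee that both candidate endpoints are still inside $\bar\sigma$, and so $u'\in\bar\sigma$.

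The main obstacle is exactly this last case: the insight one has to spot is that the tight equation $d_G^{\delta^P}(u,\widehat{C}^P)=k+1$ is what prevents any shortest $u$-$u'$ path from touching $\widehat{C}^P$, which in turn lets the disjoint-paths structure of $G\setminus C$ confine $u'$ to $\bar\sigma$; after this, reducing to the $u=u'$ case is routine.
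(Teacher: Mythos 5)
Your proof is correct, but it takes a genuinely different route from the paper's. The paper disposes of the claim in two purely metric steps: first, $u'\notin L^P$, since otherwise $d_G^{\delta^P}(u,\widehat{C}^P)\le d_G(u,u')\le k$; hence $u'$ lies on the segment of $P$ between some consecutive pair $\pi^P_j,\pi^P_{j+1}$, and if that pair were not $\pi^P_i,\pi^P_{i+1}$ one could shortcut $P$ through $u$, because $d_G(u,u')\le k<k+1\le d_G(u,\{\pi^P_i,\pi^P_{i+1}\})$, contradicting that $P$ is shortest. You instead exploit the structural hypothesis that $G\setminus C$ is a disjoint union of paths: the observation that no shortest $u$-$u'$ path can meet $\widehat{C}^P$ confines $u'$ to the path component of $G\setminus\widehat{C}^P$ containing $\bar\sigma$, and the length bounds $s-1\ge k$ and $t-s\ge k$ then pin $u'$ down to $u_{s\pm k}\in\bar\sigma$. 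This buys you a strictly stronger conclusion --- $u'$ lies on $\bar\sigma$ itself, not merely on some segment of $\mathcal{S}(\pi^P_i,\pi^P_{i+1})$ --- at the cost of a longer argument that genuinely depends on the disjoint-paths structure, whereas the paper's shortcut argument is shorter and would survive in settings where $G\setminus\widehat{C}^P$ has no special form. Both arguments are sound; all the auxiliary facts you invoke ($\delta^P(\pi^P_i)=\delta^P(\pi^P_{i+1})=0$, $d_G(u,u')=k$ from \autoref{th:disjoint-paths--distance-to-mesp}, and the contiguity of $\bar\sigma$ inside its component) check out.
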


\toappendix{
\begin{proof}
If $u' \in L$, then $d_G^\delta(u, \widehat{C}) \leq d_G(u, u') \leq k$.
If $\sigma \notin {\mathcal{S}}(\pi^P_i, \pi^P_{i+1})$, then~$P$ would not be a shortest path because $d_G(u, u') \leq k < k + 1 \leq d_G(u, \{\pi^P_i, \pi^P_{i+1}\})$.
\end{proof}
}

We already know from \autoref{th:disjoint-paths--distance-to-mesp} that if a candidate segment contains some vertex $v$ with $d_G^\delta(v, \widehat{C}) > k + 1$, then it is a necessary segment. Now, we show another sufficient condition for a candidate segment to be a necessary segment.

\begin{lemma}
\label{th:distance-to-disjoint-paths--two-est-plus-one-implies-true-segment}
Let $\bar\sigma \in {\mathcal{S}}(\pi_i, \pi_{i+1})$ be a candidate segment that contains some vertices $u, v \in \bar\sigma$ such that $u \neq v$ and $d_G^\delta(u, \widehat{C}) = d_G^\delta(v, \widehat{C}) = k + 1$. Then, $\bar\sigma$~is a necessary segment.
\end{lemma}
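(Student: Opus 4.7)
I argue by contradiction. Fix any shortest path $P$ matching the triple $(\widehat{C}, L, \bm\pi)$ with $\ecc_G(P)\le k$, let $\bar\sigma_P\in\mathcal{S}(\pi_i, \pi_{i+1})$ be the corresponding segment of $P$, and suppose $\bar\sigma_P \ne \bar\sigma$. Since $P$ is arbitrary, a contradiction will establish that $\bar\sigma$ is necessary.

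First I extract positional bounds. Because $\pi_i, \pi_{i+1}\in L$ and therefore $\delta(\pi_i)=\delta(\pi_{i+1})=0$, the hypothesis $d_G^\delta(u,\widehat{C})=k+1$ forces $d_G(u,\pi)\ge k+1$ for every landmark $\pi\in L$; the same bounds hold for $v$. In particular $d_G(\pi_i,\pi_{i+1})\ge 2k+2$, so $\bar\sigma$ and $\bar\sigma_P$ each have at least $2k+1$ interior vertices. Next I locate the required witnesses: \autoref{th:distance-to-disjoint-paths--est-plus-1-sat-same-pi-i} guarantees that the vertex of $P$ closest to each of $u, v$ lies on $\bar\sigma_P$, and whenever $u \notin P$ (respectively $v \notin P$), \autoref{th:disjoint-paths--distance-to-mesp} provides a vertex $u^\star \in \bar\sigma_P$ (respectively $v^\star$) at distance exactly $k$ that is adjacent to some landmark.

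The structural core of the argument confines both segments to a single component of the union-of-paths graph $G \setminus \widehat{C}$. Repeating the step from the proof of \autoref{th:disjoint-paths--distance-to-mesp}, no shortest $u$-$u^\star$-path can contain a vertex of $\widehat{C}$: any such $c$ would yield
\[ d_G^\delta(u,\widehat{C}) \le d_G(u,c)+\delta(c) \le d_G(u,c)+d_G(c,u^\star) = d_G(u,u^\star)\le k, \]
using $\delta(c) = d_G(c, P) \le d_G(c, u^\star)$ because $u^\star \in P$, contradicting the hypothesis. Hence $u, u^\star$ (and analogously $v, v^\star$) lie in the same connected component $Q$ of $G \setminus \widehat{C}$, which is a simple path. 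Since $\bar\sigma$ and $\bar\sigma_P$ are connected in $G \setminus \widehat{C}$ and each meets $Q$, both are subpaths of $Q$.

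Finally I use a uniqueness argument inside $Q$. The distinct vertices $u, v$ on the shortest subpath $\bar\sigma$ satisfy $a_u := d_G(\pi_i, u) \ne a_v := d_G(\pi_i, v)$, so the direction from $u$ toward $v$ inside $Q$ is uniquely determined; this is precisely the direction in which $\bar\sigma$ extends from $u$ toward the $\pi_{i+1}$-end, and the two boundary vertices of $\bar\sigma$ within $Q$ are then pinned down by the offsets $a_u - 1$ (toward $\pi_i$) and $b_u - 1$ (toward $\pi_{i+1}$) from $u$. The same argument, applied with $u, v$ (or, where they lie off $P$, their witnesses $u^\star, v^\star$ whose positions in $Q$ are determined by the constraints above), identifies $\bar\sigma_P$ with the same subpath of $Q$, yielding $\bar\sigma_P = \bar\sigma$ against our assumption.

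The principal obstacle is the uniqueness step inside $Q$: it is exactly here that two distinct witnesses become essential, since a single witness would leave ambiguity in the direction along the simple path $Q$ in which the segment extends. The case analysis in which some of $u, v$ lie off $P$ requires also checking that the witnesses $u^\star, v^\star$ together with their landmark adjacencies yield equally tight positional constraints as $u, v$ themselves; the eccentricity bound on $P$ and the equality $d_G(u, u^\star) = k$ supplied by \autoref{th:disjoint-paths--distance-to-mesp} ensure this.
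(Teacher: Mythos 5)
Your setup is sound and matches the paper's: you invoke \autoref{th:disjoint-paths--distance-to-mesp} and \autoref{th:distance-to-disjoint-paths--est-plus-1-sat-same-pi-i} to obtain witnesses $u^\star,v^\star$ on the segment $\bar\sigma_P$ of $P$ at distance exactly $k$, and you correctly argue that the connecting shortest paths avoid $\widehat{C}$, so that $\bar\sigma$, $\bar\sigma_P$, and the connecting paths all live in a single path component $Q$ of $G\setminus\widehat{C}$. From there, however, you diverge from the paper, and the step you diverge into is not actually carried out. The paper finishes differently and more cheaply: the walk $u\to v$ (inside $\bar\sigma$), $v\to v^\star$, $v^\star\to u^\star$ (inside $\bar\sigma_P$), $u^\star\to u$ closes up into a cycle in $G\setminus\widehat{C}$ (the length constraints force the four legs to be internally disjoint), contradicting that $C$ is a modulator to disjoint paths; it never needs to identify $\bar\sigma_P$ with $\bar\sigma$. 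You instead try to pin down the position of $\bar\sigma_P$ in $Q$ and conclude $\bar\sigma_P=\bar\sigma$, but the decisive claims are asserted rather than proved. Concretely: (i) you never rule out $u^\star=v^\star$ (this can be excluded, but it needs the observation that a common witness would be an interior vertex of $\bar\sigma$ adjacent to $\pi_i$ or $\pi_{i+1}$, contradicting that the $j$-th vertex of a shortest segment is at distance exactly $j$ from its end); (ii) each witness satisfies only $|\mathrm{pos}(u^\star)-\mathrm{pos}(u)|=k$, a two-way sign ambiguity that you acknowledge but do not resolve -- resolving it requires combining $|\mathrm{pos}(u^\star)-\mathrm{pos}(v^\star)|=d_G(\pi_i,\pi_{i+1})-2$ (the witnesses are the two endpoints of $\bar\sigma_P$) with the bound $|\mathrm{pos}(u)-\mathrm{pos}(v)|\le d_G(\pi_i,\pi_{i+1})-2-2k$ coming from $d_G(u,\pi_i),d_G(u,\pi_{i+1})\ge k+1$, which forces equality throughout and only then pins $\bar\sigma_P$ onto $\bar\sigma$; (iii) the sub-case where $u$ or $v$ already lies on $P$ is waved at ("equally tight positional constraints") but not checked. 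Your own closing paragraph concedes that the uniqueness step is "the principal obstacle"; since that obstacle is exactly the content of the lemma, the proposal as written has a genuine gap, even though the route is salvageable with the case analysis above.

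Incidentally, your opening positional bound $d_G(\pi_i,\pi_{i+1})\ge 2k+2$ is correct (because $u$ lies on a shortest $\pi_i$-$\pi_{i+1}$ path, the two lower bounds $d_G(u,\pi_i),d_G(u,\pi_{i+1})\ge k+1$ add up), and it is in fact the ingredient needed to close point (ii); but you should state explicitly that it relies on $u$ being on the segment, not on the triangle inequality alone.
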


\toappendix{
\begin{proof}
Suppose that $\bar\sigma$ is not a necessary segment, let $P$ be a shortest path in $G$ with $ecc_G(P) \leq k$, $\widehat{C}=\widehat{C}^{P}$, $L = L^{P}$, $\bm\pi = \bm{\pi}^{P}$, and $\delta^P\|_{\widehat{C}^P}=\delta$ and let $\sigma \in {\mathcal{S}}(\pi_i, \pi_{i+1})$ be a part of $P$, $\sigma \neq \bar\sigma$.
By \autoref{th:disjoint-paths--distance-to-mesp} there must be some vertices $u', v' \in P$ with $d_G(u, u') = d_G(v, v') = k$.
By \autoref{th:distance-to-disjoint-paths--est-plus-1-sat-same-pi-i}, both $u'$ and $v'$ are in~$\sigma$. 
No shortest path between $u$ and $u'$, contains any vertex from $\widehat{C}$ (otherwise $d_G^\delta(u, \widehat{C}) \leq d_G(u, u') =  k$). The same applies for any shortest path between $v$ and $v'$.
Thus, in $G \setminus \widehat{C}$, $u$ is connected to~$v$, $v$~is connected to~$v'$, $v'$~is connected to~$u'$, and~$u'$~is connected to~$u$. Due to the length constraints, all these paths are disjoint (except for their endpoints). There is a cycle in $G \setminus \widehat{C}$, which is a contradiction with $C$ being a modulator to disjoint paths.
\end{proof}
}

Let us summarize what we have shown so far.
If we had the correct values for the permutation $\bm\pi$ of vertices from $\widehat{C}$ that are on $P$, and of $\delta$, we would only need to select one segment out of each ${\mathcal{S}}(\pi_i, \pi_{i+1})$ to find a shortest path with eccentricity at most $k$.
There are some vertices $u \in V$ such that $d_G^\delta(u, \widehat{C}) \leq k$ and for these vertices, the distance to the resulting path will be at most $k$, no matter which segments we choose.

A segment which contains some vertex $v$ with $d_G^\delta(v, \widehat{C}) > k + 1$ is a necessary segment. A segment which contains two vertices $u \neq v$ with $d_G^\delta(u, \widehat{C}) = d_G^\delta(v, \widehat{C}) = k + 1$ is a necessary segment as well.
For the remaining segments, we know that for every $u \in V$ with $d_G^\delta(u, \widehat{C}) > k$, the shortest path with eccentricity at most $k$ needs to contain some $\sigma_u \in \widetilde{\mathcal{S}}$ such that $d_G(u, \sigma_u) \leq k$. Furthermore, for every $v \in \widehat{C}$ with $d_G(v, L) > \delta(v)$, the path needs to contain some $\sigma_v \in \widetilde{\mathcal{S}}$ such that $d_G(v, \sigma_v) \leq \delta(v)$.

Clearly, the problem of selecting one segment out of each set of candidate segments is an instance of CSC: the sets of candidates are $\mathcal{C} = {\mathcal{S}}(\pi_1, \pi_2) \cup \dots \cup {\mathcal{S}}(\pi_{\|L\|-1}, \pi_{\|L\|})$, the requirements are $\mathcal{R} = \{v \in V \setminus \widehat{C} \mid d_G^\delta(v, \widehat{C}) > k\} \cup \{v \in \widehat{C} \setminus L \mid d_G(v, L) > \delta(v)\}$, and the function $\Psi(\bar\sigma) = \{v \in V \setminus \widehat{C} \mid \bar\sigma \text{ satisfies } v\} \cup \{v \in \widehat{C} \setminus L \mid d_G(v, \sigma) \leq \delta(v)\}$.

We know that the number of vertices outside of $P$ that the segments can satisfy is bounded by the size of $L$. Furthermore, we know that if a segment contains at least two vertices that need to be satisfied, then it is a necessary segment. Lastly, we know that if a segment from some ${\mathcal{S}}(\pi_i, \pi_{i+1})$ contains one vertex $v$ with $d_G^\delta(v, \widehat{C}) = k + 1$, then only segments from the same ${\mathcal{S}}(\pi_i, \pi_{i+1})$ may satisfy $v$. Thus, all segments in ${\mathcal{S}}(\pi_i, \pi_{i+1})$ that do not satisfy $v$ may be disregarded. By this, we ensure that $v$ will be satisfied no matter which segment is chosen, and $v$ does not need to be added to the requirements~$\mathcal{R}$. Hence, the requirements $\mathcal{R}$ do not need to contain any vertices from $P$, and the size of $\mathcal{R}$ is bounded by the size of $C$.

In the following lemma, we show that we do not need to explicitly check whether a segment contains some vertex $v$ with $d_G^\delta(v, \widehat{C}) > k + 1$ to decide that it is a necessary segment. This will simplify our algorithm a bit.

\begin{lemma}
\label{th:distance-to-disjoint-paths--est-big-implies-plus-one}
If a segment $\sigma \in \widetilde{\mathcal{S}}$ contains a vertex $u$ such that $d_G^\delta(u, \widehat{C}) > k + 1$, then it must also contain two vertices $v, v'$ with $d_G^\delta(v, \widehat{C}) = d_G^\delta(v', \widehat{C}) = k + 1$.
\end{lemma}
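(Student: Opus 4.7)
The plan is a discrete intermediate-value argument on the sequence of estimates $d_G^\delta(\cdot, \widehat{C})$ along the segment. Write $\sigma = (u_1, \dots, u_\ell)$ and for notational convenience set $u_0 := \pi_i$ and $u_{\ell+1} := \pi_{i+1}$, both of which belong to $L \subseteq \widehat{C}$. Define the integer-valued function
\[
f(j) := d_G^\delta(u_j, \widehat{C}) \qquad \text{for } 0 \le j \le \ell + 1.
\]
The vertex $u$ of the hypothesis occupies some position $j^{\star} \in \{1, \dots, \ell\}$ with $f(j^{\star}) \ge k + 2$; the two witnesses $v, v'$ will be obtained as positions where $f$ first attains the value $k + 1$ on either side of $j^{\star}$.

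I would then verify two basic facts. Because every vertex of $L$ lies on $P$, we have $\delta(\pi_i) = \delta(\pi_{i+1}) = 0$, and since $\pi_i, \pi_{i+1} \in \widehat{C}$ this forces $f(0) = f(\ell+1) = 0$. Second, any two consecutive $u_j, u_{j+1}$ on the segment are adjacent, so for every $s \in \widehat{C}$ the triangle inequality gives $d_G(u_j, s) \le 1 + d_G(u_{j+1}, s)$ and vice versa; taking the minimum over $s$ preserves this and yields $|f(j{+}1) - f(j)| \le 1$, i.e.\ the sequence is $1$-Lipschitz on the integers.

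With these tools in hand the conclusion is immediate. Walking from position $0$ (where $f = 0$) to position $j^{\star}$ (where $f \ge k+2$) in unit steps forces $f$ to pass through every integer in between; concretely, pick $j_1$ as the smallest index in $\{1, \dots, j^{\star}\}$ with $f(j_1) \ge k+1$, and the Lipschitz bound combined with $f(j_1 - 1) \le k$ gives $f(j_1) = k+1$. Since $f(j^{\star}) > k+1$ we have $j_1 < j^{\star}$. A symmetric argument on the tail from $j^{\star}$ to $\ell+1$ produces $j_2$ with $j^{\star} < j_2 \le \ell$ and $f(j_2) = k+1$. Setting $v := u_{j_1}$ and $v' := u_{j_2}$ gives two distinct vertices of $\sigma$ with $d_G^\delta(v, \widehat{C}) = d_G^\delta(v', \widehat{C}) = k+1$. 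The only place demanding care is the endpoint bookkeeping---checking that $j_1, j_2$ land strictly inside $\{1, \dots, \ell\}$ rather than at the augmented endpoints $0$ or $\ell{+}1$, and that $j_1 < j^{\star} < j_2$ so that $v \ne v'$---but both follow automatically from $f(0) = f(\ell+1) = 0 \ne k+1$ together with $f(j^{\star}) \ge k+2$, so no real obstacle arises.
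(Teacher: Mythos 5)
Your proof is correct and follows essentially the same route as the paper's: the paper also observes that the estimate $d_G^\delta(\cdot,\widehat{C})$ changes by at most $1$ between adjacent vertices of the segment and is small at the segment's ends, and concludes by a discrete intermediate-value argument (phrased contrapositively) that the value $k+1$ must be attained on each side of $u$. Your version merely spells out the Lipschitz step and the endpoint bookkeeping more explicitly.
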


\toappendix{
\begin{proof}
Let $s, t$ be endpoints of $\sigma$, thus $d_G^\delta(s, \widehat{C}) = d_G^\delta(t, \widehat{C}) = 1$. If there was no $v \in \sigma$ with $d_G^\delta(v, \widehat{C}) = k + 1$ between $s$ and $u$, then there would have to be some neighbors $p, q \in \sigma$ such that $d_G^\delta(p, \widehat{C}) > d_G^\delta(q, \widehat{C}) + 1$. This is a contradiction because clearly $d_G^\delta(p, \widehat{C}) \leq d_G^\delta(q, \widehat{C}) + 1$ if $p$ is a neighbor of~$q$. The same holds for $v' \in \sigma$ with $d_G^\delta(v', \widehat{C}) = k + 1$ between $u$ and $t$.
\end{proof}
}

Finally, we propose an algorithm that solves MESP. 
It finds the correct values for $p_1, \widehat{C}, L$, and $\delta \colon \widehat{C} \to \{0, \ldots, k\}$ by trying all possible combinations. For each combination, it performs the following steps.
\begin{enumerate}
    \item Find a permutation $\bm\pi$ of $L$, such that $\pi_1 = p_1$ and $\sum_{i=1}^{\|L\| - 1}d_G(\pi_i, \pi_{i+1}) = d_G(\pi_1, \pi_{\|L\|})$.
    If it does not exist, continue with the next combination.
    \item For each $\pi_i, \pi_{i+1}$, check all candidate segments in ${\mathcal{S}}(\pi_i, \pi_{i+1}).$
    \begin{enumerate}
        \item If there are any segments containing a vertex $u$ with $d_G^\delta(u, \widehat{C}) = k + 1$, then we may disregard all candidate segments which do not satisfy~$u$.
        \item After disregarding these segments, if there is only one candidate segment left, it is a necessary segment. If there is no candidate segment left, then no solution exists.
    \end{enumerate}
    \item If there is a vertex $v$ such that $d_G^\delta(v, \widehat{C}) > k + 1$, and it does not lie on a segment that we have marked as a necessary segment, then no solution exists.
    \item Construct the set $U$ of vertices $v$ that are not contained in any segment and have $d_G^\delta(v, \widehat{C}) = k + 1$. If $\|U\| > 2(\|L\| - 1)$, then no solution exists.
    \item Choose the rest of the segments from all candidate segments (except those disregarded in step 1) by solving the CSC instance, with requirements $u \in \widehat{C} \setminus L$ whose distance to the parts of $P$ selected so far is greater than $\delta(u)$, and all of~$U$.
    \item If the CSC instance has a solution, construct a path from $\bm{\pi}$ and from the chosen candidate segments. If the resulting path has eccentricity at most $k$, return it.
\end{enumerate}

Note that by \autoref{th:unique-permutation}, there is at most one such permutation $\bm\pi$ in step 1.

We arrive at the following theorem.

\begin{theorem}
\label{th:distance-to-disjoint-paths--complexity}
For a graph with distance to disjoint paths $c$, MESP can be solved in $\mathcal{O}(2^{5c}k^cc \cdot n^4)$.
\end{theorem}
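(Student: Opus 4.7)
The plan is to bound the total running time as the product of (i) the number of iterations of the outer loop, and (ii) the work done per iteration, which is dominated by solving the CSC instance via Theorem~\ref{lem:csc}.

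First I would precompute an all-pairs distance matrix in $O(n^3)$ time and obtain a modulator $C$ of size $c$ using Lemma~\ref{alg:modulator-to-disjoint-paths} in $O(4^c(n+m))$ time; both are absorbed by the claimed bound. Then I would count the outer-loop iterations. The endpoints $p_1, p_{|P|}$ are guessed in $O(n^2)$ ways and together with $C$ fix $\widehat{C}$. The two endpoints automatically satisfy $\delta=0$, whereas each of the $c$ vertices of $C$ contributes either ``in $L$'' (i.e.\ $\delta(v)=0$) or a value $\delta(v)\in\{1,\dots,k\}$, giving $(k+1)^c\le 2^c k^c$ combinations. Hence the outer loop runs $O(n^2\cdot 2^c k^c)$ times.

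Per iteration, Step~1 (finding $\bm\pi$) takes $O(c\log c)$ time by Lemma~\ref{th:unique-permutation}. Enumerating candidate segments, marking necessary ones, and assembling $U$ in Steps~2--4 can be done by scanning, for each pair $(\pi_i,\pi_{i+1})$, the neighbors of these two vertices in each component of $G\setminus\widehat{C}$; in total this costs $O(n^2)$ time and produces $O(n^2)$ candidate segments. Step~5 dominates: by Corollary~\ref{th:disjoint-paths--u-size}, the off-path set $U$ has size at most $2(|L|-1)$, while requirements from $\widehat{C}\setminus L$ add at most $c$ more, so $|\mathcal{R}|\le 2c+O(1)$; with $|\mathcal{C}|=O(n^2)$, Theorem~\ref{lem:csc} solves the CSC instance in $O(2^{4c}\cdot c\cdot n^2)$ time. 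Step~6 reconstructs the path and computes its eccentricity via the distance matrix in $O(n^2)$ time.

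Multiplying the iteration count by the per-iteration cost yields
\[
O(n^2\cdot 2^c k^c)\cdot O(2^{4c}\,c\,n^2)\;=\;O\bigl(2^{5c}k^c c\cdot n^4\bigr),
\]
which matches the claim. The main subtlety to justify is the $O(c)$ bound on $|\mathcal{R}|$: one must argue that $P$-vertices with $d_G^\delta(v,\widehat{C})>k$ never need to appear as explicit requirements, because Step~2(a) already commits, for each pair $(\pi_i,\pi_{i+1})$, to segments that satisfy any such $v$ at distance exactly $k+1$ (using Observation~\ref{th:distance-to-disjoint-paths--est-plus-1-sat-same-pi-i}), Lemma~\ref{th:distance-to-disjoint-paths--est-big-implies-plus-one} rules out larger unmarked gaps, and Corollary~\ref{th:disjoint-paths--u-size} covers the remaining off-path far vertices. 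Correctness of the algorithm, verified by Step~6, follows from the discussion preceding the theorem; only the complexity needs to be certified here.
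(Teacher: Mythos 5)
Your proposal is correct and follows essentially the same route as the paper's proof: the same $\mathcal{O}(n^2 \cdot 2^c k^c)$ count of outer-loop iterations, the same $\|\mathcal{R}\| \le 2c+2$ bound obtained from Corollary~\ref{th:disjoint-paths--u-size} plus the at most $c$ requirements from $\widehat{C}\setminus L$, and the same identification of the CSC call as the dominant per-iteration cost. The only (harmless) difference is that you bound $\|\mathcal{C}\|$ by $\mathcal{O}(n^2)$ where the paper uses $\mathcal{O}(n)$, which still yields the claimed $\mathcal{O}(2^{5c}k^c c\cdot n^4)$ total.
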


\toappendix{
\begin{proof}
The distances between all pairs of vertices can be precomputed in $\mathcal{O}(n^3)$ time.

There are at most $\mathcal{O}(n^2)$ possible combinations for the first and last vertex on the path giving $\widehat{C}$.
For each of them, there at most $\mathcal{O}(2^ck^c)$ possible values of $L$ and $\delta$.

By \autoref{th:unique-permutation}, step 1 can be implemented in $\mathcal{O}(c \log c)$ time.

In step 2, we iterate over $i \in \{1, \dots, \|\bar{L} - 1\|\}$, and each time we find a set $K$ of all such vertices $u$ that lie on some segment $\bar\sigma \in {\mathcal{S}}(\bar\pi_i, \bar\pi_{i+1})$ and have $d_G^\delta(u, \widehat{C}) = k + 1$. The set can be constructed in $\mathcal{O}(n + m)$ time.
Because of \autoref{th:distance-to-disjoint-paths--two-est-plus-one-implies-true-segment}, at most two vertices need to be stored in $K$ for each candidate segment. Thus, by \autoref{th:disjoint-paths--sat-size-outside-P}, each segment will satisfy at most four vertices from $K$ (two lying on the segment, and two outside the segment). If $\|K\| > 4$, we know right away that there is no solution for the current configuration of $(L, \pi, \delta)$. Otherwise, we construct the set $C_i$ of candidate segments which satisfy all vertices from $K$ in $\mathcal{O}(n + m)$ time by iterating over all vertices in all the candidate segments and checking at most 4  conditions for each vertex.
Thus, the complexity of step 2 is $\mathcal{O}(cn^2)$.

Steps 3 and 4 can be performed in one loop, taking $\mathcal{O}(cn)$ time.

The CSC instance in step 5 can be solved by \autoref{alg:constrained-set-cover} in $\mathcal{O}(2^{4c}cn)$ time as $\|\mathcal{C}\| = \mathcal{O}(n)$ and
\begin{align*}
\|\mathcal{R}\| \leq   & \ \|(\widehat{C} \setminus \bar{L}) \cup \bar{U}\| \\
                = & \ \|\widehat{C}\| - \|\bar{L}\| + \|\bar{U}\| \\
                \leq & \ \|\widehat{C}\| - \|\bar{L}\| + 2(\|\bar{L}\| - 1) \\
                = & \ \|\widehat{C}\| + \|\bar{L}\| - 2 \\
                \leq & \ (c + 2) + (c + 2) - 2 \\
                = & \ 2c + 2.
\end{align*}

In step 6, constructing the path and checking its eccentricity takes at most $\mathcal{O}(n)$ time.
\end{proof}
}

\subsection{Treewidth}
\label{sec:treewidth}
In this section, we show that the MESP problem is FPT with respect to the combination of treewidth and the desired eccentricity.
We present a proof which uses Courcelle's theorem.
Although this result has theoretical significance and the proof is constructive, the multiplicative constants of the resulting algorithm are too high for it to be useful in practice, even for graphs with treewidth 1.

To use Courcelle's theorem, properties of graphs are usually described using the fragment of logic called \emph{monadic second-order logic} (MSOL). The graph is described using a universe consisting of vertices and edges, unary predicates distinguishing between these two and two binary predicates $\operatorname{adj}$ and $\operatorname{inc}$, describing the adjacencies between vertices and incidencies between vertices and edges, respectively. MSOL allows quantification over individual vertices and edges and over sets of vertices and edges, membership queries, and standard logical operators.

We use the following extension of Courcelle's theorem.
    
\begin{theorem}[Courcelle \cite{Courcelle90}; Arnborg et al.~\cite{ArnborgLS91}]
Let $G = (V, E)$ be a graph, $(T, \beta)$ a tree decomposition of width $t$, and $\varphi$ an MSOL formula over the graph language.
There is an fpt-algorithm that decides for any given $G, T, \beta, \varphi$ whether $G \models \varphi$ in time $\mathcal{O}\left(f(t, \|\varphi\|\right) \cdot n)$ for some computable function~$f$.
Moreover, if $\varphi$ contains a free variable $S$, it is possible to find a set $S'$ such that $(G, S') \models \varphi(S)$, and $\|S'\|$ is minimal, in the same running time.
\end{theorem}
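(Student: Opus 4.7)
The plan is to prove the theorem by constructing, from $\varphi$, a finite bottom-up tree automaton $\mathcal{A}_\varphi$ that accepts (the labelled tree associated with) a nice tree decomposition of $G$ if and only if $G \models \varphi$, and then running $\mathcal{A}_\varphi$ on the given decomposition. First, I would convert $(T,\beta)$ into a nice tree decomposition of width at most $t$ in which every node is a leaf node (empty bag), introduce-vertex, introduce-edge, forget, or join node; this standard conversion takes linear time with $O(tn)$ resulting nodes, and the tree decomposition then builds $G$ incrementally from the leaves upward, with bags $\beta(x)$ of size at most $t+1$ serving as the ``boundary'' between the already-built subgraph $G_x$ and the rest.

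Next, I would construct $\mathcal{A}_\varphi$ by induction on the structure of $\varphi$, after placing it in prenex normal form. The state of the automaton at a node $x$ encodes the MSOL $q$-type of $G_x$ relative to $\beta(x)$ and relative to the restriction of any guessed free set variables to $\beta(x)$, where $q$ is the quantifier depth of $\varphi$. Atomic formulas ($\operatorname{adj}$, $\operatorname{inc}$, membership) are checked directly on the bag; boolean connectives are handled by product and complement constructions; and quantification $\exists X\,\psi(X)$ is handled by nondeterministically extending the partial assignment on each newly introduced vertex, then projecting out the relevant coordinate at the corresponding forget node. The transition function at each node type is derived from a Feferman--Vaught style composition lemma, which states that the $q$-type of a graph obtained by gluing two graphs along a common boundary is determined by the $q$-types of the parts together with the diagram of the boundary.

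The main obstacle is showing that the state space is finite and bounded by a computable function $f(t,\|\varphi\|)$, so that each of the $O(tn)$ transitions can be evaluated in time depending only on $t$ and $\|\varphi\|$. This is handled by the classical observation that, up to logical equivalence, there are only finitely many MSOL formulas of quantifier depth at most $q$ with free variables over a set of size $O(2^{t+1})$ (the possible assignments on a bag); a canonical representative of each equivalence class can be computed, and the composition lemma lets us precompute the transitions by enumerating all combinations of child states and possible boundary behaviour. Correctness is then a straightforward induction on the decomposition: the state at the root of $T$ determines whether $G\models\varphi$.

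For the optimisation extension, I would replace $\mathcal{A}_\varphi$ with a weighted variant in which each state at node $x$ additionally records the minimum possible value of $|S\cap V(G_x)|$ consistent with the encoded $q$-type and the current partial assignment of $S$ on $\beta(x)$. At introduce-vertex nodes we branch on whether the new vertex is guessed into $S$, adding $0$ or $1$; at forget nodes we take the minimum over the two possibilities for the forgotten vertex; at join nodes we add the two child weights and subtract the contribution of $\beta(x)\cap S$ counted twice; and at leaves the weight is $0$. The minimum weight over accepting root-states equals $|S'|$, and a witnessing $S'$ is reconstructed by a standard backward pass through the dynamic-programming table in the same asymptotic running time.
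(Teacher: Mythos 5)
This statement is Courcelle's theorem together with the Arnborg--Lagergren--Seese optimisation extension; the paper does not prove it but imports it as a black box from the cited literature, so there is no in-paper proof to compare your attempt against. Your sketch follows the classical automata-theoretic route: convert to a nice tree decomposition, build a bottom-up tree automaton whose states are rank-$q$ MSOL types of the already-processed subgraph relative to the current bag, define transitions via a Feferman--Vaught composition lemma, bound the state space by the finiteness of rank-$q$ types, and handle the minimisation claim with a weighted variant and a backward pass. This is the standard proof and the outline is sound. Two points would need tightening in a full write-up. First, the finiteness claim should read ``there are finitely many pairwise inequivalent MSOL formulas of quantifier rank at most $q$ over a fixed finite set of free variables''; what is exponential in $t$ is the number of boundary diagrams (restrictions of the guessed set variables to a bag of size $t+1$), not the number of free variables. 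Second, prenex normal form is unnecessary, and the delicate step you gloss over is that the nondeterministic projection used for $\exists X$ must be reconciled with negation by determinising (or, equivalently, by working with full type information rather than automaton runs) before complementing; this is exactly where the non-elementary dependence on $\|\varphi\|$ enters and why $f$ is only guaranteed to be computable. Neither issue breaks the approach, and for the purposes of this paper the theorem is correctly treated as a known result.
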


\begin{lemma}
    \label{th:treewidth-msol-exists}
    For any given graph $G = (V, E)$, desired eccentricity $k \in \mathbb{N}$ and two vertices $s, t \in V$, there is an MSOL formula $\varphi(S)$ of length $\mathcal{O}(k)$ such that
    a shortest $s$-$t$ path $P$ in $G$ with $\ecc_G(P) \le k$ exists if and only if there is an assignment $S' \subseteq V$ to $S$ such that $G \models \varphi(S')$ and $\|S'\| = d_G(s, t) + 1$.
    Moreover, $S' = V(P)$.
\end{lemma}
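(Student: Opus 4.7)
The plan is to write $\varphi(S)$ as the conjunction of three conditions: (i) $s$ and $t$ both belong to $S$; (ii) $s$ and $t$ lie in the same connected component of the induced subgraph $G[S]$; and (iii) every vertex of $G$ is at distance at most $k$ from $S$. Condition (ii) can be expressed by the standard MSOL idiom for $s$-$t$-connectedness inside a vertex set,
\[\forall A \subseteq V \colon \bigl((s \in A) \land (t \notin A)\bigr) \to \exists u,v \colon u \in A \cap S \land v \in S \setminus A \land \operatorname{adj}(u,v),\]
which is of constant length. The length-$\mathcal{O}(k)$ part comes from condition (iii), which I would encode by a single bounded-length walk schema,
\[\forall u \in V \, \exists v_0,\dots,v_k \colon (v_0 = u) \land (v_k \in S) \land \bigwedge_{i=0}^{k-1} \bigl(v_i = v_{i+1} \lor \operatorname{adj}(v_i, v_{i+1})\bigr),\]
where the equality disjunct allows the walk to pad itself with repeated vertices whenever $d_G(u, S) < k$. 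The total length of $\varphi(S)$ is thus $\mathcal{O}(k)$.

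For correctness, the forward direction is immediate: if $P$ is a shortest $s$-$t$ path in $G$ with $\ecc_G(P) \le k$, then $S' = V(P)$ has size $d_G(s,t)+1$, contains $s$ and $t$, is connected inside $G[S']$ via $P$, and has eccentricity at most $k$, so $G \models \varphi(S')$. For the converse, I would invoke a tightness argument: any $S'$ satisfying $\varphi$ must contain a simple $s$-$t$ path $Q$ inside $G[S']$, whose length is between $d_G(s,t)$ (because $Q$ is also an $s$-$t$ path in $G$) and $\|S'\|-1$. If $\|S'\| = d_G(s,t)+1$, both bounds coincide, forcing $Q$ to have length exactly $d_G(s,t)$ and to use all $\|S'\|$ vertices, so $V(Q) = S'$. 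Condition (iii) then yields $\ecc_G(Q) \le k$, which identifies $Q$ as a shortest $s$-$t$ path of eccentricity at most $k$.

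The main obstacle is keeping the length of $\varphi$ linear in $k$. A naive encoding of ``distance at most $k$'' as a disjunction $\bigvee_{j=0}^{k}$ over walk formulas of length $j$ would already be of quadratic length; the padding trick in condition (iii) is precisely what avoids this. Apart from that, the argument is an elementary assembly of standard MSOL idioms together with the double-counting observation that forces $G[S']$ to coincide with the vertex set of a shortest $s$-$t$ path as soon as $\|S'\|$ attains the unavoidable lower bound $d_G(s,t)+1$.
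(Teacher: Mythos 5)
Your proof is correct, but it takes a genuinely different route from the paper's. The paper characterizes $S$ by local degree constraints: besides requiring $\{s,t\} \subseteq S$ and $\ecc_G(S) \le k$, it demands that every vertex of $S \setminus \{s,t\}$ have degree exactly $2$ and that $s$ and $t$ have degree exactly $1$ within $S$, so that $G[S]$ is the disjoint union of an $s$-$t$ path and possibly some cycles; the minimization of $\|S\|$ (followed by the check $\|S'\| = d_G(s,t)+1$) is what eliminates the cycles and forces the path to be shortest. You instead combine the standard second-order $s$-$t$-connectivity idiom with the cardinality constraint: any $S'$ satisfying your formula contains a simple $s$-$t$ path on at least $d_G(s,t)+1$ vertices, so the hypothesis $\|S'\| = d_G(s,t)+1$ already pins $S'$ down as exactly the vertex set of a shortest $s$-$t$ path, and condition (iii) then gives $\ecc_G(Q) \le k$. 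Your tightness argument is clean and arguably simpler, since it dispenses with the degree gadgets entirely, and it interfaces with the algorithmic corollary in the same way (the minimum feasible $\|S'\|$ is always at least $d_G(s,t)+1$, with equality precisely when the desired path exists). A further small difference: your distance-at-most-$k$ encoding uses a lazy walk via the disjunct $v_i = v_{i+1} \lor \operatorname{adj}(v_i,v_{i+1})$, which explicitly handles the case $d_G(u,S) < k$ and any parity issues, whereas the paper's exact-length walk implicitly relies on being able to bounce between adjacent vertices of $S$. Both encodings have length $\mathcal{O}(k)$, and both proofs establish the ``moreover'' clause $S' = V(P)$.
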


\begin{proof}
    We need to find an MSOL formula with a free variable $S \subseteq V$ such that it is satisfied if and only if there exists an $s$-$t$ path on the vertices of $S$ with eccentricity at most $k$.
    The minimality of $\|S\|$ will then imply our lemma.
    Consider the conjunction of the following conditions:
    \begin{enumerate}[label=(\arabic*)]
        \item $\{s, t\} \subseteq S \subseteq V$, and
        \item $\forall u \in (S \setminus \{s, t\}): \deg_S(u) = 2$, and
        \item $\forall u \in \{s, t\}: \deg_S(u) = 1$, and
        \item $\ecc_G(S) \le k$.
    \end{enumerate}
    The above conditions describe a set $S$ which contains only vertices of degree 2 and two leaves ($s$ and $t$).
    In other words, $S$ contains an $s$-$t$ path and potentially a union of disjoint cycles.
    Minimizing $\|S\|$ clearly ensures that it will not contain any cycles and the $s$-$t$ path will be a shortest path, if it is still possible to satisfy condition (4).

    It remains to show that all the four conditions can be expressed in MSOL and their length is linear in $k$.
    We analyze the conditions one by one.
    \begin{enumerate}
        \item Condition (1) can be written as an MSOL formula of constant length: 
        \[s \in S \land t \in S \land \forall u \in~S: u \in V.\]
        \item Condition (2) can be written as: 
        \[\forall u \in S: u = s \lor u = t \lor (\deg_S(u) \ge 2 \land \deg_S(u) \le 2),\]
        where
        \begin{itemize}
            \item $\deg_S(u) \ge 2 \equiv \exists v, w \in V: v \ne w \land \operatorname{adj}(u, v) \land \operatorname{adj}(u, w)$ and
            \item $\deg_S(u) \le 2 \equiv \forall v, w, x \in V: v = w \lor v = x \lor w = x \lor \neg\operatorname{adj}(u, v) \lor \neg\operatorname{adj}(u, w) \lor \neg\operatorname{adj}(u, x)$.
        \end{itemize}
        This is an MSOL formula of constant length.
        \item Similarly, condition (3) can be written as: 
        \[\forall u \in \{s, t\}: \deg_S(u) \ge 1 \land \deg_S(u) \le 1,\]
        where
        \begin{itemize}
            \item $\deg_S(u) \ge 1 \equiv \exists v \in V: \operatorname{adj}(u, v)$ and
            \item $\deg_S(u) \le 1 \equiv \forall v, w \in V: v = w \lor \neg\operatorname{adj}(u, v) \lor \neg\operatorname{adj}(u, w)$.
        \end{itemize}
        This is an MSOL formula of constant length.
        \item Condition (4) can be written as: 
        \[\forall u \in V: \exists p_1, \dots, p_k \in V: \operatorname{adj}(u, p_1) \land \operatorname{adj}(p_1, p_2) \land \dots \land \operatorname{adj}(p_{k-1}, p_k) \land (p_k \in S).\]
        This is an MSOL formula of length $\mathcal{O}(k)$.
    \end{enumerate}
\end{proof}

\begin{corollary}
    There is an algorithm that solves MESP in $\mathcal{O}(f\left(t, \mathcal{O}(k)\right) \cdot n^3)$ time where $t$ is the treewidth of the input graph and $k$ is the desired eccentricity.
\end{corollary}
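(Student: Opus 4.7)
The plan is to combine the MSOL characterization in \autoref{th:treewidth-msol-exists} with the optimization version of Courcelle's theorem, iterating over candidate endpoints of the desired path.

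First I would preprocess the graph: compute a tree decomposition of width $\treewidth(G)$ in $f(t) \cdot n$ time, and compute all pairwise distances $d_G(\cdot,\cdot)$ (for instance via BFS from each vertex, in $\mathcal{O}(n \cdot (n+m)) \subseteq \mathcal{O}(n^3)$ time). Both fit within the target running time. Next, I would iterate over all ordered pairs $(s, t) \in V \times V$, of which there are at most $n^2$. For each such pair, I would instantiate the MSOL formula $\varphi(S)$ given by \autoref{th:treewidth-msol-exists}. Since $s$ and $t$ are specific vertices rather than symbols in the MSOL signature, I would treat them by introducing two additional free set variables $S_s, S_t$ constrained to be singletons, and conjoin $\varphi$ with statements that identify these singletons with the endpoints; equivalently, I would enrich the structure with two unary predicates marking $s$ and $t$. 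Either transformation increases the formula length by only a constant, so the length remains $\mathcal{O}(k)$.

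Then, using the optimization extension of Courcelle's theorem, I would compute in $f(t, \mathcal{O}(k)) \cdot n$ time a set $S' \subseteq V$ of minimum size satisfying $\varphi(S')$, if any exists. By \autoref{th:treewidth-msol-exists}, a shortest $s$-$t$ path of eccentricity at most $k$ exists if and only if such a minimum $S'$ exists and satisfies $\|S'\| = d_G(s, t) + 1$; moreover $S' = V(P)$ for a witnessing path. Thus I would compare $\|S'\|$ with the precomputed $d_G(s, t) + 1$, and upon equality return the path induced by $S'$ (reconstructible in $\mathcal{O}(n)$ time from the degree-2 path structure). If no pair yields an affirmative answer, I conclude that no shortest path of eccentricity at most $k$ exists.

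The overall running time is dominated by the $n^2$ invocations of Courcelle's algorithm, each costing $f(t, \mathcal{O}(k)) \cdot n$, for a total of $\mathcal{O}(f(t, \mathcal{O}(k)) \cdot n^3)$, matching the statement. There is no serious obstacle: the heavy lifting is already in \autoref{th:treewidth-msol-exists} and the cited theorem of Courcelle and Arnborg et al. The only minor point to get right is handling the endpoints $s, t$ without inflating the formula size, which as indicated above is done by a constant-overhead marking; and ensuring that the shortest-path condition is captured correctly, which follows because minimizing $\|S\|$ combined with the degree constraints of \autoref{th:treewidth-msol-exists} precludes extraneous cycles.
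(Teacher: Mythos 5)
Your proposal is correct and follows essentially the same route as the paper: evaluate the formula of \autoref{th:treewidth-msol-exists} for all $\mathcal{O}(n^2)$ endpoint pairs via the optimization version of Courcelle's theorem, accept when the minimum satisfying set has size $d_G(s,t)+1$, and reconstruct the path from it. Your extra care about encoding the constants $s,t$ with constant formula overhead is a reasonable detail the paper leaves implicit, but it does not change the argument.
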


\begin{proof}
Evaluate the formula from \autoref{th:treewidth-msol-exists} for all $\mathcal{O}(n^2)$ pairs $\{s, t\} \subseteq V$, each in $\mathcal{O}(f\left(t, \mathcal{O}(k)\right) \cdot n)$ time.
If some of these evaluations finds a set $S'$ such that $\|S'\| = d_G(s, t) + 1$, then we have the vertices of a shortest path with eccentricity $k$, and we can construct the path by \autoref{th:unique-permutation}.
Otherwise, no solution exists.
\end{proof}

\subsection{Maximum Leaf Number}
\label{sec:max-leaf-number}
\toappendix{
}

\begin{theorem}
\label{thm:max-leaf}
There is an algorithm that solves MESP in $\mathcal{O}(16^\ell \cdot n^3)$ time, where~$\ell$ is the maximum leaf number of the input graph.
\end{theorem}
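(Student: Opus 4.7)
The plan is to exploit the thin structure enforced by a bounded maximum leaf number. The crucial structural fact is that, since every spanning tree of $G$ has at most $\ell$ leaves, and since a tree with $L$ leaves has at most $L-2$ vertices of degree at least $3$, any spanning tree $T$ of $G$ satisfies $|B| \le 2\ell - 2$, where $B$ denotes the set of vertices of degree different from $2$ in $T$. Consequently, $T \setminus B$ is a disjoint union of $O(\ell)$ paths whose vertices all have degree exactly $2$ in $T$, and in any candidate MESP only the interaction with the small set $B$ needs to be guessed.

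The algorithm I would propose runs as follows. First, compute an arbitrary spanning tree $T$ of $G$ (for instance by DFS) in $O(n+m)$ time, extract $B$, and precompute the all-pairs distance matrix of $G$ in $O(n^3)$ time. Next, enumerate all $4^{|B|} \le 16^\ell$ role assignments $\rho\colon B \to \{\text{absent},\text{interior},\text{first endpoint},\text{second endpoint}\}$; any MESP induces exactly one such assignment on $B$. For each fixed $\rho$, the subset $L \subseteq B$ of vertices appearing on $P$ and the endpoints of $P$ that lie in $B$ are determined. By \autoref{th:unique-permutation}, the order in which $L$ must appear on any shortest path visiting all of $L$ starting from the chosen endpoint is uniquely determined and recoverable in $O(\ell \log \ell)$ time from the distance matrix. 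For each consecutive pair $\pi_i, \pi_{i+1}$ of this permutation, one then selects a shortest $\pi_i$--$\pi_{i+1}$ path in $G$ whose internal vertices avoid $B$; concatenating these subpaths yields a candidate MESP $P$ whose length and eccentricity can be checked in $O(n)$ time via the distance matrix.

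Two points require extra care. First, an endpoint of $P$ may lie outside $B$, on one of the $O(\ell)$ topological subpaths of $T$; this case is handled by an additional outer loop over $O(n^2)$ candidate endpoint pairs in $V \setminus B$, prepending/appending the appropriate short prefix and suffix to the constructed path. Together with the $O(n)$ per-candidate verification, this contributes the final $n^3$ factor in the running time. Second, and the main obstacle I expect, is to justify that considering only one canonical sub-path between each consecutive pair of $L$-vertices suffices: several shortest $\pi_i$--$\pi_{i+1}$ paths may exist, and we must argue that any such choice achieves the same optimal eccentricity, so that no additional enumeration is needed. Because the internal vertices of any such sub-path lie on topological paths of $T$ whose structure is entirely controlled by $B$, a short exchange argument analogous to \autoref{th:single_vertex} should show that distinct sub-path choices are interchangeable without increasing the eccentricity. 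Aggregating the $O(n^3)$ precomputation with the $O(16^\ell \cdot n^2 \cdot n)$ enumeration-and-verification yields the claimed $\mathcal{O}(16^\ell \cdot n^3)$ running time.
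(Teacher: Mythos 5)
There is a genuine gap, and it sits exactly where you flagged your ``main obstacle.'' The paper's proof rests on Bouland's lemma: a graph with maximum leaf number $\ell$ is a subdivision of a \emph{simple} graph $\widetilde{G}=(C,\widetilde{E})$ with $\|C\|<4\ell$, where $C$ is a set of branch vertices of $G$ \emph{itself}. This gives two things at once: (i) a set of size $O(\ell)$ whose removal leaves only degree-2 vertices of $G$, and (ii) \emph{uniqueness} of the connecting piece --- between any two vertices of $C$ there is at most one path of $G$ with no internal vertex in $C$, because two such internally disjoint paths would force a multi-edge in $\widetilde{G}$. Hence each triple $(s,t,L)$ with $L\subseteq C$ determines at most one shortest path (\autoref{th:max-leaf-number--p-from-pi}), there are at most $2^{4\ell}n^2$ shortest paths in total, and checking each in $O(n)$ time gives the bound. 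There is nothing to choose between consecutive guessed vertices, so no interchangeability argument is needed.

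Your set $B$ does not have property (i): you take the vertices of degree $\neq 2$ in an \emph{arbitrary spanning tree} $T$, but a vertex can have degree $2$ in $T$ and degree $\geq 3$ in $G$ (take a cycle $v_1\dots v_n$ plus the chord $\{v_1,v_{n/2}\}$; a DFS tree is a Hamiltonian path in which $v_{n/2}$ has degree $2$, yet it is a branch vertex of $G$). Consequently $G\setminus B$ need not be a union of paths, the sub-paths between consecutive guessed vertices avoiding $B$ are neither unique nor controlled by $B$, and they are genuinely \emph{not} interchangeable: distinct shortest $\pi_i$--$\pi_{i+1}$ sub-paths can lie far apart in $G$ and serve different far-away vertices, so swapping one for another can strictly increase the eccentricity. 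An exchange argument in the style of \autoref{th:single_vertex} will not close this, because that lemma exploits twin-like neighborhoods that are absent here. The fix is to replace $B$ by the branch set $C$ of $G$ given by Bouland's lemma (accepting the weaker bound $\|C\|<4\ell$, which is what makes the base $16^\ell=2^{4\ell}$ appear) and to prove uniqueness of the connecting sub-paths from the subdivision structure rather than interchangeability.
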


\toappendix{

We use the following lemma to show that the problem can be solved by brute-force.

\begin{lemma}[Bouland \cite{bouland_2011}]
\label{th:max-leaf-number-subdivision}
If the maximum leaf number of a graph $G$ is equal to $\ell$, then $G$ is a subdivision of a graph $\widetilde{G} = (C, \widetilde{E})$, such that $C \subseteq V$ and $\|C\| < 4 \ell$.
\end{lemma}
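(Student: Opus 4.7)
The plan is to combine Bouland's structural lemma (Lemma~\ref{th:max-leaf-number-subdivision}) with an exhaustive enumeration of which ``core'' vertices lie on the desired path. By that lemma, $G$ is a subdivision of a graph $\widetilde{G}$ on a core vertex set $C\subseteq V$ of size $|C|<4\ell$, so every vertex of $V\setminus C$ has degree exactly~$2$ in $G$ and each edge of $\widetilde{G}$ is realized by an internally degree-$2$ branch in $G$. The set $C$ can be identified in $O(n+m)$ time (it consists of all vertices of degree $\neq 2$, plus, in the degenerate case of a pure cycle component, any single vertex of it), after which we precompute the all-pairs distance matrix of $G$ in $O(n^3)$~time.

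First I would enumerate every ordered pair of potential endpoints $(s,t)\in V\times V$ ($O(n^2)$ choices) together with every subset $S\subseteq C$ intended to be the set of core vertices lying on the path ($2^{|C|}<2^{4\ell}=16^\ell$ choices). For each such triple $(s,t,S)$, I apply Lemma~\ref{th:unique-permutation} to the set $S\cup\{s,t\}$ rooted at $s$ to obtain (in $O(|S|\log|S|)$ time from the distance matrix) the unique permutation $\bm\pi$ in which these vertices must appear on a shortest $s$-$t$~path, discarding the triple immediately if no such permutation exists or if $d_G(s,t)$ differs from the telescoped sum of distances along $\bm\pi$. Then, between each consecutive pair of vertices in the sequence $s,\pi_1,\ldots,\pi_{|S|},t$, I splice in a shortest path through the corresponding ``branch'' of the subdivision; because the internal vertices of every branch have degree~$2$, such a spliced path is essentially forced up to the choice of which of the (at most $O(\ell^2)$) branches of $\widetilde G$ realizes each leg. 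Finally, I check in $O(n)$ time using the precomputed distance matrix whether the resulting path $P$ is genuinely a shortest $s$-$t$ path and satisfies $\ecc_G(P)\le k$.

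Correctness follows because any witness shortest path $P^\ast$ with $\ecc_G(P^\ast)\le k$ is recovered when the enumeration guesses its endpoints and its intersection $S^\ast=V(P^\ast)\cap C$ with the core: the order of $S^\ast$ on $P^\ast$ is uniquely forced by Lemma~\ref{th:unique-permutation}, and each maximal subpath of $P^\ast$ between two consecutive members of $\{s^\ast\}\cup S^\ast\cup\{t^\ast\}$ is itself a shortest path through a unique branch of the subdivision, so one of the branches tried in the splicing step yields a path with exactly the same vertex set as $P^\ast$. Aggregating the costs gives $O(n^3)$ for preprocessing plus $O(n^2)\cdot O(16^\ell)\cdot O(n)$ for the enumeration and verification, which is $O(16^\ell\cdot n^3)$ as claimed.

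The main obstacle I anticipate is the bookkeeping around the endpoints $s,t$ when they fall strictly inside a subdivided branch: one must ensure that the unique-permutation machinery of Lemma~\ref{th:unique-permutation} still applies cleanly (handled by including $s$ and $t$ into the set passed to the lemma), and that when the core intersection $S^\ast$ is empty the enumeration still covers the case where $P^\ast$ lies entirely inside a single branch (handled by the $O(n^2)$ endpoint loop, since fixing both endpoints pins down at most one shortest path of length $d_G(s,t)$ in such a branch). All these cases are uniformly absorbed by the $O(n^2\cdot 16^\ell\cdot n)$ bound.
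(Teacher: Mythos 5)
Your proposal does not prove the statement it was asked to prove. The statement in question is the structural lemma itself: that any graph $G$ with maximum leaf number $\ell$ is a subdivision of a graph $\widetilde{G} = (C, \widetilde{E})$ with $C \subseteq V$ and $\|C\| < 4\ell$. Your very first sentence invokes ``Bouland's structural lemma (Lemma~\ref{th:max-leaf-number-subdivision})'' as a given, and then goes on to prove a different result --- essentially Theorem~\ref{thm:max-leaf}, the $\mathcal{O}(16^\ell \cdot n^3)$ algorithm for MESP parameterized by maximum leaf number. Relative to the assigned statement this is circular: you assume exactly what you were supposed to establish, and the enumeration-plus-splicing argument, the appeal to Lemma~\ref{th:unique-permutation}, and the running-time accounting are all irrelevant to the claim about subdivisions. (For reference, the paper itself does not reprove this lemma either; it imports it from Bouland's thesis and uses it exactly as you do, as a black box inside the proof of Theorem~\ref{thm:max-leaf}.)

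A proof of the actual statement would have to be a purely structural argument about degrees and spanning trees, with no algorithmic content: one must relate the number of vertices of degree at least $3$ in $G$ (the branch vertices that survive when degree-$2$ vertices are suppressed to form $\widetilde{G}$) to the number of leaves of a maximum-leaf spanning tree, and show this count, together with any further vertices one must keep in $C$ (e.g.\ to handle multi-edges or loops arising from suppression, or a cycle component), stays below $4\ell$. None of that appears in your write-up. If the intent was to prove Theorem~\ref{thm:max-leaf}, the argument you give is broadly in line with the paper's (enumerate endpoints and the subset of $C$ on the path, use Lemma~\ref{th:unique-permutation} for uniqueness of the order, and use the subdivision structure to argue each leg between consecutive core vertices is forced), but that is not the task here.
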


Our algorithm will in fact check all the shortest paths in $G$ to find the one with the smallest eccentricity.
It remains to show an upper bound on the number of shortest paths in $G$.

\begin{lemma}
\label{th:max-leaf-number--p-from-pi}
Let $\{s, t\} \subseteq V$ and $L \subseteq C$.
Then, there is at most one shortest $s$-$t$ path $P$ in $G$ such that $P \cap C = L$.
\end{lemma}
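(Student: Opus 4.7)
The plan is to exploit the structural fact that $G$ is a subdivision of the small graph $\widetilde G = (C,\widetilde E)$ guaranteed by \autoref{th:max-leaf-number-subdivision}: every vertex outside $C$ has degree exactly two in $G$ and lies on a unique subdivided edge of $\widetilde G$. I would start by applying \autoref{th:unique-permutation} to the set $L \cup \{s,t\}$ with starting vertex $s$: any shortest $s$-$t$ path $P$ with $P\cap C = L$ starts at $s$, ends at $t$, and contains exactly the vertices of $L$ from $C$, so the order $\bm{\pi} = (\pi_1,\dots,\pi_{\|L\|})$ in which the vertices of $L$ appear on $P$ is uniquely determined.

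It then suffices to show that each maximal subpath of $P$ between two consecutive landmarks in the sequence $s,\pi_1,\dots,\pi_{\|L\|},t$ is itself uniquely determined. For an internal segment between $\pi_i$ and $\pi_{i+1}$, the requirement $P\cap C = L$ forbids any intermediate vertex from lying in $C$, and since every non-$C$ vertex has degree exactly two in $G$, the segment is forced to traverse, from end to end, a single subdivided edge of $\widetilde E$ joining $\pi_i$ and $\pi_{i+1}$. As $P$ is a shortest path, this subdivided edge must realize $d_G(\pi_i,\pi_{i+1})$, which pins the segment down. An analogous argument handles the initial segment from $s$ to $\pi_1$: if $s\in C$ then $s=\pi_1$ and there is nothing to do, otherwise $s$ is an internal vertex of a unique subdivided edge, and the segment consists of the unique portion of that edge running from $s$ to the core endpoint equal to $\pi_1$ (whose identity is already fixed by $\bm{\pi}$). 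The final segment ending at $t$ is treated symmetrically.

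The main subtlety I foresee is the possibility that $\widetilde G$ is a multigraph containing two parallel subdivided edges of equal length between some consecutive $\pi_i,\pi_{i+1}$: this could a priori yield two distinct shortest $s$-$t$ paths with the same intersection with $C$. I would handle this by fixing, once and for all, a canonical representative among each family of parallel equal-length subdivided edges, so that exactly one shortest path is associated with every admissible $L$; since any such canonical choice is equally good for the enumeration driving \autoref{thm:max-leaf}, the bound of at most one path per $L$ holds with this harmless convention.
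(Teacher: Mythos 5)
Your proof follows essentially the same route as the paper's: the unique permutation from \autoref{th:unique-permutation} fixes the order of $L$ on any such path, and the degree-two structure of the subdivision vertices forces each segment between consecutive landmarks (including the end segments involving $s$ and $t$) to run along a single subdivided edge, which is unique because the pattern graph $\widetilde{G}$ is simple. The one place you diverge is your final paragraph: the paper dismisses the parallel-edge scenario outright by invoking that $\widetilde{G}$ is a simple graph, whereas your proposed ``canonical representative'' convention would not actually prove the lemma as stated---if two equal-length parallel subdivided edges did exist, there really would be two distinct shortest $s$-$t$ paths with the same intersection with $C$, and declaring one of them canonical changes the claim rather than establishing it (and the enumeration in the algorithm walks over all shortest paths, not over canonical ones, so the counting bound would also need the honest statement). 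Since $\widetilde{G}$ has no parallel edges this worry is moot, but the convention should be replaced by the observation that between any two vertices of $C$ there is at most one path in $G$ with no internal vertex in $C$.
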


\begin{proof}
By \autoref{th:unique-permutation} there is at most one permutation $\bm\pi$ of the vertices from $L$ which corresponds to the order on some shortest path starting in $s$.

Between each pair of vertices in $C$ there is at most one path in $G$ not containing internal vertices from $C$, since $G$ is a subdivision of $\widetilde{G}$ and $\widetilde{G}$ is a simple graph.
By the same argument, there is also at most one path from $s$ to $\pi_1$, and at most one path from $\pi_{\|L\|}$ to $t$.
\end{proof}

\begin{corollary}
There are at most $2^{4\ell} \cdot n^2$ distinct shortest paths in $G$.
\end{corollary}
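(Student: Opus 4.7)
The plan is to solve MESP by enumerating all shortest paths in $G$, exploiting the fact that graphs with small maximum leaf number are subdivisions of very small graphs. The starting point is a classical observation of Bouland: any graph with maximum leaf number $\ell$ is a subdivision of a simple graph $\widetilde{G}$ whose vertex set $C \subseteq V$ satisfies $\|C\| < 4\ell$; informally, $C$ carries all the branching of $G$, while $V \setminus C$ consists of vertices lying on degree-two subdivided edges. I would first precompute the all-pairs distance matrix of $G$ in $\mathcal{O}(n^3)$ time, which supports all later distance and eccentricity queries in $\mathcal{O}(n)$ time each.

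The key structural step is to show that a shortest $s$-$t$-path $P$ in $G$ is uniquely determined by the triple $(s, t, L)$ with $L = V(P) \cap C$. Indeed, given $s$ and $L$, \autoref{th:unique-permutation} pins down the unique order $\bm\pi = (\pi_1, \dots, \pi_{\|L\|})$ in which the vertices of $L$ must appear along a shortest path starting at $s$. Between any two consecutive vertices of $\bm\pi$ (and also from $s$ to $\pi_1$ and from $\pi_{\|L\|}$ to $t$), the path uses only vertices from $V \setminus C$; since $G$ is a subdivision of the simple graph $\widetilde{G}$, each such $C$-internally-free subpath is uniquely determined by its endpoints. Consequently there are at most $\mathcal{O}(n^2 \cdot 2^{4\ell}) = \mathcal{O}(16^\ell \cdot n^2)$ candidate shortest paths to examine.

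I would then iterate over all triples $(s, t, L)$ and, for each, attempt to reconstruct the candidate path in $\mathcal{O}(n)$ time: sort $L$ by distance from $s$, check the telescoping distance identity from \autoref{th:unique-permutation} that certifies the shortest-path property, and splice in the unique subdivision segments. Computing $\ecc_G(P)$ against the distance matrix takes another $\mathcal{O}(n)$, and the first candidate with $\ecc_G(P) \le k$ is returned; if none exists, the instance is infeasible. The total running time is $\mathcal{O}(n^3) + \mathcal{O}(16^\ell \cdot n^2) \cdot \mathcal{O}(n) = \mathcal{O}(16^\ell \cdot n^3)$, as required. The principal delicacy is establishing the uniqueness of subdivision-avoiding segments, which is immediate from $\widetilde{G}$ being simple; the rest of the argument is essentially bookkeeping.
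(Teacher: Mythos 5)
Your argument is correct and matches the paper's: you bound the number of shortest paths by showing each is determined by the triple $(s,t,L)$ with $L = V(P)\cap C$, using the unique permutation from \autoref{th:unique-permutation} together with the uniqueness of the $C$-internally-free segments (since $G$ is a subdivision of the simple graph $\widetilde{G}$ with $\|C\| < 4\ell$), which is exactly \autoref{th:max-leaf-number--p-from-pi} combined with the $2^{4\ell}\cdot n^2$ count. The additional algorithmic details you give go beyond the corollary itself but agree with the paper's proof of \autoref{thm:max-leaf}.
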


All the shortest paths in $G$ may be found exactly the same way as in \autoref{sec:modular-width}.
Again, with a precomputed distance matrix, finding each shortest path will take at most $\mathcal{O}(n)$ time, as well as checking its eccentricity.
This finishes the proof of \autoref{thm:max-leaf}.
}

\section{Future Directions}
\label{chap:conclusion}
We have shown that MESP is fixed-parameter tractable with respect to several structural parameters.
This partially answers an open question of Dragan and Leitert~\cite{DraganL16} on classes where the problem is polynomial time solvable, as this is the case whenever we limit ourselves to a class where one of the studied parameters is a constant.

The natural next steps in the research of parameterized complexity of MESP would be to investigate the existence of fpt-algorithms with respect to the distance to disjoint paths alone, and with respect to treewidth alone.

\section*{Acknowledgements}
The authors would like to express their thanks to Dr. Dušan Knop for fruitful discussions concerning the problem that in particular led to the discovery of the treewidth algorithm, and to Tung Anh Vu for naming the \textsc{Constrained Set Cover} problem.

The work of Martin Kučera was supported by the Student Summer Research Program 2020 of FIT CTU in Prague, and by grant SGS20/212/OHK3/3T/18.

Ondřej Suchý acknowledges the support of the OP VVV MEYS funded project CZ.02.1.01/0.0/0.0/16\_019/0000765 ``Research Center for Informatics''.

\section*{Declaration of Competing Interest}
The authors declare that they have no known competing financial interests or personal relationships that could have appeared to influence the work reported in this paper.

\bibliography{ref}



\end{document}